\newtheorem{theorem}{Theorem}
\newtheorem{assumption}{Assumption}
\newtheorem{condition}[theorem]{Condition}
\newtheorem{conjecture}[theorem]{Conjecture}
\newtheorem{corollary}[theorem]{Corollary}
\newtheorem{definition}[theorem]{Definition}
\newtheorem{example}[theorem]{Example}
\newtheorem{lemma}[theorem]{Lemma}
\newtheorem{proposition}[theorem]{Proposition}
\newtheorem{remark}[theorem]{Remark}
\newcommand{\ubar}[1]{\text{\b{$#1$}}}
\title{Perpetual American options with asset-dependent discounting}
\author{Jonas Al-Hadad}
\address{Faculty of Pure and Applied Mathematics\\
Wrocław University of Science and Technology\\
ul. Wyb. Wyspiańskiego 27, 50-370 Wrocław \\
Poland}
\email{jonas.al-hadad@pwr.edu.pl}
\author{Zbigniew Palmowski}
\address{Faculty of Pure and Applied Mathematics\\
Wrocław University of Science and Technology\\
ul. Wyb. Wyspiańskiego 27, 50-370 Wrocław \\ Poland}
\email{zbigniew.palmowski@pwr.edu.pl}
\thanks{Jonas Al-Hadad and Zbigniew Palmowski have been partially supported by the National Science Centre under the grant 2016/23/B/HS4/00566.}
\date{\today}
\subjclass[2010]{Primary: 60G40; Secondary: 60J60; 91B28} %
\keywords{}
\begin{document}

\begin{abstract}
In this paper we consider the following optimal stopping problem
$$V^{\omega}_{\text{\rm A}}(s) = \sup_{\tau\in\mathcal{T}}
\mathbb{E}_{s}[e^{-\int_0^\tau \omega(S_w) dw} g(S_\tau)],$$
where the process $S_t$ is a jump-diffusion process, $\mathcal{T}$ is a family of stopping times while $g$ and $\omega$ are fixed payoff function and discount function, respectively.
In a financial market context, if $g(s)=(K-s)^+$ or $g(s)=(s-K)^+$ and $\mathbb{E}$ is the expectation taken with respect to a martingale measure,
$V^{\omega}_{\text{\rm A}}(s)$ describes the price of a perpetual American option with a discount rate depending on the value of the asset process $S_t$.
If $\omega$ is a constant, the above problem produces the standard case of pricing perpetual American options.
In the first part of this paper we find sufficient conditions for the convexity of the value function $V^{\omega}_{\text{\rm A}}(s)$.
This allows us to determine the stopping region as a certain interval and hence we are able to identify the form of $V^{\omega}_{\text{\rm A}}(s)$.
We also prove a put-call symmetry for American options with asset-dependent discounting.
In the case when $S_t$ is a geometric L\'evy process we give exact expressions using the so-called omega scale functions introduced in \cite{LiBo}. We prove that the analysed value function satisfies the HJB equation and we give sufficient conditions for the smooth fit property as well.
Finally, we present a few examples for which we obtain the analytical form of the value function $V^{\omega}_{\text{\rm A}}(s)$.

\vspace{3mm}

\noindent {\sc Keywords.} American option $\star$ L\'evy process $\star$ diffusion $\star$ Black-Scholes market $\star$ optimal stopping problem $\star$ convexity

\end{abstract}

\maketitle

\pagestyle{myheadings} \markboth{\sc J.\ Al-Hadad --- Z.\ Palmowski
} {\sc Perpetual American options with asset-dependent discounting}

\vspace{1.8cm}

\tableofcontents

\newpage

\section{Introduction}
In this paper
the uncertainty associated with the stock price $S_t$ is described by a jump-diffusion process
defined on a complete filtered risk-neutral probability space
$(\Omega, \mathcal{F}, \{\mathcal{F}_t\}_{t\geq 0}, \mathbb{P})$, where
$\mathcal{F}_t$ is a natural filtration of $S_t$ satisfying the usual conditions and $\mathbb{P}$ is a risk-neutral measure under which the discounted (with respect to a risk-free interest rate) asset price process $S_t$ is a local martingale.
We point out that, as noted in  \cite{cont}, introducing jumps into the model, implies lost of completeness of the market
which results in the lack of uniqueness of the equivalent martingale measure.
Our main goal is the analysis of the following optimal stopping problem
\begin{equation}\label{mainProblem}
V^{\omega}_{\text{\rm A}}(s) := \sup_{\tau\in\mathcal{T}} \mathbb{E}_{s}\left[e^{-\int_0^\tau \omega(S_w) dw} g(S_\tau)\right],
\end{equation}
where $\mathcal{T}$ is a family of $\mathcal{F}_t$-stopping times while $g$ and $\omega$ are fixed payoff function and discount function, respectively.
Above $\mathbb{E}_s$ denotes the expectation with respect to $\mathbb{P}$ when $S_0=s$.
We assume that the function $g$ is convex.
We allow in this paper for $\omega$ to have negative values as well.
In the case when $g(s)=(K-s)^+$ or $g(s)=(s-K)^+$ and $\mathbb{P}$ is a {\bf martingale measure},
this function can be interpreted as the value function of a {\bf perpetual American option with asset-dependent discounting
\footnote{Throughout the paper, we use the terms \textit{asset-dependent discounting} and \textit{functional discounting} interchangeably.}} $\omega$ and payoff function $g$.
In the case of the general theory of stochastic processes, multiplying by the discount factor $e^{-\int_0^\tau \omega(S_w) dw}$
corresponds to killing of a generator of $S_t$ by potential $\omega$.

This problem extends the classical theory of option pricing, where the deterministic discount rate is considered, that is if $\omega(s)=r$, then
we obtain the standard form
\begin{equation*}\label{standardProblem}
V_{\text{\rm A}}(s) := \sup_{\tau\in\mathcal{T}}\mathbb{E}_{s}\left[e^{-r\tau} g(S_\tau)\right]
\end{equation*}
of the perpetual American option's value function with constant discount rate $r$.

The main objective of this paper is to {\bf find a closed expression of \eqref{mainProblem} and identify the optimal stopping rule $\tau^*$} for which the supremum is attained.
To do this, we start from proving in Theorem \ref{mainTheoremAmericanOptionConvexity} an {\bf inheritance of convexity property} from the payoff function to the value function.
This corresponds to preserving the convexity by the solution to a certain obstacle problem.

Using this observation and the classical optimal stopping theory presented e.g. in \cite{peskir} one can identify the {\bf optimal stopping region as an interval} $[l^*,u^*]$, that is, $\tau^*=\inf\{t\geq 0: S_t\in [l^*,u^*]\}$.
Hence, in general, one can obtain in this case a {\it double continuation region}.

Later we focus on the case when $S_t$ is a geometric spectrally negative L\'evy process, that is,
$$S_t:=e^{X_t}$$ for a spectrally
negative L\'{e}vy process $X_t$.
In this case, using the fluctuation theory of L\'evy processes, {\bf we identify the value function \eqref{mainProblem} in terms of the omega scale functions} introduced in \cite{LiBo}.

For optimal stopping problem \eqref{mainProblem} {\bf we give sufficient conditions under which
we can formalise the classical approach} here as well.
In particular, in Theorem \ref{smoothTheorem}
we prove that if the value function $V^{\omega}_{\text{\rm A}}(s)$ is smooth enough then
it is the {\bf unique solution to a certain Hamiltonian-Jacobi-Bellman (HJB) system}.
Moreover, in the case of geometric L\'evy process of the asset price $S_t$, we prove that the regularity of $1$ for $(0, 1)$ and $(1,+\infty$)
gives the {\bf smooth fit property at the ends of the stopping region}.
We want to underline here that proving the above mentioned regularity of the value function
in the case of the jump-diffusion processes (which allows to
formulate the HJB equation) in general is very difficult task (see \cite{peskirTiziano} for some deep results
related with it). Nevertheless, it is possible in our case thanks
to Theorem \ref{smoothTheorem} and Remark \ref{smoothvalue}. Having the convexity of the value function proved and hence having the value
function defined in terms of the omega scale functions, {\bf we can prove the appropriate smoothness condition
using the fluctuation theory of L\'evy processes.
Only then one can apply the HJB equation.}

Further, even solving the HJB equation does not give in the straightforward way the form of the
the stopping region (except, of course, the fact that it is the set where the value function equals
the payoff function). This is the reason why we do not follow this path, but rely on our approach allowing us to define precisely the stopping region.

In Theorem \ref{putCallSymmetry} we show the put-call symmetry which holds in our setting as well.

These theoretical results allow us to find the {\bf price of the perpetual American option with asset-dependent discounting for some particular cases.}
We take for example a put option, that is $g(s)=(K-s)^{+}$, and
a geometric Brownian motion for the asset price $S_t$. We model $S_t$ also by the geometric L\'evy process with exponentially distributed
downward jumps.
We analyse various discount functions $\omega$.
In Section \ref{sec:examples} we provide a few examples for which we obtain the analytical form of the value function.

The discount rate changing in time or a random discount rate
are widely used in pricing derivatives in financial markets. They have proved to be
valuable and flexible tools to identify the value of various options.
Usually, either the interest rate is independent from the asset price or this dependence is introduced via
taking a correlation between gaussian components of these two processes. Our aim is completely different.
We want to {\bf understand} an extreme case when we have {\bf strong, functional dependence between the interest rate and the asset price.}
In particular, we take a close look at the American put option with the {\bf discount function $\omega$ having the opposite monotonicity to the payoff function}. At first sight, such a case seems to be counter-intuitive, because, for the put option, if the asset price is in higher region one can expect that the interest rate will be lower and the opposite effect one expects for smaller range of asset's prices. This dependence somehow balances the discounting function with the payoff function.
However, we can think of an investor who have a strong confidence in the movement of the asset price and wishes to make an extra profit when he is right and suffers a greater loss when he is wrong. This concept resembles an idea that stands behind barrier options, i.e. if an investor believes that it is unlikely that the asset price will hit a given level, he can add a knock-out provision with the barrier set at the support level, so he can reduce the price of an option. By including the barrier provision, he can eliminate paying for those scenarios he feels are unlikely.
In our approach, we work in two ways by reducing the premium thanks to incidents being improbable from the investor's perspective and increasing it for scenarios that are more likely to happen.
Such a description of the analysed option adequately describes a financial instrument tailored to the risky investor.
Let us add that for example up-and-out put options analysed by \cite{Linetsky}
is a particular case of our option.

One can look at optimisation problem \eqref{mainProblem}
from a wider perspective though. The killing by potential $\omega$ has been known widely in physics and other applied sciences.
Then \eqref{mainProblem} can be seen as a certain functional describing gain or energy and the goal is to optimise it by choosing
some random stopping time. We focus here on financial applications only and therefore {\bf throughout this paper
the price is calculated under the martingale measure.}
It means more formally that there exists a risk-free interest rate $r$ such that the discounted price process is a local martingale under $\mathbb{P}$.
Note that the discount function $\omega$ may equal the risk-free interest rate $r$ but in our case usually is different.

For example, in the case of a {\bf gold loan}
(see \cite{CS} for the survey related with this financial instrument)
a borrower receives at time $0$ (the date of contract inception) a loan amount $K>0$ using
one mass unit (one troy ounce, say) of gold as collateral, which must be physically delivered to a lender.
This amount grows at the functional borrowing rate
given in the contract that can depend on the gold spot price $\bar{S}_t$.
When paying back the
loan, the borrower can redeem the gold at any time and then the contract is terminated.
Of course, the dynamic of $\bar{S}_t$ under the risk-neutral measure $\mathbb{P}$ is such that
the discounted price $e^{-rt}\bar{S}_t$ is a martingale, that is
$\mathbb{E} \bar{S}_t = e^{rt}\mathbb{E} \bar{S}_0$.
Assume that the costs of storing
equals the borrowing rate plus some fixed costs $c>0$ per unit of time
and the borrowing rate is a function $\bar{\omega}$ of the gold spot price $\bar{S}_te^{ct}$
discounted by this fixed costs. Then
the value of the contract, with infinite maturity date,  at time $0$ equals
\begin{equation*}\label{pricegold}
\sup_{\tau \in \mathcal{T}} \mathbb{E}\left[\left.e^{-r\tau}\left(\bar{S}_\tau e^{\int_0^\tau \bar{\omega} (\bar{S}_we^{cw})dw +c\tau}-Ke^{\int_0^\tau \bar{\omega} (\bar{S}_we^{cw})dw}\right)^+\right|\ \bar{S}_0=s\right]=
\sup_{\tau \in \mathcal{T}} \mathbb{E}_s\left[e^{-\int_0^\tau \omega (S_w)dw}\left(S_\tau-K\right)^+\right]\end{equation*}
where $S_t=\bar{S}_te^{ct}$, $\omega (S_t)=r-\bar{\omega}(S_t)$ and
$\mathbb{E}_s S_t= e^{ct}\mathbb{E}_s \bar{S}_t=e^{(r+c)t}\mathbb{E}S_0$.
%
%

{\bf Our research methodology is based on combining the theory of partial differential equations with the fluctuation theory of L\'evy processes.}

To prove the convexity we start from proving in Theorem \ref{TheoremAboutConvexityOfValueFunctionDiffusionWithJumps}
the convexity of
\begin{equation}\label{valueFunctionOmegaEuropeanOption}
V^{\omega}_{\text{\rm E}}(s, t) := \mathbb{E}_{s, t}\left[e^{-\int_{t}^{T}\omega(S_w) d w} g(S_T)\right]
\end{equation}
for fixed time horizon $T$, where $\mathbb{E}_{s,t}$ is the expectation $\mathbb{E}$ with respect to $\mathbb{P}$ when
$S_t=s$.
In the proof we follow the idea given by Ekstr\"{o}m and Tysk in \cite{propertiesOfOptionPrices}.
Namely, the value function $V^{\omega}_{\text{\rm E}}(s,t)$ given in \eqref{valueFunctionOmegaEuropeanOption} can be presented as the unique viscosity solution to a certain
Cauchy problem for some second-order operator related to the generator of the process $S_t$.
In fact, applying similar arguments like in \cite[Proposition 5.3]{pham98} and \cite[Lemma 3.1]{propertiesOfOptionPrices},
one can show that, under some additional assumptions, this solution can be treated as the classical one.
Then we can formulate the sufficient locally convexity preserving conditions for
the infinitesimal preservation of convexity at some point. This characterisation is given in
terms of a differential inequality on the coefficients of the considered operator.
It also allows to prove the convexity of $V^{\omega}_{\text{\rm E}}(s, t)$.
Then, in Theorem \ref{mainTheoremAmericanOptionConvexity} and Lemma \ref{TheoremAboutConvexityOfBermudan} we apply the dynamic programming principle (see \cite{propertiesOfAmericanOption})
in order to generalise the convexity property of $V^{\omega}_{\text{\rm E}}(s, t)$ to the value function $V^{\omega}_{\text{\rm A}}(s)$.

Later we focus on the American put option with the value function
\begin{equation*}\label{valueFunctionOmegaAmericanPutOption1}
V^{\omega}_{\text{\rm A}^{\text{\rm Put}}}(s) := \sup_{\tau\in\mathcal{T}}\mathbb{E}_{s}\left[e^{-\int_{0}^{\tau}\omega(S_w) dw} (K-S_{\tau})^{+}\right],
\end{equation*}
where the payoff function $g(s)=(K-s)^+$ for some strike price $K>0$.
Using the convexity property mentioned above we can conclude that the optimal stopping rule is defined as the first entrance of the process $S_t$ to the interval $[l,u]$, that is,
\begin{equation}\label{OptimalTau}
\tau_{l,u}:=\inf\{t\geq 0: S_t\in [l,u]\}.\end{equation}
In the next step, one has to identify
\begin{equation}\label{valueFunctionOmegaAmericanPutOption2}
v^{\omega}_{\text{\rm A}^{\text{\rm Put}}}(s, l, u):=\mathbb{E}_{s}\left[e^{-\int_0^{\tau_{l,u}} \omega(S_w) dw} (K-S_{\tau_{l,u}})^{+}\right]\end{equation}
and take maximum over levels $l$ and $u$ to identify the optimal stopping rule $\tau^*$ and to find the value function $V^{\omega}_{\text{\rm A}^{\text{\rm Put}}}(s)$.
This is done for the geometric spectrally negative L\'evy process $S_t=e^{X_t}$ where $X_t$ is a spectrally negative L\'evy process
starting at $X_0=\log S_0=\log s$.
We recall that spectrally negative L\'evy processes do not have positive jumps.
Hence, in particular, our analysis could be applied for the Black-Scholes market where $X_t$
is a Brownian motion with a drift.
To execute this plan we express $v^{\omega}_{\text{\rm A}^{\text{\rm Put}}}(s, l, u)$
in terms of the laws of the first passage times
and then we use the fluctuation theory developed in \cite{LiBo}.
In the whole analysis the use of the change of measure technique
developed in \cite{exponentialMartingale} is crucial as well.

Optimal levels $l^*$ and $u^*$ of the stopping region $[l^*,u^*]$ and
the price $V^{\omega}_{\text{\rm A}^{\text{\rm Put}}}(s)$ of the American put option could be
found by application of the appropriate HJB equation.
We prove this HJB equation and the smooth fit condition relying
on the classical approach of \cite{Lamberton} and \cite{peskir}.

Finally, to find the price of the American call option {\bf we
prove the put-call symmetry} in our set-up.
The proof is based on the exponential change of measure introduced in
\cite{exponentialMartingale}.

We analyse in detail the {\bf Black-Scholes model} and the case when {\bf a logarithm of the asset price
is a geometric linear drift minus compound Poisson process with exponentially distributed jumps} and various discount functions $\omega$.
The first example shows changes of the American options' prices in a gaussian and continuous market while the latter is to model the market including downward shocks in the assets' behaviour.
In this paper we present some specific examples for these two cases.

Our paper seems to be the {\bf first one analysing the optimal problem of the form \eqref{mainProblem} in this generality for jump-diffusion processes.}
For the classical diffusion processes Lamberton in \cite{Lambertonreward} proved that the value function in \eqref{mainProblem} is
continuous and can be characterised as the unique solution to a variational inequality in the sense of
distributions. Another crucial paper for our considerations is \cite{BeibelL}
which introduced {\bf discounting via a positive continuous additive
functional} of the process $S_t$ and used  the approach of It\^{o}
and McKean \cite[Sec. 4.6]{ItoMcKean}
to characterise the value function.
Note that $t\rightarrow \int_0^t \omega(S_w)dw$ is indeed additive functional.
A similar problem was also considered in \cite{Long}.

If $\omega(s)=(\log s -\log K)^+$ for a strike $K$
then $\int_0^t\omega(S_w)dw=\int_0^t (X_w-\log K)^+dw$ which equals the area under
the trajectory of $(X_t-\log K)^+$. Therefore, in this special case we can talk about the so-called {\bf area options}; see \cite{Davydov} for details.

Another interesting paper of Rodosthenous and Zhang \cite{Rodos} concerns the optimal stopping of an American call option in a random
time-horizon under a geometric spectrally negative L\'evy model. The random time-horizon is modeled by an {\bf omega default clock}
which is in their case the first time when the {\bf occupation time} of the asset price below a fixed level $y$ exceeds an independent exponential random variable
with mean $1/\varrho$. This corresponds to the special case
of our discounting with $\omega(s) =r +\varrho\mathds{1}_{\{s\leq y\}}$,
where $r$ is a risk-free interest rate.

Similar discounting was analysed in
\cite{Detemple2, Linetsky} where {\bf American step options} were considered.
In this case $\omega(s)=\varrho\mathds{1}_{\{s\in A(H)\}}$ where $A(H)=\{s\geq 0: \pm (s-H)\geq 0 \}$.
Moreover, the payoff function of a step option is the same as the payoff of a vanilla
option, except that it is deflated by a factor $e^{-\int_0^t \omega(S_w)dw}$ when the
knock-out rate $\varrho$ is positive or inflated by the same factor
when the knock-in rate $\varrho$ is negative. In both cases, the
factor depends exponentially on the cumulative excursion time
above or below a given barrier during the entire life of the option.
This exponential functional can be hence interpreted as a {\bf knock-out (knock-in) discount factor}.
Step options are mainly traded over-the-counter. They can
serve as a benchmark for the analysis and the design of certain
classes of occupation-time derivatives and related structured products.

This idea can be further generalised in a sense that the discount function $\omega$ can be treated
in the realm of real options
as more general knock-out (knock-in) factor.
One of the advantages of this type of the functional discounting is that an option
buyer can customize the option by selecting an appropriate functional rate according to his
risk aversion and the degree of confidence how the asset price will look like
during the whole option’s life. From the risk management perspective we can still
hedge this option by trading the underlying asset and we can still identify the value of
the contract.
Additionally, since different
market participants can select different discount rates, short-term manipulation by traders is substantially more difficult
and therefore considering such options
may help reducing market volatility.

The pricing technique developed in this paper {\bf can be applied to a wide range of securities and financial
contracts where the discounting in the above vein is affected by the underlying asset price process}.
Apart from the above mentioned examples of options, one can consider for example
{\bf Executive Stock Options} (EOSs) in which the executive may exercise
ESOs prematurely and leave the firm if an interesting opportunity arises
or for diversification or liquidity reasons.
Hence this policy can be determined by publicly available information such as stock prices.
As Carr and Linetsky \cite{CarrLin} noted this option corresponds either to
$\omega(s) =\lambda_f +\lambda_e\mathds{1}_{\{s>K\}}$
or
$\omega(s) =\lambda_f +\lambda_e\mathds{1}_{\{\log s>\log K\}}$, where
$\lambda_f$ is a
constant intensity of early exercise or forfeiture due to the exogenous voluntary
or involuntary employment termination
and $\lambda_e$ is the constant intensity of the early exercise due to the executive’s
exogenous desire for liquidity or diversification.
Another relevant application concerns {\bf R\&D projects}. Here,
the likelihood of achieving success before a competitor can depend on
the ability of the firm to invest resources in the discovery process.
If performance is poor, for instance due to mismanagement,
then the firm does not invest resources in the discovery process.
In the opposite scenario, more resources are devoted to research
activities. Hence the price of this type of projects depends
on the path-depended discounting as well; see \cite{TrigTsek} for survey.

The convexity of the value function and convexity preserving property, which is a key ingredient of our analysis,
have been studied quite extensively, see e.g. \cite{2, 3, chen, 4, 6, 8, 9, 11} for diffusion models,
and \cite{convexityPreserving, tysk} for one-dimensional jump-diffusion models.

We model dynamic of the asset price in a financial market by the jump-diffusion process.
The reason to take into account more general class of stochastic processes of asset prices
than in the seminal Black-Scholes market is the empirical observation that the log-prices of stocks have a  heavier left tail than the normal distribution, on which the seminal Black-Scholes model is founded.
The introduction of jumps in the financial market dates back to 
\cite{merton2}, who added a compound Poisson process to the standard Brownian motion to better describe dynamic of the logarithm of stocks.
Since then, there have been many papers and books working in this set-up, see e.g. \cite{cont, Schoutens} and references therein.
In particular, \cite[Table 1.1, p. 29]{cont} gives many other reasons to consider this type of market.
Apart from the classical Black-Scholes market one can consider
the normal inverse Gaussian model of
\cite{B10}, the hyperbolic
model of
\cite{B42}, the variance gamma model of
\cite{B80}, the CGMY model of
\cite{B24}, and the tempered stable process analysed
in
\cite{boyarchenkolevendorskii, B68}.
American options in the jump-diffusion markets have been studied in many papers as well; see e.g.
\cite{Aase2010, alili, AsmussenAvramPistorius, Erik1, boyarchenkolevendorskii, chan, ChesneyJeanblanc, Klimsiak, Mordecki}.

Identifying the solution of the optimal stopping problem by solving the corresponding HJB equation (as it is done in this paper as well) has been widely used; see \cite{krylov, peskir} for details.
In the context of American options with the constant discount function both methods of variational inequalities and viscosity solutions to the boundary value problems in the spirit of Bensoussan and Lions \cite{BenLions} are also well-known; see
e.g. \cite{Lamberton, Pham, pham98}.

To determine the unknown boundary of stopping region usually the smooth fit condition is applied; see e.g.
\cite{Kyprianou2007, Lamberton} for the geometric L\'evy process of asset prices.
As Lamberton and Mikou \cite{Lamberton} and Kyprianou and Surya \cite{Kyprianou2007} showed
the continuous fit is always satisfied but not necessary the smooth fit property.
Therefore we
prove that appropriate regularities of the process $S_t$ at critical points mentioned already above give smooth paste conditions
which generalises the classical results derived by \cite{Lamberton, Kyprianou2007}.
What we want to underline here is that using our approach (proving convexity and maximising over ends $l$ and $u$ of the stopping interval $[l,u]$)
allows to avoid identifying critical points via smooth paste conditions.

Apart from this, the interval form of the stopping region (hence producing double-sided continuation region)
is much more rare. It might come for example from the fact that when at time $t=0$ the discount rate is negative.
Then it is worth to wait
since discounting might increase the profit from such option.
This phenomenon has been already observed for fixed negative discounting (see
\cite{battauz1, battauz2, battauz3, tumilewicz, xia}) or in the case of American capped options with positive interest rate (see \cite{broadie, detemple}).

In this paper we also prove that in this general setting of asset-dependent discounting,
one can express the price of the call option in terms of the price of the put option.
It is called the put-call symmetry (or put-call parity).
Our finding supplements
\cite{EberleinPapapantoleon, mordecki} who extend
to the L\'evy market the findings by
\cite{CarrChesney}.
An analogous result for the negative discount rate case was obtained in  \cite{battauz1, battauz2, battauz3, tumilewicz}.
A comprehensive review of the put-call duality for American options is given in \cite{Detemple}.
We also refer to \cite[Section 7]{Detemple2014} and other references therein
for a general survey on the American options in the jump-diffusion model.

The paper is organised as follows.
In Section \ref{sec:main} we introduce basic notations and assumptions
that we use throughout the paper and we give the main results of this paper.
In Section \ref{sec:examples}, we present a few examples for which we obtain the analytical form of the value function, i.e. for the Black-Scholes market with negative $\omega$ function and for the market with prices being modeled by geometric L\'evy process with zero volatility and downward exponential jumps and $\omega$ being a linear function.
Section \ref{sec:proofs} contains proofs of all relevant theorems.
We put into Appendix proofs of auxiliary lemmas.
The last section includes our concluding remarks.

\section{Main results}\label{sec:main}

\subsection{Jump-diffusion process}
In this paper we assume a jump-diffusion financial market defined formally as follows.
On the basic probability space we define a couple $(B_t, v)$ adapted to the filtration $\mathcal{F}_t$, where $B_t$ is a standard Brownian motion and $v=v(dt, dz)$ is an independent of $B_t$ homogeneous Poisson random measure
on $\mathbb{R}_0^{+} \times \mathbb{R}$ for $\mathbb{R}_0^{+}=[0,+\infty)$.
Then the stock price process $S_t$ solves the following stochastic differential equation
\begin{equation}\label{diffusionWithJumps}
dS_t = \mu(S_{t-}, t) dt + \sigma(S_{t-}, t)dB_t + \int_\mathbb{R} \gamma(S_{t-}, t, z) \tilde{v}(dt, dz),
\end{equation}
where
\begin{itemize}
\item $\tilde{v}(dt, dz) = (v-q)(dt, dz)$ is a compensated jump martingale random measure of $v$,
\item $v$ is a homogenous Poisson random measure defined on $\mathbb{R}_0^{+}\times\mathbb{R}$ with intensity measure
\begin{equation*}
q(dt, dz) = dt\; m(dz).
\end{equation*}
\end{itemize}

If additionally, the jump-diffusion process has finite activity of jumps, i.e. when
\begin{equation*}\label{lambda}
\lambda:=\int_{\mathbb{R}}m(dz)<\infty,
\end{equation*}
then
$N_t=v([0,t]\times \mathbb{R})$ is a Poisson process and
$m$ 
can be represented as
\begin{equation*}
m(dz) = \lambda \mathbb{P}\left(e^{Y_i}-1\in dz\right),
\end{equation*}
where $\{Y_i\}_{i\in\mathbb{N}}$ are
i.i.d. random variables  independent of $N_t$ with distribution $\mu_Y$.
Note that $B_t$ and $N_t$ are independent of each other as well.
When additionally $\mu(s,t)=\mu s$, $\sigma(s,t)=\sigma s$ and $\gamma(s,t,z)=sz$, then
the asset price process $S_t$ is the geometric L\'evy process, that is,
\begin{equation}\label{spectrallyNegativeSt}
S_t= e^{X_t},
\end{equation}
where $X_t$ is a L\'evy process starting at $x=\log s$ with a triple $(\zeta, \sigma, \Pi)$ for
\begin{equation}\label{Pi}
\zeta:=\mu -\frac{\sigma^2}{2},\quad
\Pi(dx):=\lambda\mu_Y (dx).
\end{equation}
This observation follows straightforward from It\^o's rule.

\subsection{Assumptions}

Before we present the main results of this paper, we
state now the assumptions on the model parameters used later on.
We denote $\mathbb{R}^{+}:=(0,+\infty)$.
If we talk about convexity and concavity we mean it in a weak sense allowing these functions
to be constants within some regions.

\medskip

{\bf Assumptions (A)}
{\it \begin{enumerate}[label=(A\arabic*)]
    \item[]
	\item The drift parameter $\mu$: $\mathbb{R}^{+}\times \mathbb{R}_0^{+}\rightarrow \mathbb{R}$ and the diffusion parameter $\sigma$: $\mathbb{R}^{+}\times \mathbb{R}_0^{+}\rightarrow \mathbb{R}$ are continuous functions, while the jump size $\gamma$: $\mathbb{R}^{+}\times\mathbb{R}_0^{+}\times\mathbb{R}\rightarrow\mathbb{R}$ is measurable and for each fixed $z\in\mathbb{R}$, the function $(s, t) \rightarrow \gamma(s, t, z)$ is continuous.
	\label{A1}
    \item There exists a constant $C>0$ such that
    $$
    \mu^2(s, t) + \sigma^2(s, t) + \gamma^2(s, t, z) \leq C s^2
    $$ for all $(s, t, z) \in \mathbb{R}^{+}\times\mathbb{R}_0^{+}\times\mathbb{R}$.
    \label{A2}
    \item There exists a constant $C>0$ such that
    $$
    |\mu(s_2, t)-\mu(s_1, t)| + |\sigma(s_2, t)-\sigma(s_1, t)| + |\gamma(s_2, t, z) - \gamma(s_1, t, z)| \leq C|s_2-s_1|
    $$ for all $(s, t, z) \in \mathbb{R}^{+}\times\mathbb{R}_0^{+}\times\mathbb{R}$.
    \label{A3}
    \item There exists a constant $C>-1$ such that
    $$
    \gamma(s, t, z) > Cs
    $$ for all $(s, t, z) \in \mathbb{R}^{+}\times\mathbb{R}_0^{+}\times\mathbb{R}$.
    \label{A4}
    \item $g(s)\in C_{\text{pol}}(\mathbb{R}^{+})$, where $C_{\text{pol}}(\mathbb{R}^{+})$ denotes the set of functions of at most polynomial growth.
    \label{A5}
    \item $\omega(s)$ is bounded from below.
    \label{A6}
\end{enumerate}}

Assumptions \ref{A1}, \ref{A2}, \ref{A3} guarantee that there exists a unique solution to \eqref{diffusionWithJumps}.
Moreover, \ref{A2} and \ref{A4} imply that
\begin{equation*}
\mathbb{P}(S_t\leq 0 \text{ for some }t\in\mathbb{R}_0^{+})=0
\end{equation*}
which is a natural assumption since the process $S_t$ describes the stock price dynamic and its value has to be positive.
Additionally, assumptions \ref{A5} and \ref{A6}
ensure that $V^{\omega}_{\text{\rm A}}(s)$ is finite.

\begin{remark}\label{geomass}\rm
Note that assumptions \ref{A1}--\ref{A4} are all satisfied for the geometric L\'evy process.
\end{remark}
\subsection{Convexity of the value function}
Our first main result concerns the convexity of the value function $V^{\omega}_{\text{\rm A}}(s)$.
\begin{theorem}\label{mainTheoremAmericanOptionConvexity}
Let Assumptions (A) 
hold. Assume that
the payoff function $g$ is convex, $\omega$ is concave, the stock price process $S_t$ follows \eqref{diffusionWithJumps} and the following inequalities are satisfied
\begin{equation}\label{inequalityOnGamma2}
\frac{\partial^2\gamma(s,t,z)}{\partial s^2} \geq 0,
\end{equation}
\begin{equation}\label{inequalityOnMuOmegaG2}
\left(\frac{\partial^2\mu(s, t)}{\partial s^2} - 2\frac{d\omega(s)}{ds}\right)\frac{\partial V^{\omega}_{\text{\rm E}}(s, t)}{\partial s}\geq 0,
\end{equation}
where $V^{\omega}_{\text{\rm E}}(s, t)$ is defined in \eqref{valueFunctionOmegaEuropeanOption}.
Then the value function $V^{\omega}_{\text{\rm A}}(s)$ is convex as a function of $s$.
\end{theorem}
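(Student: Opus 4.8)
The plan is to transport convexity from the European value function, where Theorem \ref{TheoremAboutConvexityOfValueFunctionDiffusionWithJumps} already does the analytic heavy lifting, up to the perpetual American value function through a Bermudan approximation driven by the dynamic programming principle. First I would record that all hypotheses of Theorem \ref{TheoremAboutConvexityOfValueFunctionDiffusionWithJumps} are in force, so that the European value function $V^{\omega}_{\text{\rm E}}(\cdot,t)$ of \eqref{valueFunctionOmegaEuropeanOption} is convex in the spatial variable. Crucially, the only payoff-dependent assumption is the sign condition \eqref{inequalityOnMuOmegaG2}, whose factor $\partial V^{\omega}_{\text{\rm E}}/\partial s$ is governed by the monotonicity of the payoff; the remaining conditions (A1)--(A6), convexity of the payoff, concavity of $\omega$, and \eqref{inequalityOnGamma2} are payoff-independent. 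Hence the same convexity-preservation statement holds verbatim if $g$ is replaced by any convex payoff of at most polynomial growth sharing the monotonicity of $g$.

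Next I would discretise time on a grid $0=t_0<t_1<\cdots<t_n=T$ and introduce the associated Bermudan value function, exercisable only at the grid points. By backward induction its value at each $t_k$ is the maximum of the immediate payoff $g$ and a continuation value, and the latter is exactly a European functional of the form \eqref{valueFunctionOmegaEuropeanOption} with the convex terminal payoff replaced by the value function at $t_{k+1}$. Applying the European convexity result at each step, and using that the pointwise maximum of two convex functions is convex, one propagates convexity down the recursion; this is the content of Lemma \ref{TheoremAboutConvexityOfBermudan}, which rests on the dynamic programming principle \cite{propertiesOfAmericanOption}. Since monotonicity of $g$ is inherited by every intermediate value function, the sign demanded in \eqref{inequalityOnMuOmegaG2} is preserved along the entire induction, so each application of Theorem \ref{TheoremAboutConvexityOfValueFunctionDiffusionWithJumps} is legitimate.

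Finally, I would recover $V^{\omega}_{\text{\rm A}}$ as the limit of these Bermudan values as the mesh of the grid shrinks and the horizon $T\to\infty$: restricting the supremum in \eqref{mainProblem} to grid-valued stopping times yields a monotone family whose limit is the full supremum over $\mathcal{T}$. Because a pointwise limit of convex functions is convex, the convexity of each Bermudan value passes to $V^{\omega}_{\text{\rm A}}$, completing the argument.

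The main obstacle is the limiting step rather than the algebra. I expect the delicate points to be, first, justifying that the grid-restricted Bermudan prices actually converge to the American supremum in \eqref{mainProblem} --- this requires density of grid-valued stopping times together with a uniform-integrability bound supplied by the polynomial growth assumption (A5) and the lower bound on $\omega$ in (A6), which are precisely the conditions ensuring finiteness of $V^{\omega}_{\text{\rm A}}$ --- and, second, ensuring that the convexity-preservation hypotheses genuinely survive each induction step, where the interplay between the preserved monotonicity of the value function and the sign condition \eqref{inequalityOnMuOmegaG2} is essential.
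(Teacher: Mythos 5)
Your proposal is correct and follows essentially the same route as the paper: convexity of the European value function (Theorem \ref{TheoremAboutConvexityOfValueFunctionDiffusionWithJumps}), propagation to Bermudan options via the dynamic programming principle and the fact that a maximum of convex functions is convex (Lemma \ref{TheoremAboutConvexityOfBermudan}), and passage to the American value by letting the exercise grid refine and $T\to\infty$. Your explicit attention to why the sign condition \eqref{inequalityOnMuOmegaG2} survives each induction step (via preserved monotonicity of the intermediate value functions) is a point the paper's own argument treats only implicitly, and is a welcome addition rather than a deviation.
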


The proof of the above theorem is given
in Section \ref{sec:proofs}.

\begin{remark}\label{pierwszauwaga}\rm
We give now sufficient conditions in terms of model parameters for \eqref{inequalityOnMuOmegaG2} to be satisfied.
If $S_t$ is the geometric L\'evy process (hence $\mu(s,t)=\mu s$, $\sigma(s,t)=\sigma s$ and $\gamma(s,t,z)=sz$)
then \eqref{inequalityOnGamma2} is satisfied.
Let additionally $g(s)=(K-s)^+$. Then our optimal stopping problem is equivalent to
pricing the perpetual American put option with functional discounting.
If $\omega$ is non-decreasing function then the function
$s\rightarrow V^{\omega}_{\text{\rm E}}(s, t)$ is non-increasing.
Hence in this case condition \eqref{inequalityOnMuOmegaG2} is satisfied as well.
Concluding, if $\omega$ is concave and non-decreasing, then the value function of the perpetual American put option in geometric L\'evy market is convex as a function of the initial asset price $s$.
\end{remark}

\begin{remark}\rm
We have engineered above assumptions to handle mainly the put option (not a call option).
The call option can be then handled via put-call symmetry proved in
Theorem \ref{putCallSymmetry}. Hence there is no need to provide
sufficient conditions for both cases.
\end{remark}

\subsection{American put option and the optimal exercise time}
Assume now the particular case of \eqref{mainProblem} with
the payoff function
\begin{equation*}
g(s) = (K-s)^{+},
\end{equation*}
that is, the value function $V^{\omega}_{\text{\rm A}}(s)$ describes the price of a perpetual American put option.
The value function for this special choice of payoff function is denoted by
\begin{equation}\label{valueFunctionOmegaAmericanPutOption}
V^{\omega}_{\text{\rm A}^{\text{\rm Put}}}(s) := \sup_{\tau\in\mathcal{T}}\mathbb{E}_{s}\left[e^{-\int_{0}^{\tau}\omega(S_w) dw} (K-S_{\tau})\right].
\end{equation}
Note that above we used the fact that the option will not be realised when it equals zero, hence the plus in the payoff function could be skipped.

From \cite[Thm. 2.7, p. 40]{peskir} it follows that the optimal stopping rule is of the form
\begin{equation*}\label{opttaupeskir}
\tau^*=\inf\{t\geq 0: V^{\omega}_{\text{\rm A}^{\text{\rm Put}}}(S_t)=(K-S_t)\}.
\end{equation*}
From Theorem \ref{mainTheoremAmericanOptionConvexity}
we know that $V^{\omega}_{\text{\rm A}^{\text{\rm Put}}}(s)$ is convex. Moreover, from the definition of the value function it follows that $V^{\omega}_{\text{\rm A}^{\text{\rm Put}}}(s)\geq (K-s)$. Having both these facts in mind, together with linearity of the payoff function, it follows that $V^{\omega}_{\text{\rm A}^{\text{\rm Put}}}(s)$ and $g$ can cross each other in at most two points.
This observation leads straightforward to the conclusion about the form of the stopping region.
We recall that in \eqref{OptimalTau} and \eqref{valueFunctionOmegaAmericanPutOption2} we introduced the entrance time $\tau_{l, u} = \inf\{t\geq 0: S_t \in[l, u]\}
$ into the interval $[l,u]$ and the corresponding value function
$v^{\omega}_{\text{\rm A}^{\text{\rm Put}}}(s, l, u) = \mathbb{E}_{s}\left [e^{-\int_0^{\tau_{l, u}}\omega(S_w)dw} (K-S_{\tau_{l, u}})\right]$, respectively.
\begin{theorem}\label{lemmaAboutOptimalStoppingRule}
Let assumptions of Theorem \ref{mainTheoremAmericanOptionConvexity} hold. Then, the value function defined in \eqref{valueFunctionOmegaAmericanPutOption} is equal to
\begin{equation*}
V^{\omega}_{\text{\rm A}^{\text{\rm Put}}}(s) = v^{\omega}_{\text{\rm A}^{\text{\rm Put}}}(s, l^{*}, u^{*}),
\end{equation*}
where
\begin{equation*}\label{lStaruStar}
v^{\omega}_{\text{\rm A}^{\text{\rm Put}}}(s, l^{*}, u^{*}):= \sup_{0\leq l\leq u\leq K}v^{\omega}_{\text{\rm A}^{\text{\rm Put}}}(s, l, u).
\end{equation*}
The optimal stopping rule is $\tau_{l^*, u^*}$, where $l^*, u^*$ realise the supremum above.
\end{theorem}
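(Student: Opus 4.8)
The plan is to exploit the convexity of $V^{\omega}_{\text{\rm A}^{\text{\rm Put}}}$ granted by Theorem \ref{mainTheoremAmericanOptionConvexity} together with the affine shape of the payoff $g(s)=K-s$, and then to match the resulting description of the stopping region with the abstract optimal stopping result of \cite[Thm. 2.7, p. 40]{peskir}. Throughout write $V:=V^{\omega}_{\text{\rm A}^{\text{\rm Put}}}$ and set $h(s):=V(s)-(K-s)$. Since $V$ is convex and $s\mapsto K-s$ is affine, $h$ is convex; since stopping immediately ($\tau=0$) is admissible we have $V(s)\geq K-s$, so $h\geq 0$; and since $V$ is convex on the open half-line $\mathbb{R}^{+}$ it is continuous there, whence $h$ is continuous. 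The stopping region is $\mathcal{S}=\{s>0:V(s)=K-s\}=\{h=0\}$, i.e. the zero set of a nonnegative continuous convex function. Such a set is a closed interval: if $h(a)=h(b)=0$ with $a<b$, then convexity forces $h\equiv 0$ on $[a,b]$, so $\mathcal{S}$ is convex, and it is closed by continuity. Hence $\mathcal{S}=[l^*,u^*]$ for some $0\leq l^*\leq u^*$, possibly degenerate. This is precisely the ``at most two crossings'' observation recorded before the theorem.

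Next I would locate $\mathcal{S}$ inside $[0,K]$. Because the put payoff is nonnegative, $V(s)\geq (K-s)^{+}\geq 0$, so for every $s>K$ one has $V(s)\geq 0>K-s$, i.e. $h(s)>0$ and $s\notin\mathcal{S}$; this gives $u^*\leq K$. Together with $l^*\geq 0$ (the state space is $\mathbb{R}^{+}$) we obtain $0\leq l^*\leq u^*\leq K$, which is exactly the range over which the supremum in the statement is taken.

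The final step is the identification of the value. By \cite[Thm. 2.7, p. 40]{peskir} the stopping time $\tau^*=\inf\{t\geq 0:V(S_t)=K-S_t\}$ is optimal; by the first paragraph the underlying set is $[l^*,u^*]$, so $\tau^*=\inf\{t\geq 0:S_t\in[l^*,u^*]\}=\tau_{l^*,u^*}$ and hence $V(s)=\mathbb{E}_s[e^{-\int_0^{\tau^*}\omega(S_w)dw}(K-S_{\tau^*})]=v^{\omega}_{\text{\rm A}^{\text{\rm Put}}}(s,l^*,u^*)$. On the other hand, for every admissible pair $0\leq l\leq u\leq K$ the entrance time $\tau_{l,u}$ belongs to $\mathcal{T}$, so $v^{\omega}_{\text{\rm A}^{\text{\rm Put}}}(s,l,u)\leq V(s)$ simply because $V$ is the supremum over all of $\mathcal{T}$; taking the supremum over $l,u$ and comparing with the previous identity yields
\begin{equation*}
V^{\omega}_{\text{\rm A}^{\text{\rm Put}}}(s)=\sup_{0\leq l\leq u\leq K}v^{\omega}_{\text{\rm A}^{\text{\rm Put}}}(s,l,u)=v^{\omega}_{\text{\rm A}^{\text{\rm Put}}}(s,l^*,u^*),
\end{equation*}
with the supremum attained at $(l^*,u^*)$.

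The main obstacle is not the interval structure, which follows cleanly from convexity, but the verification that the abstract machinery of \cite{peskir} genuinely applies: one must check that the reward is integrable enough for the supremum to be attained by $\tau^*$ (this is where Assumptions \ref{A5} and \ref{A6}, ensuring finiteness of $V$, enter) and that $V$ is continuous so that $\mathcal{S}$ is closed and the hitting time $\tau_{l^*,u^*}$ is well defined. A secondary delicate point is the possible degeneracy of the interval: the cases $l^*=u^*$, or $l^*=0$ when it remains optimal to continue down to the boundary of the state space $\mathbb{R}^{+}$, should be handled by reading $[l^*,u^*]$ as the closed convex hull of $\mathcal{S}$ and interpreting $\tau_{l^*,u^*}$ accordingly, and one should confirm that $\mathcal{S}$ is nonempty so that genuine optimizers $l^*,u^*$ exist.
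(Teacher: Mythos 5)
Your proposal is correct and follows essentially the same route as the paper: the paper's argument (given in the paragraph preceding the theorem) is precisely that convexity of $V^{\omega}_{\text{\rm A}^{\text{\rm Put}}}$ from Theorem \ref{mainTheoremAmericanOptionConvexity}, the inequality $V^{\omega}_{\text{\rm A}^{\text{\rm Put}}}(s)\geq K-s$, and the linearity of the payoff force the stopping set $\{V^{\omega}_{\text{\rm A}^{\text{\rm Put}}}=g\}$ to be an interval $[l^*,u^*]$, after which \cite[Thm. 2.7, p. 40]{peskir} identifies the first entrance time into that set as optimal and the sandwich $v^{\omega}_{\text{\rm A}^{\text{\rm Put}}}(s,l,u)\leq V^{\omega}_{\text{\rm A}^{\text{\rm Put}}}(s)$ closes the argument. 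Your write-up is in fact more careful than the paper's (explicitly checking $u^*\leq K$, closedness of the zero set, and flagging the degenerate cases), but the underlying ideas are identical.
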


\begin{remark}\rm
Another characterisation of critical points $l^*$ and $u^*$ via smooth fit property is given
in Theorem \ref{smoothTheorem}.
\end{remark}

Theorem \ref{lemmaAboutOptimalStoppingRule} indicates that the optimal stopping rule in our problem is the first time when the process $S_t$ enters the interval $[l^*, u^*]$ for some $l^*\leq u^*$.
In the case when $l^*=u^*$ the interval becomes a point which is possible as well.
In some cases the above observation allows to identify the value function in a much more transparent way.
Finally, note that if the discount function $\omega$ is nonnegative, then it is never
optimal to wait to exercise the option for small asset prices, that is, always $l^*=0$ and the stopping region is one-sided.

\subsection{Spectrally negative geometric L\'evy process}\label{LevyModel}
We can express the value function
$V^{\omega}_{\text{\rm A}^{\text{\rm Put}}}(s)$ explicitly for the spectrally negative
geometric L\'evy process  defined in
\eqref{spectrallyNegativeSt}, that is when
\[S_t=e^{X_t},\]
where $X_t$ is a spectrally negative L\'evy process with $X_0 = x=\log s$ and hence $S_0 = s$.
This means that $X_t$ does not have positive jumps which corresponds to the inclusion of the support of L\'evy measure
$m$ on the negative half-line. This is very common assumption which is justified by some financial crashes; see
e.g. \cite{alili, avram,chan}.
One can easily observe that the dual case of spectrally positive L\'evy process $X_t$ can be also handled in a similar way. We decided
to skip this analysis and focus only on a more natural, from a practical perspective, spectrally negative scenario.
We express the value function in terms of some special functions, called omega scale functions; see \cite{LiBo} for details.

To introduce these functions let us define first the Laplace exponent via
\begin{equation*}\label{laplaceexponent}
\psi(\theta) := \frac{1}{t}\log \mathbb{E}[e^{\theta X_t}\mid X_0=0],
\end{equation*}
which is finite al least for $\theta\geq 0$ due to downward jumps.
This function is strictly convex, differentiable, equals zero at zero and tends to infinity at infinity.
Hence there exists its right inverse $\Phi(q)$ for $q\geq 0$.

%

The key functions for the fluctuation theory are the scale functions; see \cite{kuznetsov}.
The first scale function $W^{(q)}(x)$ is the unique right continuous function disappearing on the negative half-line whose Laplace transform is
\begin{equation}\label{scaleFunction}
\int_0^{\infty} e^{-\theta x} W^{(q)}(x) dx = \frac{1}{\psi(\theta)-q}
\end{equation}
for $\theta>\Phi(q)$.

For any measurable function $\xi$ we define the $\xi$-scale functions
$\{\mathcal{W}^{(\xi)}(x), x\in\mathbb{R}\}$, $\{\mathcal{Z}^{(\xi)}(x), x\in\mathbb{R}\}$ and $\{\mathcal{H}^{(\xi)}(x), x\in\mathbb{R}\}$
as the unique solutions to the following renewal-type equations
\begin{align}
\mathcal{W}^{(\xi)}(x) &= W(x) + \int_0^{x} W(x - y)\xi(y)\mathcal{W}^{(\xi)}(y)dy,\label{scaleFunctionW} \\
\mathcal{Z}^{(\xi)}(x) &= 1 + \int_0^{x} W(x - y)\xi(y)\mathcal{Z}^{(\xi)}(y)dy\label{scaleFunctionZ},\\
\mathcal{H}^{(\xi)}(x) &= e^{\Phi(c)x} + \int_0^x W^{(c)}(x - z)(\xi(z) - c)\mathcal{H}^{(\xi)}(z) dz,\label{scaleH}
\end{align}
where $W(x)=W^{(0)}(x)$ is a classical zero scale function and in equation \eqref{scaleH} it is additionally assumed that
$\xi(x)=c$ for all $x\leq 0$ and some constant $c\in \mathbb{R}$.
We also define the function $\{\mathcal{W}^{(\xi)}(x,z), (x,z)\in\mathbb{R}^2\}$ solving the following equation
\begin{align}
\mathcal{W}^{(\xi)}(x,z) &= W(x-z) + \int_z^{x} W(x - y)\xi(y)\mathcal{W}^{(\xi)}(y,z)dy.\label{scaleFunctionWb}
\end{align}
We introduce the following $S_t$ counterparts of the scale functions \eqref{scaleFunctionW}, \eqref{scaleFunctionZ}, \eqref{scaleH} and \eqref{scaleFunctionWb}
\begin{align}
\mathscr{W}^{(\xi)}(s) &:= \mathcal{W}^{(\xi\circ {\rm exp})}(\log s),\label{scaleFunctionW2} \\
\mathscr{Z}^{(\xi)}(s) &:= \mathcal{Z}^{(\xi\circ {\rm exp})}(\log s), \label{scaleFunctionZ2} \\
\mathscr{H}^{(\xi)}(s) &:= \mathcal{H}^{(\xi\circ {\rm exp})}(\log s),\label{scaleH2} \\
\mathscr{W}^{(\xi)}(s,z) &:= \mathcal{W}^{(\xi\circ {\rm exp})}(\log s,z),\label{scaleFunctionW2b}
\end{align}
where $\xi\circ {\rm exp} (x):= \xi(e^x)$.

For $\alpha$ for which the Laplace exponent is well-defined we can define a new probability measure $\mathbb{P}^{(\alpha)}$ via
\begin{equation}\label{changemeas}
\left.\frac{d\mathbb{P}^{(\alpha)}_{s}}{d\mathbb{P}_{s}}\right\vert_{\mathcal{F}_t} = e^{\alpha (X_{t}-\log s)-\psi(\alpha)t}.
\end{equation}
By \cite{exponentialMartingale} and \cite[Cor. 3.10]{kyprianou}, under $\mathbb{P}^{(\alpha)}$, the process $X_t$ is again spectrally negative L\'evy process with the Laplace exponent of the form
\begin{equation}\label{changepsi}
\psi^{(\alpha)}(\theta):=\psi(\theta+\alpha)-\psi(\alpha).
\end{equation}
For this new probability measure $\mathbb{P}^{(\alpha)}$ we can define
$\xi$-scale functions
which are denoted by adding subscript $\alpha$ to the regular counterparts, hence we have
$\mathscr{W}^{(\xi)}_{\alpha}(s)$, $\mathscr{Z}^{(\xi)}_{\alpha}(s)$, $\mathscr{H}^{(\xi)}_{\alpha}(s)$ and $\mathscr{W}^{(\xi)}_{\alpha}(s, z)$.

We additionally define the following functions
\begin{align*}\label{eta}
&
\omega_u(s):=\omega(su) \quad \text{and}\quad
\omega_u^\alpha(s):=\omega_u(s)- \psi(\alpha).
\end{align*}
The main result is given in terms of the resolvent density at $z$ of $X_t$ starting at $\log s-\log u$
killed by the potential $\omega_u$ and on exiting from positive half-line given by
\begin{equation}\label{resovent}
r(s,u, z) := \mathscr{W}^{(\omega_u)}(\log s-\log u) c_{\mathscr{W}^{(\omega_u)}/\mathscr{W}^{(\omega_u)}}(z)- \mathscr{W}^{(\omega_u)}(\log s-\log u, z),
\end{equation}
where
\[c_{\mathscr{W}^{(\omega_u)}/\mathscr{W}^{(\omega_u)}}(z):=
\lim_{y\rightarrow\infty}\frac{\mathscr{W}^{(\omega_u)}(\log y, z)}{\mathscr{W}^{(\omega_u)}(\log y)}.\]

\begin{theorem}\label{mainTheoremGeometricLevy}
Assume that the stock price process $S_t$ is described by \eqref{spectrallyNegativeSt} with $X_t$ being the spectrally negative L\'evy process and $\omega$ is a measurable, bounded from below, concave and non-decreasing function such that
\begin{equation}\label{etaassump}\omega(s)=c \text{ for all $s\in(0, 1]$ and some constant $c\in \mathbb{R}$.}
\end{equation}
Then 
\begin{gather*}
\begin{split}
v^{\omega}_{\text{\rm A}^{\text{\rm Put}}}(s, l, u) &= \frac{\mathscr{H}^{(\omega)}(s)}{\mathscr{H}^{(\omega)}(l)}(K-l)\mathds{1}_{\{s<l\}} + (K-s)\mathds{1}_{\{s\in[l, u]\}} \\ &+
\Bigg\{ \int_0^{\infty} \int_0^{\infty} \frac{\mathscr{H}^{(\omega_u)}((\frac{u}{e^y})\wedge l)}{\mathscr{H}^{(\omega_u)}(l)} (K-e^{\log l \vee (\log u-y)}) r(s,u, z)\Pi(-z-dy) dz  \\ &+ (K-u)
\left(\lim_{\alpha\rightarrow\infty} \left(\frac{s}{u}\right)^{\alpha}\left(\mathscr{Z}^{(\omega_u^\alpha)}_{\alpha}\left(\frac{s}{u}\right) - c_{\mathscr{Z}^{(\omega_u^\alpha)}_{\alpha}/\mathscr{W}^{(\omega_u^\alpha)}_{\alpha}}\mathscr{W}^{(\omega_u^\alpha)}_{\alpha}\left(\frac{s}{u}\right)\right)\right) \Bigg\} \mathds{1}_{\{s>u\}},
\end{split}
\end{gather*}
where
\begin{equation*}\label{climitmath}
\begin{aligned}
c_{\mathscr{Z}^{(\omega_u^\alpha)}_{\alpha}/\mathscr{W}^{(\omega_u^\alpha)}_{\alpha}} &= \lim_{z\rightarrow\infty} \frac{\mathscr{Z}^{(\omega_u^\alpha)}_{\alpha}(z)}{\mathscr{W}^{(\omega_u^\alpha)}_{\alpha}(z)}
\end{aligned}
\end{equation*}
and $r(s, u, z)$ is given in \eqref{resovent}.\\
If $l=0$ then assumption \eqref{etaassump} is superfluous and
\begin{gather*}
\begin{split}
v^{\omega}_{\text{\rm A}^{\text{\rm Put}}}(s, 0, u) &= (K-s)\mathds{1}_{\{s\in[0, u]\}} \\ &+
\Bigg\{ \int_0^{\infty} \int_0^{\infty}  (K-e^{\log u-y}) r(s,u, z)\Pi(-z-dy) dz  \\ &+ (K-u)
\left(\lim_{\alpha\rightarrow\infty} \left(\frac{s}{u}\right)^{\alpha}\left(\mathscr{Z}^{(\omega_u^\alpha)}_{\alpha}\left(\frac{s}{u}\right) - c_{\mathscr{Z}^{(\omega_u^\alpha)}_{\alpha}/\mathscr{W}^{(\omega_u^\alpha)}_{\alpha}}\mathscr{W}^{(\omega_u^\alpha)}_{\alpha}\left(\frac{s}{u}\right)\right)\right) \Bigg\} \mathds{1}_{\{s>u\}}.
\end{split}
\end{gather*}

\end{theorem}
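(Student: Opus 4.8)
The plan is to exploit the spectral negativity of $X_t$---the absence of positive jumps forces $X_t$, and hence $S_t$, to pass any upper level continuously---and to split the analysis according to the position of the starting point $s$ relative to the target interval $[l,u]$. If $s\in[l,u]$ then $\tau_{l,u}=0$ and the payoff is the deterministic $K-s$, which produces the middle term. If $s<l$, then, starting below the interval, the process can enter $[l,u]$ only by creeping up to $l$: it cannot jump upward over $[l,u]$ and it reaches $\log l$ continuously, so $S_{\tau_{l,u}}=l$ with no overshoot. Hence $\tau_{l,u}$ is the first passage of $X_t$ above $\log l$, and by the omega first-passage-upwards identity of \cite{LiBo} attached to the scale function $\mathscr{H}^{(\omega)}$ in \eqref{scaleH2} I would obtain
\[
v^{\omega}_{\text{\rm A}^{\text{\rm Put}}}(s,l,u)=\frac{\mathscr{H}^{(\omega)}(s)}{\mathscr{H}^{(\omega)}(l)}(K-l),\qquad s<l.
\]
Here assumption \eqref{etaassump} is exactly the requirement that $\omega\circ{\rm exp}$ be constant on $(-\infty,0]$, which is what makes $\mathscr{H}^{(\omega)}$ well defined through \eqref{scaleH}.

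For the remaining and principal case $s>u$ I would condition, via the strong Markov property, on the first instant the process reaches the level $\log u$ from above, namely $T:=\inf\{t\ge0:X_t\le\log u\}$. Before $T$ the process stays strictly above $\log u$ and therefore outside $[l,u]$, so it only accumulates the omega-discount; at $T$ one of two mutually exclusive events occurs. Either $X_T=\log u$ (the process creeps down onto the boundary), in which case $S_T=u\in[l,u]$ and the remaining payoff is $K-u$; or $X_T<\log u$ (the process jumps strictly below $\log u$), landing at $S_T=ue^{-y}$ for some $y>0$. In the jump scenario, if $ue^{-y}\in[l,u)$ the option is exercised at once for $K-ue^{-y}$, whereas if $ue^{-y}<l$ the process has overshot the interval and must climb back continuously to $l$, contributing the further factor $\mathscr{H}^{(\omega_u)}(ue^{-y})/\mathscr{H}^{(\omega_u)}(l)$ and the payoff $K-l$. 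Writing $e^{\log l\vee(\log u-y)}$ for the landing level and $(ue^{-y})\wedge l$ for the restart level unifies the two jump subcases, the ratio collapsing to $1$ precisely when the landing is inside $[l,u]$.

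To turn this decomposition into the stated formula I would treat the two pieces by the fluctuation theory of \cite{LiBo}. The jump contribution is handled by the compensation (master) formula: the omega-discounted occupation measure of $X_t$ above $\log u$ before $T$ has the resolvent density $r(s,u,z)$ recorded in \eqref{resovent}, and a jump emanating from the shifted height $z$ that lands at $\log u-y$ carries the L\'evy intensity $\Pi(-z-dy)$. Integrating the landing value against $r(s,u,z)\,dz\,\Pi(-z-dy)$ produces the double integral. The creeping contribution, $K-u$ multiplied by the omega-discounted probability of continuous passage onto $\log u$, is the delicate term: I would compute it by the exponential (Esscher-type) change of measure $\mathbb{P}^{(\alpha)}$ of \eqref{changemeas} and \cite{exponentialMartingale}, under which $X_t$ remains spectrally negative with Laplace exponent \eqref{changepsi}, express the discounted downward-exit quantity through $\mathscr{Z}^{(\omega_u^\alpha)}_{\alpha}-c_{\,\cdot\,}\mathscr{W}^{(\omega_u^\alpha)}_{\alpha}$, and let $\alpha\to\infty$; the prefactor $(s/u)^{\alpha}$ cancels the change-of-measure density and the limit isolates exactly the creeping (rather than jumping) part of the downward exit.

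I expect the creeping term to be the main obstacle: identifying $\lim_{\alpha\to\infty}(s/u)^{\alpha}\big(\mathscr{Z}^{(\omega_u^\alpha)}_{\alpha}-c\,\mathscr{W}^{(\omega_u^\alpha)}_{\alpha}\big)(s/u)$ with the discounted creeping probability requires careful control of the omega scale functions under $\mathbb{P}^{(\alpha)}$ as $\alpha\to\infty$, and when $X_t$ has no Gaussian part this limit should vanish, consistently with the impossibility of creeping. Justifying the interchange of the sum over jumps with the expectation in the compensation formula, and verifying that \eqref{resovent} indeed describes the omega-killed process on exiting $(\log u,\infty)$, are the further points needing care. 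Finally, the case $l=0$ is lighter: since $S_t>0$ cannot overshoot below $0$, every downward jump past $\log u$ lands inside $[0,u]$, so all restart factors equal $1$, the functions $\mathscr{H}^{(\omega)}$ and $\mathscr{H}^{(\omega_u)}$ are never invoked, and assumption \eqref{etaassump} can be dropped.
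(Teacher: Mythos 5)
Your proposal follows essentially the same route as the paper: the same three-way split on the position of $s$, the upward-passage identity via $\mathscr{H}^{(\omega)}$ for $s<l$, and for $s>u$ the same decomposition of the downward exit at $\log u$ into a creeping part (extracted by the Esscher change of measure $\mathbb{P}^{(\alpha)}$ and the limit $\alpha\to\infty$) and a jump part handled by the compensation formula with the resolvent density $r(s,u,z)$, with the overshoot below $l$ followed by a continuous climb back producing the unified $\mathscr{H}^{(\omega_u)}((u/e^y)\wedge l)/\mathscr{H}^{(\omega_u)}(l)$ factor. The points you flag as delicate (the $\alpha\to\infty$ limit isolating creeping, and the role of \eqref{etaassump} in making $\mathscr{H}$ well defined) are exactly the ones the paper resolves by citing \cite{LiBo} and \cite{exponentialMartingale}.
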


The proof of the above theorem is given
in Section \ref{sec:proofs}.
\begin{remark}\rm
For the general case when $l>0$ the assumption \eqref{etaassump} is a technical one and it is a consequence of the assumption made in
\cite[Thm. 2.5]{LiBo} which is used in the proof. We believe that this assumption is superfluous though.
\end{remark}

\subsection{HJB, smooth and continuous fit properties}
The classical approach via HJB system is possible in our set-up as well.
More precisely, as before in \eqref{spectrallyNegativeSt} we have
\[S_t=e^{X_t}\]
for the L\'evy process $X_t$ with the triple ($\zeta, \sigma, \Pi)$.
We start from the observation that using \cite[Thm. 31.5, Chap. 6]{Sato}
and It\^o's formula one can conclude that the process $S_t$ is a Markov process with an infinitesimal generator
\begin{equation*}
\mathcal{A}f(s) = A^C f(s) +  A^J f(s),
\end{equation*}
where $A^C$ is the linear second-order differential operator of the form
\begin{equation*}
A^C f(s) = \frac{\sigma^2 s^2}{2} f^{\prime\prime}(s)
+ \left(\zeta +\frac{\sigma^2}{2}\right)s f^\prime (s)
\end{equation*}
and $A^J$ is the integral operator given by
\begin{equation*}
A^J f(s) = \int_{(-\infty, 0)}
\left(f(s e^z) - f(s) - s|z|f^\prime(s)\mathds{1}_{\{|z|\leq 1\}} \right) \Pi(dz).
\end{equation*}
The domain $D(\mathcal{A})$ of this generator consists of the functions belonging to $C^{2}(\mathbb{R}^+)$ if $\sigma >0$ and $C^{1}(\mathbb{R}^+)$ if $\sigma=0$.
In this paper we prove that $V^{\omega}_{\text{\rm A}}(s)$ satisfies the HJB equation given below with
appropriate smooth fit conditions.
We recall that $1$ is regular for $(0, 1)$ and for the process $S_t$
if $\mathbb{P}_1(\tau_{(0, 1)}=0)=1$ for $\tau_{(0, 1)}=\inf\{t> 0: S_t\in (0, 1)\}$.
Similarly, we can define regularity for $(1, +\infty)$.
Note that regularity of $S_t$ at $1$ corresponds to regularity of $X_t$ at $0$
for the negative or positive half-line.
\begin{theorem}\label{smoothTheorem}
Let $\omega$ be a bounded from below and concave function with the opposite monotonicity to the payoff function $g$.
Assume that $V^{\omega}_{\text{\rm A}}(s)\in D(\mathcal{A})$ and $g(s)\in C^1(\mathbb{R}^+)$.
Then $V^{\omega}_{\text{\rm A}}(s)$ solves uniquely the following HJB system
\begin{equation}\label{CauchyProblem1}
  \begin{cases}
    \mathcal{A} V^{\omega}_{\text{\rm A}}(s) - \omega(s) V^{\omega}_{\text{\rm A}}(s)= 0, \quad & \quad s\notin [l^*,u^*], \\
	V^{\omega}_{\text{\rm A}}(s) = g(s), \quad &\quad  s\in [l^*,u^*].
  \end{cases}
\end{equation}
Moreover, if $1$ is regular for $(0, 1)$ and for the process $S_t$ then
there is a smooth fit at the right end of the stopping region
\begin{equation*}\label{smooth1}
(V^{\omega}_{\text{\rm A}})^\prime(u^*)=g^\prime(u^*).
\end{equation*}
Similarly, if $1$ is regular for $(1, +\infty)$ and for the process $S_t$ then
there is a smooth fit at the left end of the stopping region
\begin{equation*}\label{smooth2}
(V^{\omega}_{\text{\rm A}})^\prime(l^*)=g^\prime(l^*).
\end{equation*}
\end{theorem}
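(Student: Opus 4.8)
The plan is to split the proof into two stages: the HJB characterisation together with its uniqueness, and then the smooth fit at the two endpoints. Throughout I would use that, by Theorem \ref{lemmaAboutOptimalStoppingRule}, the optimal rule is $\tau_{l^*,u^*}=\inf\{t\ge 0:S_t\in[l^*,u^*]\}$, so that $[l^*,u^*]$ is the stopping region and its (open) complement is the continuation region, and that, by Theorem \ref{mainTheoremAmericanOptionConvexity}, $V^\omega_{\mathrm A}$ is convex. The central analytic device is Dynkin's formula for the killed process: since $V^\omega_{\mathrm A}\in D(\mathcal A)$, the process
\[
e^{-\int_0^t\omega(S_w)\,dw}V^\omega_{\mathrm A}(S_t)-V^\omega_{\mathrm A}(s)-\int_0^t e^{-\int_0^w\omega(S_v)\,dv}\bigl(\mathcal A V^\omega_{\mathrm A}-\omega V^\omega_{\mathrm A}\bigr)(S_w)\,dw
\]
is a local martingale under $\mathbb P_s$.

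For the system itself, the equality $V^\omega_{\mathrm A}=g$ on $[l^*,u^*]$ is immediate, since there it is optimal to stop at once. On the continuation region I would argue locally: fixing $s\notin[l^*,u^*]$ and letting $\tau_\delta$ be the exit time of a small ball around $s$ still contained in the continuation region, the dynamic programming principle gives $V^\omega_{\mathrm A}(s)=\mathbb E_s[e^{-\int_0^{\tau_\delta}\omega(S_w)dw}V^\omega_{\mathrm A}(S_{\tau_\delta})]$ (as $\tau_\delta\le\tau_{l^*,u^*}$); inserting this into Dynkin's formula, taking expectations, dividing by $\mathbb E_s[\tau_\delta]$ and using the continuity of $\mathcal A V^\omega_{\mathrm A}-\omega V^\omega_{\mathrm A}$ (guaranteed by $V^\omega_{\mathrm A}\in D(\mathcal A)$), I would obtain $(\mathcal A V^\omega_{\mathrm A}-\omega V^\omega_{\mathrm A})(s)=0$ as $\delta\downarrow 0$. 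Uniqueness I would derive by verification: if $u\in D(\mathcal A)$ solves the same system with at most polynomial growth, then applying Dynkin's formula and optional stopping along $\tau_{l^*,u^*}$, using $\mathcal A u-\omega u=0$ off $[l^*,u^*]$ and $u=g$ on $[l^*,u^*]$, forces $u(s)=\mathbb E_s[e^{-\int_0^{\tau_{l^*,u^*}}\omega(S_w)dw}g(S_{\tau_{l^*,u^*}})]=V^\omega_{\mathrm A}(s)$ by Theorem \ref{lemmaAboutOptimalStoppingRule}.

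For the smooth fit at $u^*$ I would first observe that $V^\omega_{\mathrm A}=g$ on $[l^*,u^*]$ gives $(V^\omega_{\mathrm A})'(u^*-)=g'(u^*)$, while convexity of $V^\omega_{\mathrm A}$ yields $(V^\omega_{\mathrm A})'(u^*+)\ge(V^\omega_{\mathrm A})'(u^*-)=g'(u^*)$; the whole content is therefore the reverse inequality $(V^\omega_{\mathrm A})'(u^*+)\le g'(u^*)$. Following the classical route of \cite{Lamberton} and \cite{Kyprianou2007}, I would couple the geometric L\'evy process started at $u^*+\varepsilon$ with the one started at $u^*$ through the common driver $X_t$, so that $S^{(u^*+\varepsilon)}_t=\tfrac{u^*+\varepsilon}{u^*}\,S^{(u^*)}_t$. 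Writing $\tau^*_\varepsilon$ for the optimal time from $u^*+\varepsilon$ and using it as an admissible (suboptimal) rule from $u^*$, together with $V^\omega_{\mathrm A}(u^*)=g(u^*)$, gives
\[
V^\omega_{\mathrm A}(u^*+\varepsilon)-V^\omega_{\mathrm A}(u^*)\le \mathbb E\Bigl[e^{-\int_0^{\tau^*_\varepsilon}\omega(S^{(u^*+\varepsilon)}_w)dw}g\bigl(S^{(u^*+\varepsilon)}_{\tau^*_\varepsilon}\bigr)-e^{-\int_0^{\tau^*_\varepsilon}\omega(S^{(u^*)}_w)dw}g\bigl(S^{(u^*)}_{\tau^*_\varepsilon}\bigr)\Bigr].
\]
Here the regularity of $1$ for $(0,1)$ (equivalently, downward regularity of $X$ at $0$) enters decisively: by spatial homogeneity it forces the first entrance time into $[l^*,u^*]$ from $u^*+\varepsilon$ to tend to $0$ and the spectrally negative overshoot below $u^*$ to vanish, so $S^{(u^*+\varepsilon)}_{\tau^*_\varepsilon}\to u^*$ and both discount factors tend to $1$. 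A first order expansion of $g$ then yields $V^\omega_{\mathrm A}(u^*+\varepsilon)-V^\omega_{\mathrm A}(u^*)\le g'(u^*)\varepsilon+o(\varepsilon)$, the required bound. The argument at $l^*$ is symmetric, the continuation region now lying below $l^*$ and the relevant hypothesis being regularity of $1$ for $(1,+\infty)$, i.e. upward regularity of $X$ at $0$.

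The main obstacle is precisely this reverse smooth fit inequality, and within it the interplay between the path dependent discount factor $e^{-\int_0^{\tau^*_\varepsilon}\omega(S_w)dw}$ and the overshoot of the spectrally negative process at the boundary. I expect that the cleanest control of both comes from combining the regularity hypothesis, which forces $\tau^*_\varepsilon\to 0$ and hence renders the discount factors negligible at first order, with the boundedness from below and the local boundedness of $\omega$ near $u^*$, so that the discount correction is $o(\varepsilon)$ and only the term $g'(u^*)\varepsilon$ survives. As a quantitative cross check, the explicit representation of $V^\omega_{\mathrm A^{\mathrm{Put}}}$ in terms of the omega scale functions from Theorem \ref{mainTheoremGeometricLevy}, whose boundary smoothness mirrors the regularity of $X$, provides an alternative verification of the same fact.
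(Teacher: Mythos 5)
Your proposal is correct and takes essentially the same route as the paper: the HJB system is obtained by the standard Dynkin/dynamic-programming and verification arguments (which the paper delegates to classical references, using convexity to secure continuity of $V^{\omega}_{\text{\rm A}}$), and the smooth fit is proved exactly as in the paper by Lamberton's device of running the optimal rule from $u^*+\varepsilon$ as a suboptimal rule from $u^*$, coupling the two processes through the common L\'evy driver and letting regularity force $\tau^*_\varepsilon\to 0$. The only cosmetic difference is that the paper gets the lower bound $\liminf_{h\downarrow 0}\bigl(V^{\omega}_{\text{\rm A}}(u^*+h)-V^{\omega}_{\text{\rm A}}(u^*)\bigr)/h\ge g^\prime(u^*)$ directly from $V^{\omega}_{\text{\rm A}}\ge g$ with equality at $u^*$, rather than from convexity of $V^{\omega}_{\text{\rm A}}$.
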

The proof of the above theorem is given
in Section \ref{sec:proofs}.

\begin{remark}\label{smoothvalue}
\rm Let us consider the American put option. Then from Theorems \ref{lemmaAboutOptimalStoppingRule} and  \ref{mainTheoremGeometricLevy}, we can conclude
that smoothness of the value function $V^{\omega}_{\text{\rm A}^{\text{\rm Put}}}(s)$ corresponds to the smoothness
of the $\xi$-scale functions for $\omega$, $\omega_u$ and $\omega_u^\alpha$.
From the definition of these functions given in \eqref{scaleFunctionW}, \eqref{scaleFunctionZ} and \eqref{scaleH} it follows that
the smoothness of the latter functions is equivalent to the smoothness of the first scale function observed under measures
$\mathbb{P}$ and
$\mathbb{P}^{(\alpha)}$.
By \cite[Lem. 8.4]{kyprianou} the smoothness of the first scale function does not change under the exponential change of measure \eqref{changemeas}.
Thus from \cite[Lem. 2.4, Thms 3.10 and 3.11]{kuznetsov} if follows that
\begin{itemize}
\item if $\sigma >0$ then $V^{\omega}_{\text{\rm A}^{\text{\rm Put}}}(s)\in C^{2}(\mathbb{R}^+)$;
\item if $\sigma =0$ and the jump measure $\Pi$ is absolutely continuous or $\int_{-1}^0|x|\Pi(dx)=+\infty$, then  $V^{\omega}_{\text{\rm A}^{\text{\rm Put}}}(s)\in C^{1}(\mathbb{R}^+)$.
\end{itemize}
Moreover, by \cite[Prop. 7]{alili}, $1$ is regular for both $(0, 1)$ and $(1,+\infty)$ if $\sigma >0$.
Hence HJB system \eqref{CauchyProblem1} with the smooth fit property could be used without any additional assumptions as long as $\sigma >0$.
If one has single continuation region $[u^*, +\infty)$ and $\sigma=0$ then by \cite[Prop. 7]{alili}
to get the smooth fit condition at $u^*$ it is sufficient to assume that the drift $\zeta$ of the process $X_t$ is strictly negative.
\end{remark}

\subsection{Put-call symmetry}
The put-call parity allows to calculate the American call option price having the put one.
We formulate this relation again for
$S_t$ being a general geometric L\'evy process defined in \eqref{spectrallyNegativeSt}, that is, $S_t=e^{X_t}$ for $X_t$
being a general spectrally negative L\'evy process
having triple
\[(\zeta, \sigma, \Pi)\]
for $\zeta$ and $\Pi$ defined in \eqref{Pi} and starting position $X_0=\log S_0=s$.
Apart from the function
\begin{equation*}\label{putSymmetry}
v^{\omega}_{\text{\rm A}^{\text{\rm Put}}}(s, K, \zeta, \sigma, \Pi, l, u) := \mathbb{E}_{s}[e^{-\int_0^{\tau_{l, u}}\omega(S_w)dw} (K-S_{\tau_{l, u}})^{+}]
\end{equation*}
defined in \eqref{valueFunctionOmegaAmericanPutOption2} we denote
\begin{equation*}
v^{\omega}_{\text{\rm A}^{\text{\rm Call}}}(s, K, \zeta, \sigma, \Pi, l, u) := \mathbb{E}_{s}[e^{-\int_0^{\tau_{l, u}}\omega(S_w)dw} (S_{\tau_{l, u}} - K)^{+}].
\end{equation*}

\begin{theorem}\label{putCallSymmetry}
Assume that
$\psi(1)=\log \mathbb{E}e^{X_1}=\log \mathbb{E} S_1$ is finite. Let $l \leq u \leq K$.
Then we have
\begin{equation}\label{pierwszatozsamosc}
v^{\omega}_{\text{\rm A}^{\text{\rm Call}}}(s, K, \zeta, \sigma, \Pi, l, u) = v^{\vartheta^{(1)}}_{\text{\rm A}^{\text{\rm Put}}}\left(K, s, -\zeta, \sigma, \hat{\Pi}, \frac{s}{u}K, \frac{s}{l}K\right),
\end{equation}
where
\begin{align}
\hat{\Pi}(dx) &:= e^{-x}\Pi(-dx), \label{hatPi}\\
\vartheta^{(1)}(\cdot) &:= \omega\left(\frac{1}{\cdot}sK\right) - \psi(1).\nonumber
\end{align}
Moreover, if assumptions of Theorem \ref{mainTheoremAmericanOptionConvexity} hold for the function
$\vartheta^{(1)}$ then
the American call option admits a double continuation region with optimal stopping boundaries
$l^*_c$ and $u^*_c$ such that
\begin{equation}\label{symm}
l^*_cu^*=l^*u_c^*=sK,
\end{equation}
where $l^*$ and $u^*$ are the stopping boundaries for the put option.
\end{theorem}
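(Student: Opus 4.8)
The plan is to derive the identity \eqref{pierwszatozsamosc} by combining the Esscher-type change of measure \eqref{changemeas} at parameter $\alpha=1$ with a spatial reflection of the driving Lévy process, and then to obtain the boundary relations \eqref{symm} by transporting the optimisation of Theorem \ref{lemmaAboutOptimalStoppingRule} through this transformation. First I would strip the asset value out of the call payoff by writing $(S_{\tau_{l,u}}-K)^+ = S_{\tau_{l,u}}\,(1-K/S_{\tau_{l,u}})^+$ and absorbing the factor $S_{\tau_{l,u}}=e^{X_{\tau_{l,u}}}$ into the measure. Since $\psi(1)<\infty$, the process $S_t e^{-\psi(1)t}/s$ is a unit-mean $\mathbb{P}_s$-martingale, and it is exactly the density \eqref{changemeas} with $\alpha=1$. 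Hence on $\{\tau_{l,u}<\infty\}$ we have $S_{\tau_{l,u}} = s\,e^{\psi(1)\tau_{l,u}}\,\frac{d\mathbb{P}^{(1)}_s}{d\mathbb{P}_s}\big|_{\mathcal{F}_{\tau_{l,u}}}$, and applying optional sampling (after localising along $\tau_{l,u}\wedge n$ and passing to the limit, the discount factor being controlled because $\omega$ is bounded from below) yields
\[ v^{\omega}_{\text{\rm A}^{\text{\rm Call}}}(s,K,\zeta,\sigma,\Pi,l,u) = s\,\mathbb{E}^{(1)}_{s}\Big[e^{-\int_0^{\tau_{l,u}}(\omega(S_w)-\psi(1))\,dw}\,\big(1-K/S_{\tau_{l,u}}\big)^+\Big]. \]

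Next I would reflect the process. By \eqref{changemeas}--\eqref{changepsi}, under $\mathbb{P}^{(1)}_s$ the process $X_t$ is again spectrally negative with Laplace exponent $\psi(\cdot+1)-\psi(1)$ and Lévy measure $e^{x}\Pi(dx)$. Set $\tilde X_t:=\log(sK)-X_t$ and $\tilde S_t:=e^{\tilde X_t}=sK/S_t$; then $\tilde X_0=\log K$, so $\tilde S_0=K$, and $\tilde X_t$ is a spectrally positive Lévy process. Reading off its Lévy--Khintchine triple (negation flips the jump measure $e^x\Pi(dx)$ to $e^{-x}\Pi(-dx)=\hat\Pi(dx)$, preserves the Gaussian coefficient $\sigma$, and — as a direct computation of $\tilde\psi(\theta)=\psi(1-\theta)-\psi(1)$ shows — produces drift $-\zeta$) identifies its law with the one inside $v^{\vartheta^{(1)}}_{\text{\rm A}^{\text{\rm Put}}}(K,s,-\zeta,\sigma,\hat\Pi,\cdot,\cdot)$. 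Under this substitution the payoff transforms via $K/S_{\tau_{l,u}}=\tilde S_{\tau_{l,u}}/s$ into $s(1-\tilde S_{\tau_{l,u}}/s)^+=(s-\tilde S_{\tau_{l,u}})^+$, the discount rate becomes $\omega(S_w)-\psi(1)=\omega(sK/\tilde S_w)-\psi(1)=\vartheta^{(1)}(\tilde S_w)$, and the hitting time becomes $\tau_{l,u}=\inf\{t\geq 0:\tilde S_t\in[sK/u,sK/l]\}$. Collecting these facts gives precisely \eqref{pierwszatozsamosc}.

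Finally, for the boundary relations I would optimise both sides over $l\leq u$. The map $(l,u)\mapsto(sK/u,sK/l)$ is an order-preserving bijection of $\{0\leq l\leq u\leq K\}$ onto the admissible intervals of the dual put, so taking suprema commutes with it; by Theorem \ref{lemmaAboutOptimalStoppingRule} (applicable to the dual put because, by hypothesis, the convexity assumptions of Theorem \ref{mainTheoremAmericanOptionConvexity} hold for $\vartheta^{(1)}$) the dual put is optimised at a unique interval $[l^*,u^*]$. Pulling this optimiser back through the bijection shows the call is optimised at $[l^*_c,u^*_c]$ with $l^*_c=sK/u^*$ and $u^*_c=sK/l^*$, that is $l^*_c u^*=l^* u^*_c=sK$, which is \eqref{symm}.

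I expect the main obstacle to be the rigorous justification of the change-of-measure step at the (possibly infinite) stopping time $\tau_{l,u}$, together with the precise identification of the dual triple — in particular confirming that the drift coefficient is exactly $-\zeta$ under the truncation convention used in \eqref{Pi}, since the Esscher tilt and the reflection each perturb the drift and these contributions must be reconciled with the relation $\tilde\psi(\theta)=\psi(1-\theta)-\psi(1)$.
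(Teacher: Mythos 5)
Your proposal is correct and follows essentially the same route as the paper's proof: the Esscher change of measure \eqref{changemeas} at $\alpha=1$ to absorb $S_{\tau_{l,u}}$ into the measure, followed by the spatial reflection to the dual process with triple $(-\zeta,\sigma,\hat{\Pi})$ and the relabelling of the discount function, payoff and target interval, and finally transporting the optimisation over intervals (the paper phrases this last step via the general form of the optimal stopping set $D_c$ rather than via the explicit bijection $(l,u)\mapsto(sK/u,sK/l)$, but the content is the same). The caveat you raise about reconciling the drift coefficient with the truncation convention is a fair one; the paper simply cites \cite{tumilewicz, mordecki, exponentialMartingale} for the identification of the dual triple.
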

The proof of the above theorem is given
in Section \ref{sec:proofs}.

\begin{remark}\rm
Note that the value function of the American call option is expressed in terms of the American put option
calculated for the L\'evy process $\hat{X}_t$ being dual to $X_t$ process observed under the measure
$\mathbb{P}^{(1)}$. In particular, the jumps of the process $\hat{X}_t$ have the opposite direction to the jumps of the
process $X_t$ for which the put option is priced.
In general, determining the conditions for $\omega$ such that $\vartheta^{(1)}$
satisfies all the assumptions of Theorem \ref{mainTheoremAmericanOptionConvexity} seems to be impossible and then we can only work on a case-by-case basis.
\end{remark}


\subsection{Black-Scholes model}\label{BSModel}
We can give more detailed analysis in the case of Black–Scholes model in which
the stock price process $S_t=e^{X_t}$, where
\begin{equation}\label{X_tBS}
X_t = \log s + \zeta t + \sigma B_t
\end{equation}
with $\zeta = \mu-\frac{\sigma^2}{2}$, while $\mu\in \mathbb{R}$ and $\sigma>0$ are the parameters called drift and volatility, respectively.
Under the martingale measure we have $\mu = r$, where $r$ is a risk-free interest rate.

\begin{theorem}\label{mainTheoremBSModel}
Assume that $\omega$ is a bounded from below, concave and non-decreasing function. For Black-Scholes model \eqref{spectrallyNegativeSt} with $X_t$ given in \eqref{X_tBS} the function $v^{\omega}_{\text{\rm A}^{\text{\rm Put}}}(s, l, u)$ defined in \eqref{valueFunctionOmegaAmericanPutOption2} is given by
\begin{gather*}\label{valueFunctionBS}
\begin{split}
v^{\omega}_{\text{\rm A}^{\text{\rm Put}}}(s, l, u) &= \frac{h(s)}{h(l)}(K-l)\mathds{1}_{\{s<l\}} + (K-s)\mathds{1}_{\{s\in[l, u]\}} \\ &+ \frac{h(s)}{h(u)}(K-u)\mathds{1}_{\{s>u\}},
\end{split}
\end{gather*}
where $h(s)$ is a solution to
\begin{equation}\label{ODEForh}
\frac{\sigma^2 s^2}{2}h^{\prime\prime}(s) + \mu s h^\prime (s) - \omega(s)h(s) = 0
\end{equation}
which satisfies
\begin{equation}\label{ConditionsOnh(x)2}
\begin{cases}
h(s) = g(s), \quad s \in [l^{*}, u^{*}],\\
\displaystyle\lim_{s\rightarrow\infty}h(s) = \text{\rm const\;.}\\
\end{cases}
\end{equation}
\end{theorem}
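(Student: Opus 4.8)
The plan is to exploit the one feature that distinguishes the Black--Scholes model from the general Lévy setting: since $X_t=\log s+\zeta t+\sigma B_t$ has no jumps, the process $S_t=e^{X_t}$ has continuous paths. Continuity turns the two-sided entrance problem defining $v^{\omega}_{\text{\rm A}^{\text{\rm Put}}}(s,l,u)$ in \eqref{valueFunctionOmegaAmericanPutOption2} into two decoupled one-sided first-passage problems, each of which is then solved by a Feynman--Kac argument attached to the ODE \eqref{ODEForh}.

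First I would reduce to first-passage functionals. For $s\in[l,u]$ we have $\tau_{l,u}=0$ and $v^{\omega}_{\text{\rm A}^{\text{\rm Put}}}(s,l,u)=K-s$. For $s<l$, path-continuity forces $S_t$ to reach $[l,u]$ by first touching its lower endpoint, so $\tau_{l,u}=\tau_l:=\inf\{t\ge 0:S_t=l\}$ and $S_{\tau_{l,u}}=l$; hence $v^{\omega}_{\text{\rm A}^{\text{\rm Put}}}(s,l,u)=(K-l)\,\mathbb{E}_s[e^{-\int_0^{\tau_l}\omega(S_w)dw}]$. Symmetrically, for $s>u$ we get $\tau_{l,u}=\tau_u$, $S_{\tau_{l,u}}=u$ and $v^{\omega}_{\text{\rm A}^{\text{\rm Put}}}(s,l,u)=(K-u)\,\mathbb{E}_s[e^{-\int_0^{\tau_u}\omega(S_w)dw}]$. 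The assumption $l\le u\le K$ guarantees $K-l,K-u\ge 0$, which is exactly what lets us drop the positive part in the payoff.

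Next I would produce the Feynman--Kac representation. With $\Pi=0$ the generator defined before Theorem~\ref{smoothTheorem} reduces to $\mathcal{A}f(s)=\frac{\sigma^2 s^2}{2}f''(s)+\mu s\,f'(s)$, using $\zeta+\sigma^2/2=\mu$. Let $h$ solve $\mathcal{A}h-\omega h=0$, i.e.\ \eqref{ODEForh}. By It\^o's formula $M_t:=e^{-\int_0^t\omega(S_w)dw}h(S_t)$ is a local martingale, so optional stopping at $\tau_a$ (for $a=l$ or $a=u$) followed by a limiting argument gives $h(s)=h(a)\,\mathbb{E}_s[e^{-\int_0^{\tau_a}\omega(S_w)dw}]$, that is $\mathbb{E}_s[e^{-\int_0^{\tau_a}\omega(S_w)dw}]=h(s)/h(a)$. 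Substituting into the previous step yields the three-piece formula. Here the choice of solution must be handled with care: \eqref{ODEForh} has a two-dimensional solution space spanned by an increasing solution $h_\uparrow$ and a decreasing solution $h_\downarrow$; the $s>u$ (downward) passage selects $h_\downarrow$, singled out by the requirement $\lim_{s\to\infty}h(s)=\mathrm{const}$ in \eqref{ConditionsOnh(x)2}, while the $s<l$ (upward) passage selects $h_\uparrow$. Thus $h$ in the two outer regions denotes the appropriate fundamental solution, and the value-matching $h=g$ on $[l^*,u^*]$ together with the behaviour at infinity is what subsequently fixes the optimal levels $l^*,u^*$, consistently with Theorem~\ref{smoothTheorem}.

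The main obstacle is upgrading the local martingale $M_t$ to a genuine martingale and justifying the optional-stopping and limit passage when $\omega$ is allowed to be negative. The key inputs are Assumption~\ref{A6} together with the standing finiteness of the value function: $\omega\ge -c_0$ gives $e^{-\int_0^t\omega(S_w)dw}\le e^{c_0 t}$, and the finiteness of $\mathbb{E}_s[e^{-\int_0^{\tau_a}\omega(S_w)dw}]$, which is precisely what makes $v^{\omega}_{\text{\rm A}^{\text{\rm Put}}}$ finite under Assumptions (A), supplies the integrability needed to pass to the limit and to discard the contribution at infinity in the possibly defective downward passage; one also checks $h(a)>0$ so the ratios are well defined. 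As a cross-check, the same formula drops out of Theorem~\ref{mainTheoremGeometricLevy} by setting $\Pi=0$: the double-integral term vanishes, $\mathscr{H}^{(\omega)}$ collapses to $h_\uparrow$ and the $\lim_{\alpha\to\infty}$ term collapses to $h_\downarrow$, both solving \eqref{ODEForh}; but the direct continuity argument above is more transparent and avoids computing the omega scale functions explicitly for Brownian motion.
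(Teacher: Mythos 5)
Your proposal is correct and follows essentially the same route as the paper: path continuity reduces $\tau_{l,u}$ to a one-sided hitting time in each of the three regions, and the ratio $h(s)/h(a)$ comes from the local martingale $e^{-\int_0^t\omega(S_w)dw}h(S_t)$ (equivalently, the exponential martingale of \cite{exponentialMartingale} with $\omega=\mathcal{A}h/h$) via optional stopping at $\tau_{l,u}\wedge M$ and a passage to the limit in $M$. The only substantive difference is in how that limit is justified: where you appeal generically to the lower bound on $\omega$ and finiteness of the value function, the paper dominates the truncated functional by $e^{-\min_s\omega(s)\,\tau_{\text{last}}(K)}$ for the last exit time $\tau_{\text{last}}(K)=\sup\{t\geq0:S_t\leq K\}$ and closes the argument with Cauchy--Schwarz and the finiteness of $\mathbb{E}\bigl[e^{-2\min_s\omega(s)\,\tau_{\text{last}}(K)}\bigr]$ from \cite{eriklast}, before noting that the boundedness assumption on $h$ can be dropped since $S_{\tau_{l,u}}\in\{l,u\}$.
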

The proof of the above theorem is given
in Section \ref{sec:proofs}.

\begin{remark}\label{remarkSmooth}\rm
The optimal boundaries $l^*$ and $u^*$ can be found from the smooth fit property
given in Theorem \ref{smoothTheorem}.
\end{remark}

\subsection{Exponential crashes market}\label{BSModelExpoJump}
We can construct more explicit equation for the value function for the case of Black–Scholes model with additional downward exponential jumps, that is, as in \eqref{spectrallyNegativeSt}, $S_t=e^{X_t}$
for
\begin{equation}\label{pertprocess}
X_t = \log s + \zeta t + \sigma B_t - \sum_{i=1}^{N_t} Y_i,
\end{equation}
where $\zeta = \mu-\frac{\sigma^2}{2}$, $N_t$ is the Poisson process with intensity $\lambda >0$ independent of Brownian motion $B_t$ and
$\{Y_i\}_{i\in \mathbb{N}}$ are i.i.d. random variables independent of $B_t$ and $N_t$ having exponential distribution with mean
$1/\varphi>0$. Moreover, under the martingale measure we obtain that $\mu = r + \frac{\lambda}{\varphi+1}$ with $r$ being a risk-free interest rate. The Laplace exponent of $X_t$ starting at $0$ is as follows
\begin{equation}\label{CLLaplace}
\psi(\theta)=\zeta\theta +\frac{\sigma^2}{2}\theta^2 - \frac{\lambda\theta}{\varphi+\theta}.
\end{equation}

For this model the price of American put option is easier to determine.
\begin{theorem}\label{mainTheoremBSModelExpojump0}
Assume that $\omega$ is a nonnegative, concave and non-decreasing function. For geometric L\'evy model \eqref{spectrallyNegativeSt} with $X_t$ given in \eqref{pertprocess} we have $l^*=0$. Furthermore,\\
(i) if $\sigma=0$ then
\begin{align}
V^{\omega}_{\text{\rm A}^{\text{\rm Put}}}(s):=\sup_{u>0}v^{\omega}_{\text{\rm A}^{\text{\rm Put}}}(s, 0, u)= \sup_{u>0}\bigg\{\left(K -
\frac{u\varphi}{\varphi+1}\right)\left(\mathscr{Z}^{(\omega_u)}\left(\frac{s}{u}\right)- c_{\mathscr{Z}^{(\omega_u)}/\mathscr{W}^{(\omega_u)}} \mathscr{W}^{(\omega_u)}\left(\frac{s}{u}\right)\right)\bigg\}
\label{valueexpojumps},
\end{align}
where $\mathscr{W}^{(\omega_u)}\left(\frac{s}{u}\right)$ and $\mathscr{Z}^{(\omega_u)}\left(\frac{s}{u}\right)$
are given in \eqref{scaleFunctionW2} and \eqref{scaleFunctionZ2}, respectively,
and
\begin{equation}\label{climitmath2}
\begin{aligned}
c_{\mathscr{Z}^{(\omega_u)}/\mathscr{W}^{(\omega_u)}} &:= \lim_{z\rightarrow\infty} \frac{\mathscr{Z}^{(\omega_u)}(z)}{\mathscr{W}^{(\omega_u)}(z)}.
\end{aligned}
\end{equation}
The optimal boundary $u^*$ is determined by the continuous fit condition
\begin{equation}\label{continuousexpo}
V^{\omega}_{\text{\rm A}^{\text{\rm Put}}}(u^*) = K-u^{*}.
\end{equation}
\\
(ii) If $\sigma>0$ then
\begin{equation}\label{valueexpojumps2}
\begin{aligned}
&V^{\omega}_{\text{\rm A}^{\text{\rm Put}}}(s) := \sup_{u>0}v^{\omega}_{\text{\rm A}^{\text{\rm Put}}}(s, 0, u) = \sup_{u>0}\bigg\{\left(K -
\frac{u \varphi}{\varphi+1}\right)\left(\mathscr{Z}^{(\omega_u)}\left(\frac{s}{u}\right)- c_{\mathscr{Z}^{(\omega_u)}/\mathscr{W}^{(\omega_u)}} \mathscr{W}^{(\omega_u)}\left(\frac{s}{u}\right)\right)
\\ &+ (K-u)
\left(\lim_{\alpha\rightarrow\infty} \left(\frac{s}{u}\right)^{\alpha}\left(\mathscr{Z}^{(\omega_u^\alpha)}_{\alpha}\left(\frac{s}{u}\right) -
c_{\mathscr{Z}^{(\omega_u^\alpha)}_{\alpha}/\mathscr{W}^{(\omega_u^\alpha)}_{\alpha}}
\mathscr{W}^{(\omega_u^\alpha)}_{\alpha}\left(\frac{s}{u}\right)\right)\right)\bigg\},
\end{aligned}
\end{equation}
where $\mathscr{W}^{(\omega_u^{\alpha})}_{\alpha}\left(\frac{s}{u}\right)$ and $\mathscr{Z}^{(\omega_u^{\alpha})}_{\alpha}\left(\frac{s}{u}\right)$
are the scale functions \eqref{scaleFunctionW2} and \eqref{scaleFunctionZ2} taken under measure $\mathbb{P}^{(\alpha)}$
and
\begin{equation*}\label{climitmath2Alpha}
\begin{aligned}
c_{{\mathscr{Z}^{(\omega_u^{\alpha})}_{\alpha}}/{\mathscr{W}^{(\omega_u^{\alpha})}_{\alpha}}} &:= \lim_{z\rightarrow\infty} \frac{\mathscr{Z}^{(\omega_u^{\alpha})}_{\alpha}(z)}{\mathscr{W}^{(\omega_u^{\alpha})}_{\alpha}(z)}.
\end{aligned}
\end{equation*}
The optimal boundary $u^*$ is determined by the smooth fit condition
\begin{equation*}\label{smoothexpo}
(V^{\omega}_{\text{\rm A}^{\text{\rm Put}}})'(u^*) = -1.
\end{equation*}
\end{theorem}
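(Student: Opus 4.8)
The plan is to obtain this statement as a specialisation of the general geometric L\'evy result, Theorem~\ref{mainTheoremGeometricLevy}, followed by the maximisation in Theorem~\ref{lemmaAboutOptimalStoppingRule} and the fit conditions coming from Theorem~\ref{smoothTheorem} and Remark~\ref{smoothvalue}. First I would record that all the structural hypotheses are in force: since $\omega$ is concave and non-decreasing, Remark~\ref{pierwszauwaga} gives convexity of $V^\omega_{\mathrm{A}^{\mathrm{Put}}}$, so by Theorem~\ref{lemmaAboutOptimalStoppingRule} the stopping region is an interval $[l^*,u^*]$ and $V^\omega_{\mathrm{A}^{\mathrm{Put}}}(s)=\sup_{0\le l\le u\le K}v^\omega_{\mathrm{A}^{\mathrm{Put}}}(s,l,u)$. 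Because $\omega\ge 0$, the discounting can never make waiting profitable at small asset values, so (as noted right after Theorem~\ref{lemmaAboutOptimalStoppingRule}) $l^*=0$ and the continuation region is one-sided. This reduces the problem to maximising $u\mapsto v^\omega_{\mathrm{A}^{\mathrm{Put}}}(s,0,u)$, for which the $l=0$ branch of Theorem~\ref{mainTheoremGeometricLevy} applies without the auxiliary assumption \eqref{etaassump}.

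Next I would evaluate the $l=0$ formula of Theorem~\ref{mainTheoremGeometricLevy} for the exponential-jump L\'evy measure $\Pi(dx)=\lambda\varphi e^{\varphi x}\,dx$ on $(-\infty,0)$ associated with \eqref{pertprocess} and \eqref{CLLaplace}. With $l^*=0$ the exercise time is the first passage of $S_t$ below $u$, equivalently the first passage of $X_t$ below $\log u$; for a spectrally negative process this happens either by a downward jump, with a strictly positive overshoot, or, when $\sigma>0$, by creeping continuously onto the level. The decisive simplification is the memoryless property of the exponential jumps: conditionally on crossing $\log u$ by a jump, the log-overshoot is again $\mathrm{Exp}(\varphi)$ and independent of the pre-passage trajectory, so the discounted payoff collected on a jump passage factorises, the overshoot integral producing the constant $\int_0^\infty (K-ue^{-y})\varphi e^{-\varphi y}\,dy=K-\tfrac{u\varphi}{\varphi+1}$ in front of the $\omega_u$-discounted jump-passage quantity. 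On a creeping passage the process lands exactly at $u$ and the payoff is $K-u$. Concretely, this is what collapses the double integral $\int_0^\infty\int_0^\infty(K-e^{\log u-y})r(s,u,z)\Pi(-z-dy)\,dz$ of Theorem~\ref{mainTheoremGeometricLevy} into the single product $\bigl(K-\tfrac{u\varphi}{\varphi+1}\bigr)\bigl(\mathscr Z^{(\omega_u)}(s/u)-c_{\mathscr Z^{(\omega_u)}/\mathscr W^{(\omega_u)}}\mathscr W^{(\omega_u)}(s/u)\bigr)$, while the creeping contribution is precisely the surviving $\lim_{\alpha\to\infty}$ term. Here the identification of the $\omega_u$-discounted jump-passage with $\mathscr Z^{(\omega_u)}-c\,\mathscr W^{(\omega_u)}$ is read off from the omega scale-function identities of \cite{LiBo}.

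The two cases then separate according to whether the process can creep. If $\sigma=0$ there is no Gaussian component, downward passage is always by a jump, the creeping term vanishes, and only the first product remains, giving \eqref{valueexpojumps}; if $\sigma>0$ both the jump and the creeping contributions are present, giving \eqref{valueexpojumps2}. In either case I would then take the supremum over $u>0$, which by Theorem~\ref{lemmaAboutOptimalStoppingRule} equals $V^\omega_{\mathrm{A}^{\mathrm{Put}}}(s)$. To pin down the optimal boundary $u^*$ I would invoke Theorem~\ref{smoothTheorem} together with Remark~\ref{smoothvalue}: when $\sigma>0$ the point $1$ is regular for both $(0,1)$ and $(1,+\infty)$ by \cite[Prop.~7]{alili}, the value function lies in $C^1$ (indeed $C^2$), so smooth fit holds and $u^*$ solves $(V^\omega_{\mathrm{A}^{\mathrm{Put}}})'(u^*)=g'(u^*)=-1$; when $\sigma=0$ regularity may fail and one only has the continuous fit $V^\omega_{\mathrm{A}^{\mathrm{Put}}}(u^*)=K-u^*$, which then determines $u^*$.

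I expect the main obstacle to be the second step: rigorously justifying the factorisation of the overshoot integral and matching each piece to the omega scale functions. This requires care in applying the compensation formula to the resolvent density $r(s,u,z)$ of the $\omega_u$-killed process, in using memorylessness to separate the overshoot expectation from the occupation measure, and in correctly splitting the total first-passage into its jump and creeping parts so that the creeping term drops out exactly when $\sigma=0$. The remaining pieces, namely the reduction $l^*=0$, the maximisation over $u$, and the continuous/smooth fit dichotomy, follow directly from the results already proved, with the regularity input of \cite{alili} being the only external ingredient.
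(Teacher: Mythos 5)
Your proposal is correct and follows essentially the same route as the paper: reduce to $l^*=0$ via nonnegativity of $\omega$ and Theorem \ref{lemmaAboutOptimalStoppingRule}, use the memorylessness of the exponential overshoot to factor out $\mathbb{E}(K-ue^{-Y})^{+}=K-\tfrac{u\varphi}{\varphi+1}$ from the $\omega_u$-discounted first-passage term, add the creeping contribution \eqref{formulaWithLimit2} only when $\sigma>0$, and obtain $u^*$ from Theorem \ref{smoothTheorem}. The only cosmetic difference is that you propose to collapse the double integral in the $l=0$ branch of Theorem \ref{mainTheoremGeometricLevy}, while the paper re-runs the first-passage decomposition directly; the computation is the same.
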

The proof of the above theorem is given
in Section \ref{sec:proofs}.

From \eqref{changepsi} (see also \cite[Prop. 5.6]{exponentialMartingale}) we simply note that the
Laplace exponent of $X_t$ taken under $\mathbb{P}^{(\alpha)}$ is of the same form as \eqref{CLLaplace}, i.e.
\begin{equation}\label{CLLaplaceAlpha}
\psi^{(\alpha)}(\theta)=\zeta^{(\alpha)}\theta +\frac{{\sigma^{(\alpha)}}^2}{2}\theta^2 - \frac{\lambda^{(\alpha)}\theta}{\varphi^{(\alpha)}+\theta},
\end{equation}
where $\zeta^{(\alpha)} = \zeta + \sigma^2\alpha$,
${\sigma^{(\alpha)}} = \sigma$, $\lambda^{(\alpha)} = \frac{\lambda\varphi}{\varphi+\alpha}$ and $\varphi^{(\alpha)} = \varphi + \alpha$. Hence, finding the scale functions under $\mathbb{P}$ and $\mathbb{P}^{(\alpha)}$ works in the same manner. To do so,
we recall that in \eqref{scaleFunctionW2} and \eqref{scaleFunctionZ2} we introduced them via
regular $\xi$-scale functions, that is
$\mathscr{W}^{(\xi)}(s) = \mathcal{W}^{(\xi\circ {\rm exp})}(x)$ and
$\mathscr{Z}^{(\xi)}(s) = \mathcal{Z}^{(\xi\circ {\rm exp})}(x)$ for $x=\log s$. Therefore, to identify a closed form of \eqref{valueexpojumps} and \eqref{valueexpojumps2} it suffices to find $\xi$-scale functions $\mathcal{W}^{(\xi)}(x)$ and $\mathcal{Z}^{(\xi)}(x)$
for a given generic function $\xi$.
We recall that both $\xi$-scale functions are given as the solutions of renewal equations
\eqref{scaleFunctionW} and \eqref{scaleFunctionZ} formulated in terms of the classical scale function $W(x)$.
From the definition of the first scale function given in \eqref{scaleFunction}
with $q=0$ and from \eqref{CLLaplace} with $\sigma>0$ we derive
\begin{equation*}\label{exposcaale}
W(x)=\sum_{i=1}^3 \Upsilon_ie^{\gamma_i x},
\end{equation*}
where $\gamma_i$ solves
\begin{equation}\label{defgamma}
\psi(\gamma_i)=0
\end{equation}
and
\begin{equation*}\label{Bi}
\Upsilon_i:=\frac{1}{\varphi^\prime(\gamma_i)}.
\end{equation*}

Note that one of the solution to \eqref{defgamma} equals $0$, so we can set $\gamma_1 = 0$.
In turn, if $\sigma=0$ in \eqref{CLLaplace} then
\begin{equation*}\label{exposcaale2}
W(x)=\sum_{i=1}^2 \Upsilon_i e^{\gamma_i x}
\end{equation*}
with $\gamma_1 = 0$, $\gamma_2 = \frac{\lambda - \varphi\mu}{\mu}$, $\Upsilon_1 = -\frac{\varphi}{\lambda-\varphi\mu}$ and $\Upsilon_2 = \frac{\lambda}{\mu(\lambda-\varphi \mu)}$.
Next theorem provides the ordinary differential equations
whose solutions are $\mathcal{W}^{(\xi)}(x)$ and $\mathcal{Z}^{(\xi)}(x)$. We use this result in the next section where we provide specific examples.

\begin{theorem}\label{mainTheoremBSModelExpojump}
We assume that the function $\xi$ is continuously differentiable.
For geometric L\'evy model \eqref{spectrallyNegativeSt} with $X_t$ given in \eqref{pertprocess} we have\\
(i) If $\sigma=0$ then the function $\mathcal{W}^{(\xi)}(x)$ solves \\
\begin{equation}\label{scaleequation}
{\mathcal{W}^{(\xi)}}''(x) = \left((\Upsilon_1+\Upsilon_2)\xi(x)+\gamma_2\right){\mathcal{W}^{(\xi)}}'(x) +
\left((\Upsilon_1+\Upsilon_2)\xi'(x)-\gamma_2\Upsilon_1\xi(x)\right){\mathcal{W}^{(\xi)}}(x)
\end{equation}
with
\begin{equation}\label{fbW}
\begin{cases}
{\mathcal{W}^{(\xi)}}(0) = \Upsilon_1 + \Upsilon_2, \\
{\mathcal{W}^{(\xi)}}'(0) = (\Upsilon_1+\Upsilon_2)^2\xi(0) + \Upsilon_2\gamma_2.
\end{cases}
\end{equation}

Moreover, the function $\mathcal{Z}^{(\xi)}(x)$ solves the same equation \eqref{scaleequation} with
\begin{equation}\label{fbZ}
\begin{cases}
{\mathcal{Z}^{(\xi)}}(0) = 1, \\
{\mathcal{Z}^{(\xi)}}'(0) = (\Upsilon_1+\Upsilon_2)\xi(0).
\end{cases}
\end{equation}
\\
(ii) If $\sigma>0$ then
the function $\mathcal{W}^{(\xi)}(x)$ solves\\
\begin{equation}\label{scaleequation2}
\begin{split}
{\mathcal{W}^{(\xi)}}'''(x) &= \left(\gamma_2+\gamma_3\right){\mathcal{W}^{(\xi)}}''(x) \\&+ \left(\Upsilon_2(\gamma_2-\gamma_3)\xi(x)-\gamma_2\gamma_3-\gamma_3\Upsilon_1\xi(x)\right){\mathcal{W}^{(\xi)}}'(x) \\&+ \left(\Upsilon_2(\gamma_2-\gamma_3)\xi'(x)+\gamma_2\gamma_3 \Upsilon_1 \xi(x)-\gamma_3 \Upsilon_1\xi'(x)\right){\mathcal{W}^{(\xi)}}(x)
\end{split}
\end{equation}
with
\begin{equation*}\label{fbW2}
\begin{cases}
{\mathcal{W}^{(\xi)}}(0) = 0, \\
{\mathcal{W}^{(\xi)}}'(0) = \Upsilon_2\gamma_2 + \Upsilon_3\gamma_3, \\
{\mathcal{W}^{(\xi)}}''(0) = \Upsilon_2 {\gamma_2}^2 + \Upsilon_3{\gamma_3}^2.
\end{cases}
\end{equation*}

Moreover, the function $\mathcal{Z}^{(\xi)}(x)$ solves the same equation \eqref{scaleequation2} with
\begin{equation*}
\begin{cases}
{\mathcal{Z}^{(\xi)}}(0) = 1, \\
{\mathcal{Z}^{(\xi)}}'(0) = 0, \\
{\mathcal{Z}^{(\xi)}}''(0) = \Upsilon_2(\gamma_2-\gamma_3) \xi(0) - \gamma_3 \Upsilon_1 \xi(0).
\end{cases}
\end{equation*}

\end{theorem}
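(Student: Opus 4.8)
The plan is to exploit the fact that in this model the classical scale function $W$ is a finite sum of exponentials, hence is annihilated by a constant-coefficient linear differential operator whose characteristic roots are precisely the $\gamma_i$ solving $\psi(\gamma_i)=0$. Applying this operator to the renewal equations \eqref{scaleFunctionW} and \eqref{scaleFunctionZ} will collapse the convolution integral into local boundary terms, producing the claimed ODEs. Throughout, write $D=d/dx$ and rewrite \eqref{scaleFunctionW} as $\mathcal{W}^{(\xi)}(x)=W(x)+(W*h)(x)$, where $(W*h)(x):=\int_0^x W(x-y)h(y)\,dy$ and $h(y):=\xi(y)\mathcal{W}^{(\xi)}(y)$, and likewise \eqref{scaleFunctionZ} as $\mathcal{Z}^{(\xi)}(x)=1+(W*h_Z)(x)$ with $h_Z:=\xi\,\mathcal{Z}^{(\xi)}$.

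In the case $\sigma=0$ one has $W(x)=\Upsilon_1+\Upsilon_2 e^{\gamma_2 x}$ (recall $\gamma_1=0$), so $W$ solves $(D^2-\gamma_2 D)W=0$. First I would differentiate the convolution twice by Leibniz' rule, $(W*h)'=W(0)h+(W'*h)$ and $(W*h)''=W(0)h'+W'(0)h+(W''*h)$, and then apply $L:=D^2-\gamma_2 D$. Since $(LW)*h=0$, the convolution term drops out, leaving $L(W*h)=W(0)h'+(W'(0)-\gamma_2 W(0))h$; and because $LW=0$ too, $L\mathcal{W}^{(\xi)}=L(W*h)$. Substituting $W(0)=\Upsilon_1+\Upsilon_2$, $W'(0)=\Upsilon_2\gamma_2$, together with $h=\xi\mathcal{W}^{(\xi)}$ and $h'=\xi'\mathcal{W}^{(\xi)}+\xi(\mathcal{W}^{(\xi)})'$, and rearranging, yields exactly \eqref{scaleequation}. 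The initial values \eqref{fbW} follow from $\mathcal{W}^{(\xi)}(0)=W(0)$ (the convolution vanishes at $0$) and $(\mathcal{W}^{(\xi)})'(0)=W'(0)+W(0)\xi(0)\mathcal{W}^{(\xi)}(0)$. As $L$ also kills the constant $1$ (the factor $\gamma_1=0$ makes $D$ divide $L$), the function $\mathcal{Z}^{(\xi)}$ obeys the same ODE, and \eqref{fbZ} comes from the analogous evaluation at $0$.

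The case $\sigma>0$ is identical in spirit, now with $W(x)=\sum_{i=1}^3 \Upsilon_i e^{\gamma_i x}$ annihilated by the third-order operator $L:=D^3-(\gamma_2+\gamma_3)D^2+\gamma_2\gamma_3 D$. Differentiating the convolution three times produces the boundary terms $W(0)h''+(W'(0)-(\gamma_2+\gamma_3)W(0))h'+(W''(0)-(\gamma_2+\gamma_3)W'(0)+\gamma_2\gamma_3 W(0))h$. The crucial simplification, which I expect to be the main obstacle, is that for a spectrally negative L\'evy process with a Gaussian component one has $W(0)=\sum_{i=1}^3\Upsilon_i=0$; this is exactly what eliminates the $h''$ (hence the $\xi''$) term and is responsible for \eqref{scaleequation2} being driven only by $\xi$ and $\xi'$. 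The remaining coefficients then simplify using $W'(0)=\Upsilon_2\gamma_2+\Upsilon_3\gamma_3$, $W''(0)=\Upsilon_2\gamma_2^2+\Upsilon_3\gamma_3^2$ and $\Upsilon_1+\Upsilon_2+\Upsilon_3=0$, after which the substitution $h=\xi\mathcal{W}^{(\xi)}$ gives \eqref{scaleequation2}.

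Finally, the three initial conditions for $\mathcal{W}^{(\xi)}$ follow by evaluating the function and its first two derivatives at $0$, where $W(0)=0$ annihilates the lower-order boundary contributions and leaves $\mathcal{W}^{(\xi)}(0)=0$, $(\mathcal{W}^{(\xi)})'(0)=W'(0)$ and $(\mathcal{W}^{(\xi)})''(0)=W''(0)$. The corresponding statement for $\mathcal{Z}^{(\xi)}$ again uses that $L$ kills the constant $1$, so $\mathcal{Z}^{(\xi)}$ solves the same ODE, with its second-derivative initial value emerging as $W'(0)\xi(0)$ and simplifying via $\Upsilon_1+\Upsilon_2+\Upsilon_3=0$ to the stated form. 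The only genuine subtlety beyond bookkeeping is justifying the differentiations under the integral sign, which is immediate here since $\xi$ is assumed continuously differentiable and $W$ is smooth.
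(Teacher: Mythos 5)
Your proposal is correct, and the computations check out: for $\sigma=0$ the operator $L=D^2-\gamma_2D$ annihilates $W$ and the constant $1$, and $L(W*h)=W(0)h'+(W'(0)-\gamma_2W(0))h$ reproduces \eqref{scaleequation} with the right initial data; for $\sigma>0$ the third-order annihilator together with $W(0)=\Upsilon_1+\Upsilon_2+\Upsilon_3=0$ gives $L\mathcal{W}^{(\xi)}=(\Upsilon_2\gamma_2+\Upsilon_3\gamma_3)h'+\gamma_2\gamma_3\Upsilon_1 h$, which after the identity $\Upsilon_2(\gamma_2-\gamma_3)-\gamma_3\Upsilon_1=\Upsilon_2\gamma_2+\Upsilon_3\gamma_3$ is exactly \eqref{scaleequation2}, and the initial conditions for both $\mathcal{W}^{(\xi)}$ and $\mathcal{Z}^{(\xi)}$ match. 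The underlying idea is the same as the paper's --- both exploit that $W$ is a finite sum of exponentials so that the renewal equation can be reduced to a linear ODE of order equal to the number of exponentials --- but the execution differs: the paper differentiates the renewal equation repeatedly, at each stage solving the previous identity for one convolution integral and substituting it back, which is an iterative elimination spread over several displays; you instead apply the constant-coefficient annihilator $L$ once and use the Leibniz rule for convolutions, so the convolution term vanishes in a single step via $(LW)*h=0$ and only the boundary terms $W(0),W'(0),W''(0)$ survive. Your packaging makes the structural points transparent (why the ODE order equals the number of roots of $\psi$, and why $W(0)=0$ in the unbounded-variation case is precisely what prevents a $\xi''$ term from appearing), and it generalises immediately to any model whose scale function is a finite exponential sum; the paper's route has the minor advantage of producing the intermediate first- and second-order integro-differential identities that it then reuses to read off the initial conditions. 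Either way the justification of differentiating under the integral is immediate from the assumed $C^1$ regularity of $\xi$ and smoothness of $W$, as you note.
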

The proof of the above theorem is given
in Section \ref{sec:proofs}.

\begin{remark}\rm
Note that in our case $l^*=0$ and from Theorem \ref{mainTheoremBSModelExpojump} it follows that
assumption \eqref{etaassump} is not required.
\end{remark}


\section{Examples}\label{sec:examples}

In this section we present the analytical form of value function \eqref{valueFunctionOmegaAmericanPutOption} for the particular $\omega$ and for the Black-Scholes model and Black-Scholes model with zero volatility and downward exponential jumps. In the first scenario, we take into account only the case of negative $\omega$, while in the second example we focus on the positive $\omega$.

\subsection{Black-Scholes model revisited}
Let
\[\omega(s)= -\frac{C}{s+1} - D,\]
where $C$ and $D$ are some positive constants. Applying Theorem \ref{mainTheoremBSModel}, we obtain
\begin{equation*}\label{valueFunctionBSPuts^a}
v^{\omega}_{\text{\rm A}^{\text{\rm Put}}}(s, l, u) = \frac{h(s)}{h(l)}(K-l)\mathds{1}_{\{s\in(0, l)\}} + (K-s)\mathds{1}_{\{s\in[l, u]\}} + \frac{h(s)}{h(u)}(K-u)\mathds{1}_{\{s\in(u, +\infty)\}},
\end{equation*}
where $h$ is a solution to
\begin{equation}\label{ODEs^a}
\frac{\sigma^2 s^2}{2}h^{\prime\prime}(s) + \mu s h^\prime (s) - \left(-\frac{C}{s+1} - D\right) h(s) = 0
\end{equation}
which satisfies
\begin{equation}\label{boundaryBS}
\begin{cases}
h(s) = g(s), \quad s \in [l^{*}, u^{*}],\\
\displaystyle\lim_{s\rightarrow\infty}h(s) = \text{const}\;.\\
\end{cases}
\end{equation}
Firstly, we solve the above equation and then
we look for boundaries $l^{*}$ and $u^{*}$ such that  $v^{\omega}_{\text{\rm A}^{\text{\rm Put}}}(s, l^{*}, u^{*}) = \sup_{0\leq l\leq u\leq K} v^{\omega}_{\text{\rm A}^{\text{\rm Put}}}(s, l, u)$. We find them using the smooth fit conditions.
The general solution to \eqref{ODEs^a} is given by
\begin{equation}\label{solutionBSPut}
\begin{aligned}
h(s) = K_1 s^{d_1}{}_2F_1(a_1; b_1; c_1; -s) + K_2 s^{d_2} {}_2F_1(a_2; b_2; c_2; -s),
\end{aligned}
\end{equation}
where
$L := \frac{1}{2} - \frac{\mu}{\sigma^2}$, $M := \sqrt{L^2 - \frac{2D}{\sigma^2}}$, $G := \sqrt{L^2 - \frac{2(C+D)}{\sigma^2}}$, while $a_i := (-1)^{i+1}(M-G)$, $b_i := (-1)^i(M+G)$, $c_i := 1+2(-1)^i G$, $d_i := (-1)^iG+L$ for $i = 1, 2$ and $K_1$, $K_2$ are some constants.

Using formula \eqref{solutionBSPut} and the boundary conditions given in \eqref{boundaryBS}
we can identify the form of value function \eqref{valueFunctionOmegaAmericanPutOption}.
Since we consider the negative $\omega$ we obtain a double continuation region. We take one of the summand from \eqref{solutionBSPut} for $s\in(0, l^{*})$ and the second one for $s\in(u^{*}, +\infty)$. This choice is made in a such a way
that on the given interval we impose to have a greater function of these two.
Hence we derive
\begin{equation*}
V^{\omega}_{\text{\rm A}^{\text{\rm Put}}}(s) =
\begin{cases}
K_2 s^{d_2} {}_2F_1(a_2; b_2; c_2; -s), &s\in(0, l^{*}), \\
K-s, &s\in[l^{*}, u^{*}], \\
K_1 s^{d_1}
{}_2F_1(a_1; b_1; c_1; -s), &s\in(u^{*}, +\infty).
\end{cases}
\end{equation*}
Using the smooth and continuous fit properties we can find $K_1$ and $K_2$ and
show that $l^{*}$ and $u^{*}$ solve the following equation
\begin{align}\label{lAnduRoots}
1 + {}_2F_1(a_i; b_i; c_i; -s) K_i D_i + s^{d_i} P_i = 0,
\end{align}
where
\begin{align*}
K_i &:= (K-s) \frac{s^{-d_i}}{{}_2F_1(a_i; b_i; c_i; -s)}, \\
D_i &:=  d_i s^{d_i - 1}, \\
P_i &:= -\frac{a_i b_i {}_2F_1(a_i+1; b_i+1; c_i+1; -s)}{c_i}
\end{align*}
for $i=1, 2$.
We numerically calculate the roots of \eqref{lAnduRoots} for $i=1, 2$ and we assign the smaller result to $l^{*}$ and the greater one to $u^{*}$.

Let us assume the given set of parameters $C = 0.001$, $D=0.01$, $K=20$, $\mu=5\%$ and $\sigma=20\%$.
The above numerical procedure produces in this case $l^{*}\approx 7.23$ and $u^{*}\approx 8.34$. Figure \ref{Example-BS} presents the value function that then arises.

\begin{figure}[ht]
\centering
\includegraphics[width=13cm]{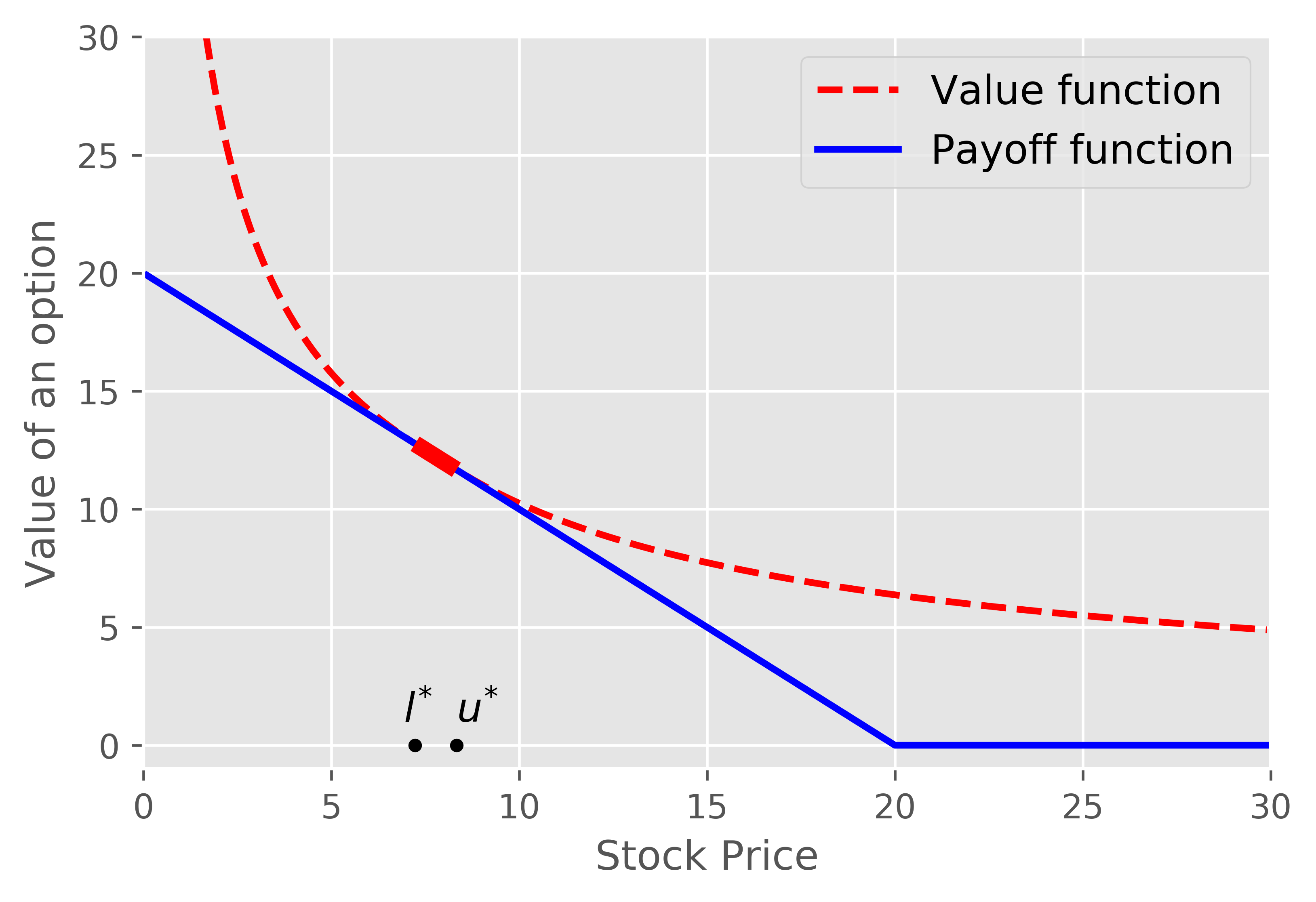}
\caption{The value and payoff functions for the given set of parameters: $C = 0.001$, $D=0.01$, $K=20$, $\mu=5\%$ and $\sigma=20\%$.}
\label{Example-BS}
\end{figure}

\begin{remark}\rm
Let us note that $\lim_{s\rightarrow 0^{+}} V^{\omega}_{\text{\rm A}^{\text{\rm Put}}}(s) = \infty$ which means that the price of the option is unlimited even for an arbitrarily low stock price. This is obviously a consequence of the fact that for $s\rightarrow 0^{+}$ the discount function is strictly negative.
\end{remark}

\subsection{Exponential crashes market revisited}
Let
\[\omega(s)= Cs,\]
where $C$ is some positive constant.
Note that this discount function is nonegative. Hence
from Theorem \ref{mainTheoremBSModelExpojump0}
it follows that $l^*=0$.
Let
\begin{align}\label{eta2}
&\eta(x) := \omega(e^x) = \omega(s) \quad\text{and}\quad \eta_u(x):=\eta(x+\log u).
\end{align}
From Theorem \ref{mainTheoremBSModelExpojump0} (see equation \eqref{valueexpojumps})
with $\sigma = 0$ and using \eqref{scaleFunctionW2} and \eqref{scaleFunctionZ2}
we can conclude that
\begin{equation*}
V^{\omega}_{\text{\rm A}^{\text{\rm Put}}}(s)=\sup_{u>0} v^{\omega}_{\text{\rm A}^{\text{\rm Put}}}(s, 0, u),
\end{equation*}
where
\begin{align*}\label{ValueFunctionBSWithJumps2}
v^{\omega}_{\text{\rm A}^{\text{\rm Put}}}(s, 0, u)= \left(K -
\frac{u\varphi}{\varphi+1}\right)\left(\mathcal{Z}^{(\eta_u)}\left(x-\log u\right)- c_{\mathcal{Z}^{(\eta_u)}/\mathcal{W}^{(\eta_u)}} \mathcal{W}^{(\eta_u)}\left(x-\log u\right)\right)
\end{align*}
and from \eqref{climitmath2}
\begin{equation*}
\begin{aligned}
c_{\mathscr{Z}^{(\omega_u)}/\mathscr{W}^{(\omega_u)}} &= c_{\mathcal{Z}^{(\eta_u)}/\mathcal{W}^{(\eta_u)}}:=\lim_{z\rightarrow\infty} \frac{\mathcal{Z}^{(\eta_u)}(z)}{\mathcal{W}^{(\eta_u)}(z)}.
\end{aligned}
\end{equation*}

From Theorem \ref{mainTheoremBSModelExpojump} it follows that ${\mathcal{W}^{(\eta)}}(x)$ solves
the following ordinary differential equation
\begin{equation}\label{ODE_BSwithJumps}
{\mathcal{W}^{(\eta)}}''(x) = (Ae^x+B){\mathcal{W}^{(\eta)}}'(x) + D e^x \mathcal{W}^{(\eta)}(x),
\end{equation}
with $A := \frac{C}{\mu}$, $B := \frac{\lambda-\varphi \mu}{\mu}$ and $D := C\frac{1 + \varphi}{\mu} $.
The above equation is also satisfied by ${\mathcal{Z}^{(\eta)}}(x)$.

From \eqref{fbW} and \eqref{fbZ} we conclude that
\begin{equation}\label{initialW}
\begin{cases}
{\mathcal{W}^{(\eta)}}(0) = \frac{1}{\mu}, \\
{\mathcal{W}^{(\eta)}}'(0) = \frac{C+\lambda}{\mu^2}.
\end{cases}
\end{equation}
and
\begin{equation}\label{initialZ}
\begin{cases}
{\mathcal{Z}^{(\eta)}}(0) = 1, \\
{\mathcal{Z}^{(\eta)}}'(0) = \frac{C}{\mu}.
\end{cases}
\end{equation}

The solution to \eqref{ODE_BSwithJumps} is of the form
\begin{equation*}\label{V(s)ForSigma0}
{\mathcal{W}^{(\eta)}}(x) = {\rm Re}\left\{K_1^W {_1F_1}\left(a_1; b_1; A e^x\right) + K_2^W (-A)^B e^{Bx} {_1F_1}\left(a_2; b_2; A e^x\right)\right\},
\end{equation*}
where $a_1 = \frac{D}{A}$,  $b_1 = 1 - B$, $a_2 = B + \frac{D}{A}$ and $b_2 = B + 1$,
while $K_1^W$ and $K_2^W$ are the constants that can be found based on the initial conditions \eqref{initialW}. A similar expression can be found for ${\mathcal{Z}^{(\eta)}}(x)$ with constants $K_1^Z$ and $K_2^Z$ satisfying \eqref{initialZ}.
Having the both forms of ${\mathcal{W}^{(\eta)}}(x)$ and ${\mathcal{Z}^{(\eta)}}(x)$ and by shifting these scale functions by $\log u$
we can produce figures of
$\mathcal{W}^{(\eta_u)}(x - \log u)$ and
$\mathcal{Z}^{(\eta_u)}(x - \log u)$.
Taking into account the asymptotic behaviour of ${_1F_1}\left(\cdot; \cdot; \cdot\right)$ we can calculate the constant $c_{\mathcal{Z}^{(\eta_u)}/\mathcal{W}^{(\eta_u)}}$. It has the following form
\begin{equation*}
c_{\mathcal{Z}^{(\eta_u)}/\mathcal{W}^{(\eta_u)}} = \frac{{\rm Re}\left\{K_1^Z \frac{\Gamma(b_1)}{\Gamma(a_1)}A^{a_1-b_1} + K_2^Z (-A)^B \frac{\Gamma(b_2)}{\Gamma(a_2)} A^{a_2-b_2}\right\}}{{\rm Re}\left\{K_1^W \frac{\Gamma(b_1)}{\Gamma(a_1)}A^{a_1-b_1} + K_2^W (-A)^B \frac{\Gamma(b_2)}{\Gamma(a_2)} A^{a_2-b_2}\right\}}.
\end{equation*}
Finally, using the continuous fit condition \eqref{continuousexpo}
we can find the optimal $u^{*}$.

Let us assume that $C = 0.1$, $K = 20$, $r = 5\%$, $\lambda = 6$, $\varphi = 2$.
The continuous fit property produces $u^{*}\approx 4.56$.
Figure \ref{Example-sigma0} shows the obtained value function.

\begin{figure}[ht]
\centering
\includegraphics[width=13cm]{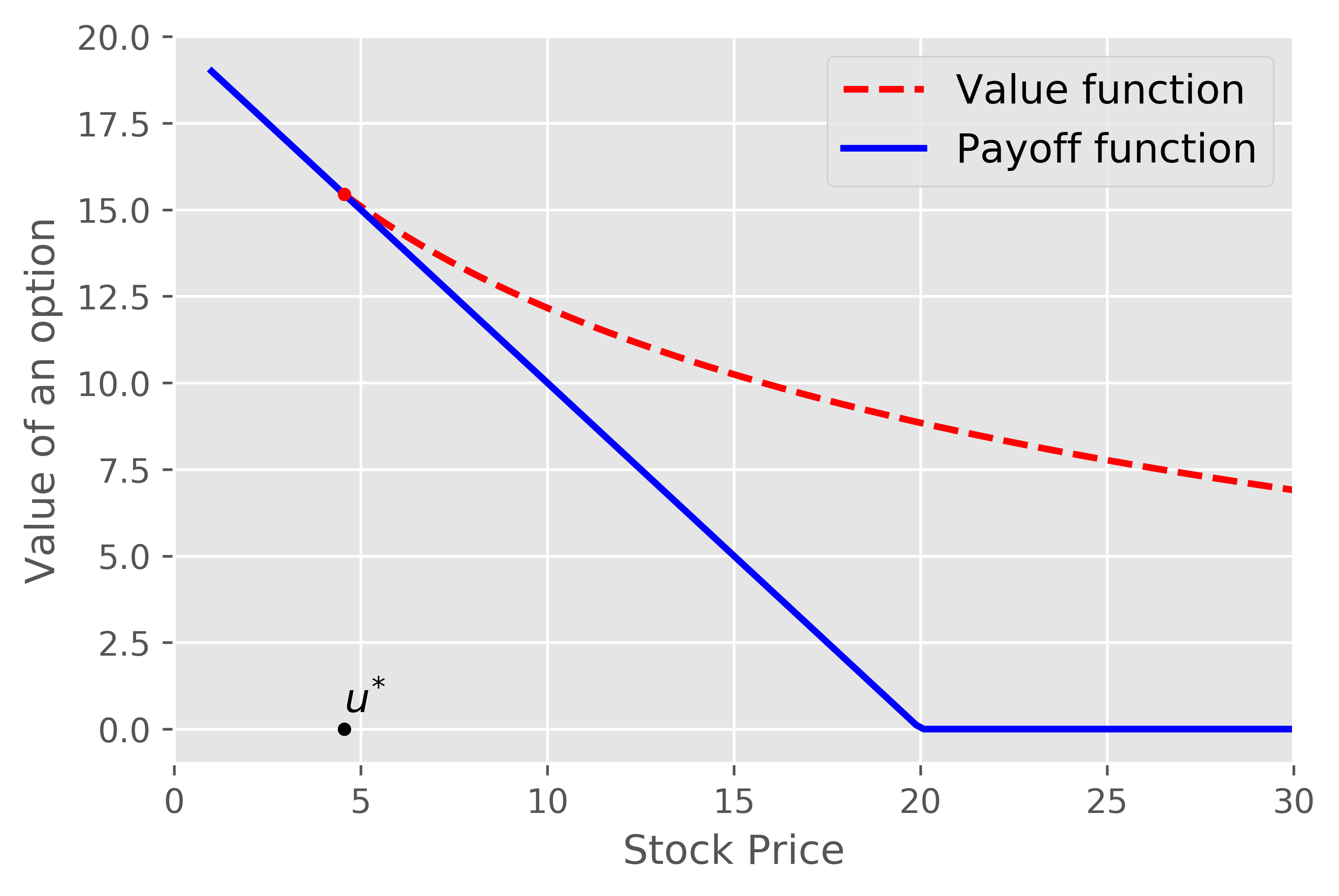}
\caption{The value and payoff functions for the given set of parameters: $C = 0.1$, $K = 20$, $r = 5\%$, $\lambda = 6$, $\varphi = 2$.}
\label{Example-sigma0}
\end{figure}

\begin{remark}\rm
In the example above we have provided the analytical form of $V^{\omega}_{\text{\rm A}^{\text{\rm Put}}}(s)$.
However, if we choose a more complex $\omega$ function and we are not be able to obtain the analytical form of $v^{\omega}_{\text{\rm A}^{\text{\rm Put}}}(s, 0,u)$ by solving a differential equation, we can try to solve it numerically to receive ${\mathcal{W}^{(\eta)}}(x)$, ${\mathcal{Z}^{(\eta)}}(x)$ and $c_{\mathcal{Z}^{(\eta_u)}/\mathcal{W}^{(\eta_u)}}$. In this example, we used both of these methods and we obtained the same result.
\end{remark}

\section{Proofs}\label{sec:proofs}
Before we prove main Theorem \ref{mainTheoremAmericanOptionConvexity} we show the convexity of European option price
$V^{\omega}_{\text{\rm E}}(s,t)$ defined in \eqref{valueFunctionOmegaEuropeanOption} as a function of $s$.
It is done in Theorems \ref{TheoremAboutConvexityOfValueFunctionDiffusionWithJumps} and \ref{TheoremAboutConvexityOfValueFunctionDiffusionWithJumps2}.
In the proof we apply idea demonstrated in \cite[Prop. 4.1]{propertiesOfOptionPrices}.
Later, in the proof of Theorem \ref{mainTheoremAmericanOptionConvexity}, we use a variant of the maximum principle.

Let us introduce a set $E\subset\mathbb{R}\times[0,T]$.
We use the following notations
\begin{itemize}
\item $C_{\alpha}(E)$ is the set of locally
H\"older($\alpha$) functions with $\alpha\in(0, 1)$,

\item $C_{\text{pol}}(E)$ is the set of functions of at most polynomial growth in $s$,

\item $C^{p, q}(E)$ is the set of functions for which all the derivatives $\frac{\partial^k}{\partial s^k}\left(\frac{\partial^l f(s,t)}{\partial t^l}\right)$ with $|k|+2l\leq p$ and $0\leq l\leq q$ exist in the interior of $E$ and have continuous extensions to $E$,

\item $C^{p, q}_{\alpha}(E)$ and $C^{p, q}_{\text{pol}}(E)$ are the sets of functions
$f\in C^{p, q}(E)$ for which all the derivatives $\frac{\partial^k}{\partial s^k}\left(\frac{\partial^l f(s,t)}{\partial t^l}\right)$ with $|k|+2l\leq p$ and $0\leq l\leq q$ belong to $C_{\text{pol}}(E)$ and $C_{\alpha}(E)$, respectively.
\end{itemize}
We need the following conditions in the proofs.
\medskip

{\bf Assumptions (B)}\\
{\it There exist constants $C > 0$ and $\alpha\in(0, 1)$ such that
\begin{enumerate}[label=(B\arabic*)]
    \item $\mu(s,t)\in C^{2,1}_{\alpha}(\mathbb{R}^{+}\times[0, T])$;
    \label{B1}
    \item $\sigma^2(s,t)\geq Cs^2$ for all $(s,t)\in\mathbb{R}^{+}\times[0,T]$;
    \label{B2}
    \item $\sigma(s,t)\in C^{2,1}_{\alpha}(\mathbb{R}^{+}\times[0,T])$;
    \label{B3}
    \item $\gamma(s,t,z)\in C^{2,1}_{\alpha}(\mathbb{R}^{+}\times[0,T])$ with the H\"older continuity being uniform in $z$;
    \label{B5}
	\item $|\omega(s)|\leq C$ for all $s\in\mathbb{R}^{+}$;
	\label{B6}
    \item $\omega(s)\in C^{2}_{\alpha}(\mathbb{R}^{+})$;
    \label{B7}
	\item $g(s)$ is Lipschitz continuous;
	\label{B8}
    \item $g(s)\in C^4_{\alpha}(\mathbb{R}^{+})$.
    \label{B9}
\end{enumerate}
}

\medskip

{\bf Assumptions (C)}\\
{\it There exist a constant $C > 0$ such that
\begin{enumerate}[label=(C\arabic*)]
    \item $|\frac{\partial\mu(s,t)}{\partial t}|\leq Cs$,
    $|\frac{\partial^2\mu(s,t)}{\partial s^2}|\leq \frac{C}{s}$ for all $(s,t)\in\mathbb{R}^{+}\times[0,T]$;
    \label{C1}
	\item $|\frac{\partial\sigma(s,t)}{\partial t}|\leq Cs$, $|\frac{\partial^2\sigma(s,t)}{\partial s^2}|\leq \frac{C}{s}$ for all $(s,t)\in\mathbb{R}^{+}\times[0,T]$;
	\label{C2}
    \item $|\frac{\partial\gamma(s,t,z)}{\partial t}|\leq Cs$,
    $|\frac{\partial^2\gamma(s,t,z)}{\partial s^2}|\leq\frac{C}{s}$ for all $(s,t,z)\in\mathbb{R}^{+}\times[0,T]\times\mathbb{R}$;
   	\label{C3}
	\item $|\frac{d\omega(s)}{ds}|\leq \frac{C}{s}$, $|\frac{d^2\omega(s)}{ds^2}|\leq \frac{C}{s^2}$ for all $s\in\mathbb{R}^{+}$;
	\label{C4}
    \item $g(s)\in C^3_{\text{pol}}(\mathbb{R}^{+})$.
   	\label{C5}
\end{enumerate}
}
\begin{theorem}\label{TheoremAboutConvexityOfValueFunctionDiffusionWithJumps}
Let all assumptions of Theorem \ref{mainTheoremAmericanOptionConvexity} be satisfied.
We assume additionally that conditions (B) and (C) 
hold true.
Then $V^{\omega}_{\text{\rm E}}(s, t)$ is convex with respect to $s$ at all times
$t\in[0,T]$.
\end{theorem}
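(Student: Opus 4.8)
The plan is to represent $V^{\omega}_{\text{\rm E}}(\cdot,t)$ from \eqref{valueFunctionOmegaEuropeanOption} as the solution of a backward parabolic integro-differential Cauchy problem and then run an integro-differential maximum principle on its second spatial derivative, following \cite[Prop. 4.1]{propertiesOfOptionPrices}. Writing $u(s,t):=V^{\omega}_{\text{\rm E}}(s,t)$, the Feynman--Kac formula identifies $u$ as the unique (viscosity) solution of
\[
\partial_t u+\mathcal{L}_t u-\omega u=0,\qquad u(\cdot,T)=g,
\]
where $\mathcal{L}_t f(s)=\mu(s,t)f'(s)+\tfrac12\sigma^2(s,t)f''(s)+\int_{\mathbb{R}}\big(f(s+\gamma(s,t,z))-f(s)-\gamma(s,t,z)f'(s)\big)m(dz)$ is the generator of $S_t$. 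First I would invoke the regularity arguments of \cite[Prop. 5.3]{pham98} and \cite[Lem. 3.1]{propertiesOfOptionPrices}: under Assumptions (B) (notably the uniform ellipticity \ref{B2} and the H\"older regularity of the coefficients) together with (C), the viscosity solution is in fact classical and smooth enough in $s$ to be differentiated twice, with all derivatives of at most polynomial growth. Since the payoffs of interest are nonnegative, $g\ge 0$ and the positivity of the discount factor give $u\ge 0$.

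Next I would set $w:=\partial_s^2 u$ and differentiate the Cauchy problem twice in $s$. A direct computation produces a linear PIDE of the form
\[
\partial_t w+\tfrac12\sigma^2\,\partial_s^2 w+\beta\,\partial_s w+c\,w+\int_{\mathbb{R}}(1+\gamma_s)^2\big(w(s+\gamma)-w(s)\big)\,m(dz)+F=0,\qquad w(\cdot,T)=g''\ge 0,
\]
where $\beta,c$ are bounded coefficients built from the $s$-derivatives of the coefficients (bounded via (C) and \ref{B6}), and the genuinely inhomogeneous source is
\[
F(s,t)=\Big(\tfrac{\partial^2\mu}{\partial s^2}-2\tfrac{d\omega}{ds}\Big)\,\partial_s u-\tfrac{d^2\omega}{ds^2}\,u+\int_{\mathbb{R}}\tfrac{\partial^2\gamma}{\partial s^2}\big(\partial_s u(s+\gamma)-\partial_s u(s)\big)\,m(dz).
\]
The point of the hypotheses is now visible: the first contribution to $F$ is nonnegative by \eqref{inequalityOnMuOmegaG2}, the second is nonnegative by concavity of $\omega$ (so $\omega''\le 0$) together with $u\ge 0$, and the nonlocal contribution is exactly where convexity of the jump coefficient \eqref{inequalityOnGamma2} enters.

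Then I would close the argument by a maximum principle run backward from $T$. After the rescaling $\tilde w:=e^{\lambda t}w$ (absorbing the bounded zeroth-order coefficient $c$, using \ref{B6}), one considers the first time, going backward from $T$, at which $w$ would touch $0$ from above at an interior global spatial minimum $s^*$. There $w(s^*)=0$, $\partial_s w(s^*)=0$, $\partial_s^2 w(s^*)\ge 0$, $\partial_t w(s^*)\ge 0$, and the nonlocal term is $\int(1+\gamma_s)^2 w(s^*+\gamma)\,m(dz)\ge 0$ because $w\ge 0$ has not yet been violated; feeding $F\ge 0$ into the equation yields a contradiction. To make the inequalities strict I would first replace $g$ by a sequence of strictly convex, suitably mollified payoffs $g_n$ (and, if needed, mollify the coefficients), obtain $w_n>0$ for each $n$, and pass to the limit using the polynomial-growth estimates; this returns $w=\partial_s^2 u\ge 0$, i.e.\ convexity of $u(\cdot,t)$ for every $t\in[0,T]$.

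The main obstacle is twofold. First is the viscosity-to-classical upgrade: one must rigorously justify that $u$ is smooth enough that the PIDE for $u_{ss}$ holds pointwise, which requires the full strength of (B) and (C) and a careful approximation scheme carrying uniform estimates so that the strict convexity of the mollified data survives in the limit. Second, and more delicate, is signing the nonlocal source $\int \gamma_{ss}\big(\partial_s u(s+\gamma)-\partial_s u(s)\big)m(dz)$, since for a convex $u$ the increment $\partial_s u(s+\gamma)-\partial_s u(s)$ changes sign with $\gamma$; this is precisely the term controlled by \eqref{inequalityOnGamma2}, and it is exactly here that the structural assumption on the jumps is indispensable (it simply vanishes in the geometric case $\gamma(s,t,z)=sz$ highlighted in Remark \ref{pierwszauwaga}, which is why no further condition is needed there).
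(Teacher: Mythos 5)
Your overall strategy --- upgrade the viscosity solution to a classical one via \cite{pham98} and \cite{propertiesOfOptionPrices}, then run a parabolic maximum principle on the second spatial derivative, signing the source terms with \eqref{inequalityOnMuOmegaG2}, the concavity of $\omega$, and \eqref{inequalityOnGamma2} --- is the same as the paper's, and you locate the role of each hypothesis correctly. The gap is in the touching argument itself. You take ``the first time, going backward from $T$, at which $w=\partial_s^2 u$ touches $0$ at an interior global spatial minimum $s^*$'', but the spatial domain is the unbounded set $\mathbb{R}^{+}$, so neither that first time nor the point $s^*$ need exist: the infimum of $w(\cdot,t)$ may be approached only as $s\to 0^{+}$ or $s\to\infty$, in which case there is no interior critical point where $\partial_s w=0$ and $\partial_s^2 w\ge 0$. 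Mollifying $g$ into strictly convex $g_n$ only gives $w_n>0$ at the terminal time; it does nothing to prevent the minimizing sequence from escaping to the boundary at later times. This is precisely what the paper's barrier is for: it perturbs $u^{\omega}$ to $u^{\omega}_{\varepsilon}=u^{\omega}+\varepsilon e^{Ct}\kappa(s)$ as in \eqref{Vepsilon}, with $\kappa(s)=s^{n+3}+s^{-n+1}$ and $n$ taken from the polynomial bound $|\partial_s^2 u^{\omega}|\le K(s^{-n}+s^{n})$ of Lemma \ref{LemmaAboutBoundednessOfSecondDerivativeDiffusionWithJumps}, so that $\kappa''$ strictly dominates $|\partial_s^2 u^{\omega}|$ near $0$ and $\infty$; the non-convexity set of $u^{\omega}_{\varepsilon}$ is then contained in a compact slab $[R^{-1},R]\times[0,T]$, and the first touching time $t_0>0$ and point $s_0$ genuinely exist. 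The constant $C$ in the factor $e^{Ct}$ is then fixed by the growth comparison \eqref{dependenceWithConstantC} so that the barrier produces the strict inequality yielding the contradiction. Your rescaling $\tilde w=e^{\lambda t}w$ absorbs the bounded zeroth-order coefficient but plays no localizing role, so this step of your argument does not close.

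Two smaller points. First, your claim that the nonlocal source $\int \gamma_{ss}\bigl(\partial_s u(s+\gamma)-\partial_s u(s)\bigr)m(dz)$ is handled by \eqref{inequalityOnGamma2} needs the information available only at the critical point --- that $u_{\varepsilon}(\cdot,t_0)$ is still convex with $\partial_s^2 u_{\varepsilon}(s_0,t_0)=0$ --- and even then the increment $\partial_s u(s_0+\gamma)-\partial_s u(s_0)$ follows the sign of $\gamma$; you flag this correctly as delicate but do not resolve it (the paper's own treatment at this point is equally terse). Second, your use of $u\ge 0$ to sign the $-\omega''u$ term is not granted by the hypotheses, which assume only that $g$ is convex of polynomial growth; it is harmless for the put payoff but should be stated explicitly or circumvented.
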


\begin{proof}
\label{ProofTheoremWithJumps}
The first part of the proof proceeds in a similar way as the proof of \cite[Prop. 4.1]{propertiesOfOptionPrices}.

Let \begin{equation*}
\mathcal{L}V^{\omega}_{\text{\rm E}}(s, t) = -\frac{\partial V^{\omega}_{\text{\rm E}}(s, t)}{\partial t} - A_t^C V^{\omega}_{\text{\rm E}}(s, t) - A^J_t V^{\omega}_{\text{\rm E}}(s, t) +\omega(s)V^{\omega}_{\text{\rm E}}(s, t),
\end{equation*}
where $A_t$ is the linear second-order differential operator of the form
\begin{equation*}
A_t^C V^{\omega}_{\text{\rm E}}(s, t) = \beta(s, t) \frac{\partial^2 V^{\omega}_{\text{\rm E}}(s, t)}{\partial s^2} + \mu(s,t) \frac{\partial V^{\omega}_{\text{\rm E}}(s, t)}{\partial s}
\end{equation*}
with $\beta(s, t) = \frac{\sigma^2(s, t)}{2}$
and $A^J_t$ is the integro-differential operator given by
\begin{equation*}
A^J_t V^{\omega}_{\text{\rm E}}(s, t) = \int_{\mathbb{R}} \left(V^{\omega}_{\text{\rm E}}(s+\gamma(s,t,z), t) - V^{\omega}_{\text{\rm E}}(s,t) - \gamma(s,t,z)\frac{\partial V^{\omega}_{\text{\rm E}}(s,t)}{\partial s}\right) m(dz).
\end{equation*}

\begin{lemma}\label{TheoremAboutExistenceOfValueFunctionDiffusionWithJumps}
Let Assumptions (A) and (B) 
hold and assume that the stock price process $S_t$ follows \eqref{diffusionWithJumps}.
Then $V^{\omega}_{\text{\rm E}}(s,t)\in C^{4,1}_{\alpha}(\mathbb{R}^{+}\times[0,T])\cap C_{\text{pol}}(\mathbb{R}^{+}\times[0,T])$
and it is the solution to the Cauchy problem
\begin{equation}\label{CauchyProblem}
  \begin{cases}
    \mathcal{L} V^{\omega}_{\text{\rm E}}(s, t) = 0, \quad & (s, t)\in\mathbb{R}^{+}\times[0,T), \\
	V^{\omega}_{\text{\rm E}}(s, T) = g(s), \quad & s\in\mathbb{R}^{+}.
  \end{cases}
\end{equation}
\end{lemma}


\begin{lemma}\label{LemmaAboutBoundednessOfSecondDerivativeDiffusionWithJumps}
Let Assumptions (A), (B) and (C) 
hold and assume that the stock price process $S_t$ follows \eqref{diffusionWithJumps}.
Then there exist constants $n>0$ and $K>0$ such that the value function $V^{\omega}_{\text{\rm E}}(s, t)$ satisfies
\begin{equation*}
\left|\frac{\partial^2 V^{\omega}_{\text{\rm E}}(s, t)}{\partial s^2}\right| \leq K(s^{-n} + s^{n})
\end{equation*}
for all $(s, t) \in \mathbb{R}^{+}\times[0,T]$.
\end{lemma}

Proofs of both above lemmas are given in Appendix.

We introduce the function $u^{\omega}:\mathbb{R}^{+}\times[0,T]\rightarrow \mathbb{R}^{+}$ of the form
\begin{equation*}
u^{\omega}(s,t) := V^{\omega}_{\text{\rm E}}(s, T-t)
\end{equation*}
and we prove convexity of $u^{\omega}(s,t)$ with respect to $s$. Note that it is equivalent to the convexity of the value function $V^{\omega}_{\text{\rm E}}(s, t)$ in $s$.
Furthermore, based on Lemma \ref{TheoremAboutExistenceOfValueFunctionDiffusionWithJumps}, the function $u^{\omega}(s,t)$ solves the Cauchy problem of the form
\begin{equation*}\label{CauchyProblem3}
  \begin{cases}
     \frac{\partial u^{\omega}(s,t)}{\partial t}= \hat{\mathcal{L}} u^{\omega}(s,t), \quad & (s, t)\in\mathbb{R}^{+}\times(0,T], \\
	u^{\omega}(s, 0) = g(s), \quad & s\in\mathbb{R}^{+},
  \end{cases}
\end{equation*}
where
\begin{align*}
\hat{\mathcal{L}} u^{\omega}(s,t) &= \beta(s, t) \frac{\partial^2 u^{\omega}(s,t)}{\partial s^2} + \mu(s,t) \frac{\partial u^{\omega}(s,t)}{\partial s} - \omega(s)u^{\omega}(s,t) \\ &+ \int_{\mathbb{R}} \left(u^{\omega}(s+\gamma(s,t,z), t) - u^{\omega}(s,t) - \gamma(s,t,z)\frac{\partial u^{\omega}(s,t)}{\partial s}\right) m(dz)
\end{align*}
with $\beta(s, t) = \frac{\sigma^2(s, t)}{2}$.
Observe that by Lemma \ref{LemmaAboutBoundednessOfSecondDerivativeDiffusionWithJumps} there exist constants $n>0$ and $K>0$ such that
\begin{equation}\label{inequalityFromLemmaWithJumps}
\left|\frac{\partial^2 u^{\omega}(s, t)}{\partial s^2}\right| \leq K (s^{-n} + s^n)
\end{equation}
for all $(s, t)\in\mathbb{R}^{+}\times[0,T]$.

Let us now define a convex function $\kappa: \mathbb{R}^{+} \rightarrow \mathbb{R}^{+}$ of the form
\begin{equation*}
\kappa(s):= s^{n+3} + s^{-n+1}
\end{equation*}
with
\begin{equation*}
\frac{d^2 \kappa(s)}{ds^2} = (n + 3)(n + 2)s^{n+1} + n(n - 1)s^{-n-1}
\end{equation*}
and
\begin{align*}
\frac{d^2(\hat{\mathcal{L}}\kappa(s))}{ds^2} &=
\frac{\partial^2\beta(s, t)}{\partial s^2}\frac{d^2 \kappa(s)}{ds^2} + 2\frac{\partial\beta(s, t)}{\partial s} \frac{d^3 \kappa(s)}{ds^3}
+ \beta(s, t) \frac{d^4 \kappa(s)}{ds^4} \\
&+ \frac{\partial^2 \mu(s,t)}{\partial s^2} \frac{d \kappa(s)}{ds} + 2\frac{\partial\mu(s,t)}{\partial s} \frac{d^2 \kappa(s)}{ds^2} + \mu(s, t)\frac{d^3 \kappa(s)}{ds^3} \\
&- \frac{d^2\omega(s)}{ds^2}\kappa(s) - 2\frac{d\omega(s)}{ds}\frac{d \kappa(s)}{ds} - \omega(s)\frac{d^2 \kappa(s)}{ds^2} \\
&+ \int_{\mathbb{R}} \Bigg(\frac{d^2 \kappa(s+\gamma(s, t, z))}{ds^2}\left(1+\frac{\partial\gamma(s,t,z)}{\partial s}\right)^2 \\ &+ \frac{d\kappa(s+\gamma(s,t,z))}{ds}\frac{\partial^2 \gamma(s,t,z)}{\partial s^2} - \gamma(s,t,z) \frac{d^3 \kappa(s)}{ds^3} \\
&- \left(1+2\frac{\partial \gamma(s,t,z)}{\partial s}\right)\frac{d^2\kappa(s)}{ds^2} - \frac{\partial^2\gamma(s,t,z)}{\partial s^2}\frac{d\kappa(s)}{ds}\Bigg)m(dz).
\end{align*}
The assumptions that we put on the coefficients $\mu$, $\sigma$ and $\gamma$ and function $\omega$ and their derivatives imply that each component of the above expression grows at most like $s^{n+1}$ for large $s$ and like $s^{-n-1}$ for small $s$.
The same behaviour characterises
$\frac{d^2\kappa(s)}{ds^2}$.

In addition, we define the function $\vartheta:\mathbb{R}^{+}\times[0,T]\rightarrow\mathbb{R}$ given by
\begin{equation*}
\vartheta(s, t) := \left(\frac{\partial^2\mu(s, t)}{\partial s^2} - 2 \frac{d\omega(s)}{ds}\right)\frac{d\kappa(s)}{ds}
\end{equation*}
which also behaves like $\frac{d^2(\hat{\mathcal{L}}\kappa(s))}{ds^2}$ at $+\infty$ and $-\infty$.

Hence we claim that there exist a positive constant $C$ such that
\begin{equation}\label{dependenceWithConstantC}
C\frac{d^2\kappa(s)}{ds^2} - \frac{d^2(\hat{\mathcal{L}}\kappa(s))}{ds^2} >
-\vartheta(s, t).
\end{equation}
In the second part of the proof, we define the auxiliary function
\begin{equation}\label{Vepsilon}
u^{\omega}_{\varepsilon}(s, t) := u^{\omega}(s, t) + \varepsilon e^{Ct} \kappa(s)
\end{equation}
for some $\varepsilon>0$.

We carry out a proof by contradiction. Let us then assume that
$u^{\omega}_{\varepsilon}(s,t)$ is not convex. For this purpose, we denote by $\Lambda$ the set of points for which $u^{\omega}_{\varepsilon}(s, t)$ is not convex, i.e.
\begin{equation*}
\Lambda := \{(s, t)\in\mathbb{R}^{+}\times[0,T]:\frac{\partial^2 u^{\omega}_{\varepsilon}(s, t)}{\partial s^2}<0\}
\end{equation*}
and we assume that the set $\Lambda$ is not empty.

From Lemma \ref{LemmaAboutBoundednessOfSecondDerivativeDiffusionWithJumps} we know that $u^{\omega}(s, t)$ satisfies \eqref{inequalityFromLemmaWithJumps}. Due to this fact and using \eqref{Vepsilon}
we claim that there exist a positive constant $R$ such that $\Lambda \subseteq [R^{-1}, R] \times [0,T]$. This is a direct consequence of such a choice of $u^{\omega}_{\varepsilon}(s, t)$ in \eqref{Vepsilon} so that $\frac{d^2 \kappa(s)}{ds^2}$ grows faster than $\frac{\partial^2 u^{\omega}(s, t)}{\partial s^2}$ for both large and small values of $s$.

Consequently, the set $\Lambda$ is a bounded set.
Since the closure of a bounded set is also bounded, we conclude that the closure of $\Lambda$, i.e. $\text{cl}(\Lambda)$, is compact.

Due to the fact that a compact set always contains its infimum we can define
\begin{equation*}
t_0 := \inf\{t\geq 0: (s, t) \in \text{cl}(\Lambda)\text{ for some } s \in \mathbb{R^{+}}\}.
\end{equation*}
From the initial condition, i.e. $u^{\omega}(s, 0) = g(s)$ and convexity of $g$ we have
\begin{equation*}
\frac{d^2 u^{\omega}_{\varepsilon}(s, 0)}{d s^2} = \frac{d^2 (g(s)+\varepsilon \kappa(s))}{d s^2}\geq \varepsilon \frac{d^2 \kappa(s)}{ds^2}>0
\end{equation*}
for all $s\in\mathbb{R}^{+}$.
Hence we can conclude that $t_0>0$.

Moreover, at the point when the infimum is attained, i.e. $(s_0, t_0)$ for some $s_0\in\mathbb{R}^{+}$
\begin{equation*}
\frac{\partial^2 u^{\omega}_{\varepsilon}(s_0, t_0)}{\partial s^2} = 0.
\end{equation*}
This is a consequence of the continuity of the function $\frac{\partial^2 u^{\omega}_{\varepsilon}(s, t)}{\partial s^2}$ in $s$.
In addition, for $t\in[0, t_0)$ we have $\frac{\partial^2 u^{\omega}_{\varepsilon}(s_0, t)}{\partial s^2}> 0$ and thus, by applying the symmetry of second derivatives at $t= t_0$, we derive
\begin{equation}\label{onePartOfFinalEquation}
\frac{\partial^2}{\partial s^2} \left(\frac{\partial u^{\omega}_{\varepsilon}(s_0, t_0)}{\partial t}\right) =
\frac{\partial}{\partial t}\left(\frac{\partial^2 u^{\omega}_{\varepsilon}(s_0, t_0)}{\partial s^2}\right) \leq 0.
\end{equation}
Furthermore, at $(s_0, t_0)$ we also have
\begin{align*}
\frac{\partial^2 (\hat{\mathcal{L}}u^{\omega}_{\varepsilon}(s_0, t_0))}{\partial s^2} &= \frac{\partial^2\beta(s_0, t_0)}{\partial s^2}\frac{\partial^2 u^{\omega}_{\varepsilon}(s_0, t_0)}{\partial s^2} + 2\frac{\partial\beta(s_0, t_0)}{\partial s}\frac{\partial^3 u^{\omega}_{\varepsilon}(s_0, t_0)}{\partial s^3} \nonumber \\
&+ \beta(s_0, t_0) \frac{\partial^4 u^{\omega}_{\varepsilon}(s_0, t_0)}{\partial s^4} + \frac{\partial^2 \mu(s_0, t_0)}{\partial s^2} \frac{\partial u^{\omega}_{\varepsilon}(s_0, t_0)}{\partial s} \nonumber \\
&+ 2\frac{\partial\mu(s_0,t_0)}{\partial s} \frac{\partial^2 u^{\omega}_{\varepsilon}(s_0, t_0)}{\partial s^2} + \mu(s_0, t_0)\frac{\partial^3 u^{\omega}_{\varepsilon}(s_0, t_0)}{\partial s^3} \nonumber \\
&- \frac{d^2\omega(s_0)}{ds^2}u^{\omega}_{\varepsilon}(s_0, t_0) - 2\frac{d\omega(s_0)}{ds}\frac{\partial u^{\omega}_{\varepsilon}(s_0, t_0)}{\partial s} - \omega(s_0)\frac{\partial^2 u^{\omega}_{\varepsilon}(s_0, t_0)}{\partial s^2}\\
&+ \int_{\mathbb{R}} \Bigg(\frac{\partial^2 u^{\omega}_{\varepsilon}(s_0+\gamma(s_0, t_0, z), t_0)}{\partial s^2}\left(1+\frac{\partial\gamma(s_0,t_0,z)}{\partial s}\right)^2 \nonumber \\
&+ \frac{\partial u^{\omega}_{\varepsilon}(s_0+\gamma(s_0,t_0,z), t_0)}{\partial s}\frac{\partial^2 \gamma(s_0,t_0,z)}{\partial s^2} - \gamma(s_0,t_0,z) \frac{\partial^3 u^{\omega}_{\varepsilon}(s_0, t_0)}{\partial s^3} \nonumber \\
&- \left(1+2\frac{\partial\gamma(s_0,t_0,z)}{\partial s}\right)\frac{\partial^2 u^{\omega}_{\varepsilon}(s_0, t_0)}{\partial s^2} - \frac{\partial^2\gamma(s_0,t_0,z)}{\partial s^2}\frac{\partial u^{\omega}_{\varepsilon}(s_0, t_0)}{\partial s}\Bigg)m(dz). \nonumber
\end{align*}
Since $\frac{\partial^2 u^{\omega}_{\varepsilon}(s_0, t_0)}{\partial s^2} = 0$ and $\frac{\partial^2 u^{\omega}_{\varepsilon}(s, t_0)}{\partial s^2}$ has a local minimum at $s=s_0$, we have $\frac{\partial^3 u^{\omega}_{\varepsilon}(s_0, t_0)}{\partial s^3} = 0$ and \linebreak $\frac{\partial^4 u^{\omega}_{\varepsilon}(s_0, t_0)}{\partial s^4} \geq 0$.
Thus
\begin{equation*}\label{secDerivOfOperatorLj}
\begin{aligned}
\frac{\partial^2(\hat{\mathcal{L}}u^{\omega}_{\varepsilon}(s_0, t_0))}{\partial s^2} &\geq \frac{\partial^2 \mu(s_0, t_0)}{\partial s^2} \frac{\partial u^{\omega}_{\varepsilon}(s_0, t_0)}{\partial s} - \frac{d^2\omega(s_0)}{ds^2}u^{\omega}_{\varepsilon}(s_0, t_0) \\ &- 2\frac{d\omega(s_0)}{ds}\frac{\partial u^{\omega}_{\varepsilon}(s_0, t_0)}{\partial s} \\ &+ \int_{\mathbb{R}} \Bigg(\frac{\partial u^{\omega}_{\varepsilon}(s_0+\gamma(s_0,t_0,z), t_0)}{\partial s}\frac{\partial^2 \gamma(s_0,t_0,z)}{\partial s^2} \\ &- \frac{\partial u^{\omega}_{\varepsilon}(s_0, t_0)}{\partial s} \frac{\partial^2\gamma(s_0,t_0,z)}{\partial s^2}\Bigg)m(dz).
\end{aligned}
\end{equation*}
Since $u^{\omega}_{\varepsilon}(s, t_0)$ is convex in $s$ and
$\frac{\partial^2 u^{\omega}_{\varepsilon}(s_0, t_0)}{\partial s^2} = 0$,
applying \eqref{inequalityOnGamma2} we can conclude that the integral part of the above expression is nonnegative.
Moreover, the concavity of $\omega$ and \eqref{inequalityOnMuOmegaG2} imply that
\begin{equation}\label{secondPartOfFinalEquation}
\frac{\partial^2(\hat{\mathcal{L}}u^{\omega}_{\varepsilon}(s_0, t_0))}{\partial s^2} \geq \varepsilon e^{C t_0} \left(\frac{\partial^2 \mu(s_0, t_0)}{\partial s^2} - \frac{d^2\omega(s_0)}{ds^2}\right)\frac{d\kappa(s_0)}{ds} = \varepsilon e^{C t_0} \vartheta(s_0, t_0).
\end{equation}
Combining \eqref{dependenceWithConstantC} with \eqref{onePartOfFinalEquation} and \eqref{secondPartOfFinalEquation} at
$(s_0, t_0)$ we derive that
\begin{equation*}
\begin{aligned}
\frac{\partial^2}{\partial s^2}\left(\frac{\partial u^{\omega}_{\varepsilon}(s_0, t_0)}{\partial t} - \hat{\mathcal{L}}u^{\omega}_{\varepsilon}(s_0, t_0)\right) &= \varepsilon e^{Ct_0}\frac{d^2}{d s^2}(C\kappa(s_0) - \hat{\mathcal{L}}\kappa(s_0)) \\
&> -\varepsilon e^{Ct_0}\vartheta(s_0, t_0) \geq \frac{\partial^2}{\partial s^2}\left(\frac{\partial u^{\omega}_{\varepsilon}(s_0, t_0)}{\partial t} - \hat{\mathcal{L}}u^{\omega}_{\varepsilon}(s_0, t_0)\right)
\end{aligned}
\end{equation*}
which is a contradiction. This confirms that the set $\Lambda$ is empty, and thus $u^{\omega}_{\varepsilon}(s, t)$ is a convex function.
Finally, letting $\varepsilon\rightarrow 0$ we conclude that $u^{\omega}(s, t)$ is convex in $s$ for all $t\in[0,T]$.
\end{proof}

Using the same arguments like in the proof of \cite[Thm. 4.1]{propertiesOfOptionPrices}, we can resign from Assumptions (B) and (C) 
in Theorem \ref{TheoremAboutConvexityOfValueFunctionDiffusionWithJumps}, that is, the following theorem holds true.
\begin{theorem}\label{TheoremAboutConvexityOfValueFunctionDiffusionWithJumps2}
Let assumptions of Theorem \ref{mainTheoremAmericanOptionConvexity} hold true.
Then $V^{\omega}_{\text{\rm E}}(s, t)$ is convex with respect to $s$ at all times
$t\in[0,T]$.
\end{theorem}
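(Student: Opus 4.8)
The strategy is a regularisation argument: I would remove the extra smoothness hypotheses (B) and (C) by approximating the coefficients and the data by smooth ones for which Theorem \ref{TheoremAboutConvexityOfValueFunctionDiffusionWithJumps} already yields convexity, and then pass to the limit. First I would fix a nonnegative mollifier and, for each $n$, set $\mu_n,\sigma_n$ to be smoothed versions of $\mu,\sigma$ (in $s$, jointly measurable in $t$), $\gamma_n$ a smoothing of $\gamma$ in $s$ uniform in $z$, and $\omega_n,g_n$ smoothed versions of $\omega,g$. The mollifications are to be built so that: (i) each $n$-th datum satisfies Assumptions (A), (B) and (C); (ii) the structural sign conditions of Theorem \ref{mainTheoremAmericanOptionConvexity} survive, that is $g_n$ remains convex, $\omega_n$ remains concave, $\partial^2_{ss}\gamma_n\geq 0$ as in \eqref{inequalityOnGamma2}, and the sign of $\partial^2_{ss}\mu_n-2\omega_n'$ required by \eqref{inequalityOnMuOmegaG2} persists; and (iii) $\mu_n\to\mu$, $\sigma_n\to\sigma$, $\gamma_n\to\gamma$, $\omega_n\to\omega$ and $g_n\to g$ locally uniformly as $n\to\infty$. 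Since convolution against a positive kernel preserves convexity and concavity and commutes with differentiation, points (i) and (ii) reduce to checking the coefficient sign conditions, which is where the only real subtlety lies.

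Second, for each $n$ I would let $S^{n}_t$ denote the jump-diffusion \eqref{diffusionWithJumps} driven by $(\mu_n,\sigma_n,\gamma_n)$ and $V^{\omega_n}_{\text{\rm E},n}(s,t)$ the associated European value function \eqref{valueFunctionOmegaEuropeanOption}. By Theorem \ref{TheoremAboutConvexityOfValueFunctionDiffusionWithJumps}, each $V^{\omega_n}_{\text{\rm E},n}(\cdot,t)$ is convex in $s$. The next step is to prove the pointwise convergence $V^{\omega_n}_{\text{\rm E},n}(s,t)\to V^{\omega}_{\text{\rm E}}(s,t)$. Stability of strong solutions of \eqref{diffusionWithJumps} under locally uniform convergence of coefficients, which is guaranteed by the uniform linear growth and Lipschitz bounds in (A2) and (A3), gives $S^{n}_t\to S_t$ in $L^2$, locally uniformly in $t$; hence $g_n(S^{n}_T)\to g(S_T)$ in probability, and the discount factors converge because $\int_0^T\omega_n(S^{n}_w)\,dw\to\int_0^T\omega(S_w)\,dw$, with the lower bound (A6) keeping the exponentials bounded above. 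Uniform integrability, supplied by the polynomial growth (A5) of $g$ together with the moment estimates coming from (A2), then justifies interchanging limit and expectation.

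Third, a pointwise limit of convex functions is convex, so $V^{\omega}_{\text{\rm E}}(\cdot,t)$ is convex for every $t\in[0,T]$, which is the assertion of the theorem. Thus the whole argument reduces to the construction in the first step plus a standard SDE-stability-and-uniform-integrability passage to the limit, exactly in the spirit of \cite[Thm.\ 4.1]{propertiesOfOptionPrices}.

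The hard part is the first step: arranging the mollification so that the data simultaneously acquire the smoothness of (B)--(C), retain the convexity/concavity/nonnegativity conditions of Theorem \ref{mainTheoremAmericanOptionConvexity}, and converge. The delicate condition is \eqref{inequalityOnMuOmegaG2}, which couples a coefficient expression with $\partial_s V^{\omega}_{\text{\rm E}}$ rather than with data alone. Because the value function inherits the monotonicity of $g$ (for the put, $s\mapsto V^{\omega}_{\text{\rm E}}(s,t)$ is nonincreasing), this inequality is in effect a fixed-sign requirement on $\partial^2_{ss}\mu-2\omega'$, and one must verify that this sign is stable under mollification, which it is since convolution against a positive kernel preserves the sign of the convolved quantity and commutes with differentiation. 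What requires care is confirming that each approximate value function $V^{\omega_n}_{\text{\rm E},n}$ stays monotone in $s$, so that the sign interpretation of \eqref{inequalityOnMuOmegaG2} remains valid for every $n$; once this is checked the approximation scheme is consistent and the limiting convexity follows.
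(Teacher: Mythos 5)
Your proposal is correct and follows essentially the same route as the paper, which proves this theorem simply by invoking the approximation argument of \cite[Thm.\ 4.1]{propertiesOfOptionPrices}: mollify the data so that Assumptions (B) and (C) hold while the structural convexity/concavity/sign conditions are preserved, apply Theorem \ref{TheoremAboutConvexityOfValueFunctionDiffusionWithJumps} to each approximation, and pass to the limit using SDE stability and the fact that a pointwise limit of convex functions is convex. Your explicit attention to the stability of condition \eqref{inequalityOnMuOmegaG2} under mollification (via the monotonicity of the approximate value functions) is exactly the point the paper leaves implicit.
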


We are ready to give the proof of our first main result.

\begin{proof}[\textbf{Proof of Theorem \ref{mainTheoremAmericanOptionConvexity}}]
As noted in \cite[Sec. 7]{propertiesOfOptionPrices}, under conditions \ref{A1}--\ref{A4}, for each $p\geq 1$ there exists a constant $C$
such that the stock price process given in \eqref{diffusionWithJumps} satisfies
\begin{equation*}\label{inequalityFromGihman}
\mathbb{E}_s\left[\sup_{0\leq t\leq T}|S_t|^p\right]\leq C(1+s^p).
\end{equation*}
Together with \ref{A5} and \ref{A6} it implies that the value function given by
\begin{equation*}
V^{\omega}_{\text{\rm A}_T}(s, t) := \sup_{\tau\in\mathcal{T}^T_t}
\mathbb{E}_{s, t}[e^{-\int_t^{\tau} \omega(S_w) dw} g(S_\tau)]
\end{equation*}
is well-defined, where $\mathcal{T}^T_t$ is the family of
$\mathcal{F}_t$-stopping times with values in $[t, T]$ for fixed maturity $T>0$.
Moreover, we denote
\begin{equation*}
V^{\omega}_{\text{\rm A}_T}(s) := V^{\omega}_{\text{\rm A}_T}(s, 0).
\end{equation*}
Let us define now a Bermudan option with the value function of the form
\begin{equation*}\label{berdmudanOption}
V^{\omega}_{\text{B}_\Xi}(s, t) := \sup_{\tau\in\mathcal{T}_{\Xi}}
\mathbb{E}_{s, t}[e^{-\int_t^\tau \omega(S_w) dw} g(S_\tau)],
\end{equation*}
where $\mathcal{T}_{\Xi}$ is the set of stopping times with values in
\begin{equation*}\label{BNSet}
\text{B}_\Xi = \left\{\frac{n}{2^\Xi}(T-t)+t: n = 0, 1, ..., 2^\Xi\right\},
\end{equation*}
where $\Xi$ is some positive integer number.
To simplify the notation, we denote
\begin{equation*}
V^{\omega}_{\text{B}_\Xi}(s) := V^{\omega}_{\text{B}_\Xi}(s, 0).
\end{equation*}
In contrast to the American options, the Bermudan options are the options that can be exercised at one of the finitely many number of times.

Now we show that $V^{\omega}_{\text{B}_\Xi}(s, t)$ inherits the property of convexity from its European equivalent $V^{\omega}_{\text{\rm E}}(s, t)$.
Next, we generalise this result to the American case $V^{\omega}_{\text{\rm A}}(s)$.

\begin{lemma}\label{TheoremAboutConvexityOfBermudan}
Let assumptions of Theorem \ref{mainTheoremAmericanOptionConvexity} hold true.
Then $V^{\omega}_{\text{B}_\Xi}(s, t)$ is convex with respect to $s$ at all times $t\in[0,T]$.
\end{lemma}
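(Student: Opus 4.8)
The plan is to prove the statement by backward induction over the finite exercise grid, using the dynamic programming principle together with the European convexity result already established. Write the grid points as $t = t_0 < t_1 < \cdots < t_{2^\Xi} = T$, where $t_k = \frac{k}{2^\Xi}(T-t)+t$. The dynamic programming principle for the Bermudan value (see \cite{propertiesOfAmericanOption}) reads
\begin{equation*}
V^{\omega}_{\text{B}_\Xi}(s, t_k) = \max\left\{ g(s),\ \mathbb{E}_{s, t_k}\!\left[ e^{-\int_{t_k}^{t_{k+1}} \omega(S_w)\,dw}\, V^{\omega}_{\text{B}_\Xi}(S_{t_{k+1}}, t_{k+1}) \right] \right\}
\end{equation*}
for $k = 2^\Xi - 1, \ldots, 0$, with the terminal identity $V^{\omega}_{\text{B}_\Xi}(s, T) = g(s)$.

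First I would establish the base case: at the last exercise date $t_{2^\Xi} = T$ the value equals the payoff $g$, which is convex by the standing assumptions. For the inductive step, assume $s \mapsto V^{\omega}_{\text{B}_\Xi}(s, t_{k+1})$ is convex. The continuation term in the recursion is precisely a value function of the European type \eqref{valueFunctionOmegaEuropeanOption}, but over the horizon $[t_k, t_{k+1}]$ and with terminal data $V^{\omega}_{\text{B}_\Xi}(\cdot, t_{k+1})$ in place of $g$. Hence Theorem \ref{TheoremAboutConvexityOfValueFunctionDiffusionWithJumps2} applies to this subproblem and yields convexity in $s$ of the continuation value. Since the maximum of two convex functions is convex, the left-hand side $s \mapsto V^{\omega}_{\text{B}_\Xi}(s, t_k)$ is convex, which closes the induction. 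Convexity at a non-grid time $t$ follows from the same observation, as between two consecutive exercise dates exercise is forbidden and the value is again of European form with convex terminal data.

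I expect the main obstacle to be the bookkeeping needed to legitimately invoke Theorem \ref{TheoremAboutConvexityOfValueFunctionDiffusionWithJumps2} at each step, namely checking that the terminal data $V^{\omega}_{\text{B}_\Xi}(\cdot, t_{k+1})$ inherits all the hypotheses required there --- in particular the polynomial growth and, crucially, the sign condition \eqref{inequalityOnMuOmegaG2}, which is formulated through the derivative of the relevant European value. Whereas \eqref{inequalityOnGamma2} is a condition on the model alone and needs no re-checking, \eqref{inequalityOnMuOmegaG2} depends on the payoff through $\partial_s V^{\omega}_{\text{E}}$. The point to verify carefully is that the monotonicity governing the sign of $\partial_s V^{\omega}_{\text{E}}$ is itself preserved along the recursion: for the put, every iterate $V^{\omega}_{\text{B}_\Xi}(\cdot, t_{k+1})$ is non-increasing in $s$, so the corresponding derivative stays $\le 0$ and \eqref{inequalityOnMuOmegaG2} is inherited at the next step under $\omega$ concave and non-decreasing (cf.\ Remark \ref{pierwszauwaga}). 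This propagation of monotonicity, together with the growth bound already used in the proof of Theorem \ref{mainTheoremAmericanOptionConvexity}, is what makes the inductive application of the European result valid, following the scheme of \cite{propertiesOfAmericanOption} and the analogous argument in \cite{propertiesOfOptionPrices}.
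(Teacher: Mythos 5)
Your proof is correct and follows essentially the same route as the paper: backward induction via the dynamic programming recursion, applying the European convexity theorem to each one-period continuation value and using that a maximum of convex functions is convex. Your additional care in checking that the terminal data at each step inherits the hypotheses of the European result (in particular the monotonicity driving \eqref{inequalityOnMuOmegaG2}) is a point the paper's proof passes over silently, but it refines rather than changes the argument.
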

Its proof is given in Appendix.

As the possible exercise times of the Bermudan option get denser, the value function $V^{\omega}_{\text{B}_\Xi}(s, t)$ converges to $V^{\omega}_{\text{\rm A}_T}(s, t)$. To formalise this result, we proceed as follows. For a given stopping time $\tau_0^T$ that takes values in $[0,T]$, we define
\begin{equation*}
\tau_\Xi := \inf\{t\in \text{B}_\Xi: t\geq \tau_0^T\}.
\end{equation*}
Then $\tau_\Xi \in \text{B}_\Xi$ is a stopping time and $\tau_\Xi\rightarrow\tau_0^T$ almost surely as $\Xi\rightarrow+\infty$.
Moreover, by the dominated convergence theorem, we obtain
\begin{equation*}
\begin{aligned}
&\left|\mathbb{E}_{s}\left[e^{-\int_{0}^{\tau_\Xi}\omega(S_w) dw}g(S_{\tau_\Xi})\right] - \mathbb{E}_{s}\left[e^{-\int_{0}^{\tau_0^T}\omega(S_w) dw}g(S_{\tau_0^T})\right]\right| \\ &\leq
\mathbb{E}_{s}\left|e^{-\int_{0}^{\tau_\Xi}\omega(S_w) dw} g(S_{\tau_\Xi}) - e^{-\int_{0}^{\tau_0^T}\omega(S_w) dw} g(S_{\tau_0^T})\right|\rightarrow 0
\end{aligned}
\end{equation*}
as $\Xi\rightarrow\infty$.
Therefore, it follows that
\begin{equation*}
\liminf_{\Xi\rightarrow\infty} V^{\omega}_{\text{B}_\Xi}(s)\geq  V^{\omega}_{\text{\rm A}_T}(s)
\end{equation*}
Since it is obvious that
\begin{equation*}
V^{\omega}_{\text{B}_\Xi}(s)\leq  V^{\omega}_{\text{\rm A}_T}(s),
\end{equation*}
we finally derive
\begin{equation*}
V^{\omega}_{\text{B}_\Xi}(s)\rightarrow  V^{\omega}_{\text{\rm A}_T}(s)
\end{equation*}
as $\Xi\rightarrow\infty$.
To receive our claim we take the maturity $T$ tending to infinity.
\end{proof}

\begin{proof}[\textbf{Proof of Theorem \ref{mainTheoremGeometricLevy}}]

Before we proceed to the actual proof let us remind the main exit identities from \cite{LiBo}.
Let
\begin{equation*}
\sigma^{+}_a:=\inf\{t> 0: X_t\geq a\}, \qquad \sigma^{-}_a:=\inf\{t> 0: X_t\leq a\}
\end{equation*}
for some $a\in\mathbb{R}$.
Then for the function $\eta$ defined in \eqref{eta2} we have
\begin{align}
&\mathbb{E}\left[e^{-\int_0^{\sigma_{a}^{+}} \eta(X_w)\,dw}; \sigma_{a}^{+}<\infty\mid X_0=x\right] =\frac{\mathcal{H}^{(\eta)}(x)}{\mathcal{H}^{(\eta)}(a)},\label{upwardexit}\\
&\mathbb{E}\left[e^{-\int_0^{\sigma_0^{-}} \eta(X_w)\,dw}; \sigma_0^{-}<\infty\mid X_0=x\right] = \mathcal{Z}^{(\eta)}(x) - c_{\mathcal{Z}^{(\eta)}/\mathcal{W}^{(\eta)}} \mathcal{W}^{(\eta)}(x),\label{dwodexit}
\end{align}
where $c_{\mathcal{Z}^{(\eta)}/\mathcal{W}^{(\eta)}} = \lim_{z\rightarrow\infty} \frac{\mathcal{Z}^{(\eta)}(z)}{\mathcal{W}^{(\eta)}(z)}$
and we use condition $\eta(x)=c$ for all $x\leq 0$ and some constant $c\in\mathbb{R}$ in the first identity.
Denoting
\begin{equation*}
\tau^{+}_a:=\inf\{t> 0: S_t\geq a\}, \qquad \tau^{-}_a:=\inf\{t> 0: S_t\leq a\}
\end{equation*}
and keeping in mind that $S_t=e^{X_t}$, from \eqref{upwardexit} and \eqref{dwodexit} we can conclude that
\begin{align}
&\mathbb{E}_{s}\left[e^{-\int_0^{\tau_{a}^{+}} \omega(S_w)\,dw}; \tau_{a}^{+}<\infty\right] =\frac{\mathscr{H}^{(\omega)}(s)}{\mathscr{H}^{(\omega)}(a)},\label{upwardexit2}\\
&\mathbb{E}_{s}\left[e^{-\int_0^{\tau_1^{-}} \omega(S_w)\,dw}; \tau_1^{-}<\infty\right] = \mathscr{Z}^{(\omega)}(s) - c_{\mathscr{Z}^{(\omega)}/\mathscr{W}^{(\omega)}} \mathscr{W}^{(\omega)}(s), \nonumber
\end{align}
where $\omega(s) = \omega(e^x) = \eta(x)$ and the functions $\mathscr{Z}^{(\omega)}(s)$, $\mathscr{W}^{(\omega)}(s)$, $\mathscr{H}^{(\omega)}(s)$ were defined in \eqref{scaleFunctionW2}, \eqref{scaleFunctionZ2} and \eqref{scaleH2}.

We consider three possible cases of a position of the initial state $S_0=s$ of the process $S_t$.
\begin{enumerate}
\item $s<l$:
As the process $S_t$ is spectrally negative and starts below the interval $[l, u]$, it can enter this interval only in a continuous way and hence $\tau_{l, u} = \tau^{+}_l$ and $S_{\tau_{l, u}} = l$. Thus from \eqref{upwardexit2}
\begin{equation*}\label{ResultLiBoTh2.5.}
\begin{aligned}
v^{\omega}_{\text{\rm A}^{\text{\rm Put}}}(s, l, u) &= \mathbb{E}_{s}\left[e^{-\int_0^{\tau^{+}_{l}}\omega(S_w)dw}; S_{\tau^{+}_{l}} = l\right](K-l) \\ &=
\frac{\mathscr{H}^{(\omega)}(s)}{\mathscr{H}^{(\omega)}(l)}(K-l).
\end{aligned}
\end{equation*}
\item $s\in[l, u]$:
If the process $S_t$ starts inside the interval $[l, u]$ which is an optimal stopping region, we decide to exercise our option immediately, i.e. $\tau_{l, u} = 0$. Therefore, we have
\begin{equation*}
v^{\omega}_{\text{\rm A}^{\text{\rm Put}}}(s, l, u) = K-s.
\end{equation*}
\item $s>u$:
There are three possible cases of entering the interval $[l, u]$ by the process $S_t$ when it starts above $u$: either $S_t$ enters $[l, u]$ continuously going downward or it jumps from $(u, +\infty)$ to $(l, u)$ or
$S_t$ jumps from the interval $(u, +\infty)$ to the interval $(0, l)$ and then, later, enters $[l, u]$ continuously.
\end{enumerate}
We can distinguish these cases in the following way
\begin{equation}\label{valueFunctionTwoComponents}
\begin{aligned}
v^{\omega}_{\text{\rm A}^{\text{\rm Put}}}(s, l, u) &
=\mathbb{E}_{s}\left[e^{-\int_0^{\tau_{l, u}}\omega(S_w)dw}(K-S_{\tau_{l, u}}); \tau^{-}_{u}<\tau^{-}_{l}\right] \\ &\quad+
\mathbb{E}_{s}\left[e^{-\int_0^{\tau_{l, u}}\omega(S_w)dw} (K-S_{\tau_{l, u}}); \tau^{-}_{u} = \tau^{-}_{l}\right].
\end{aligned}
\end{equation}
To analyse the first component in \eqref{valueFunctionTwoComponents}, note that
\begin{equation*}
\begin{aligned}
&\mathbb{E}_{s}\left[e^{-\int_0^{\tau_{l, u}}\omega(S_w)dw} (K-S_{\tau_{l, u}}); \tau^{-}_{u}<\tau^{-}_{l}\right]
\\ &=
\mathbb{E}_{s}\left[e^{-\int_0^{\tau^{-}_{u}}\omega(S_w)dw}(K-S_{\tau^{-}_{u}}); S_{\tau^{-}_{u}}\in[l, u]\right]
\\ &=
\int_{(l, u)}(K-z)\mathbb{E}_{s}\left[e^{-\int_0^{\tau^{-}_{u}}\omega(S_w)dw}; S_{\tau^{-}_{u}}\in d z\right] \\ &\quad+
(K-u) \mathbb{E}_{s}\left[e^{-\int_0^{\tau^{-}_{u}}\omega(S_w)dw}; S_{\tau^{-}_{u}}=u\right].
\end{aligned}
\end{equation*}
We express now above formulas in terms of $X_t=\log S_t$ process.
Let $x=\log s$ and we recall that in \eqref{eta2} we introduced
functions $\eta(x) = \omega\circ{\rm exp}(x)=\omega(e^x)$ and $\eta_u(x)=\eta(x+\log u).$
Then
\begin{equation}\label{veryLongFormula}
\begin{aligned}
&\mathbb{E}_{s}\left[e^{-\int_0^{\tau_{l, u}}\omega(S_w)dw} (K-S_{\tau_{l, u}}); \tau^{-}_{u}<\tau^{-}_{l}\right]
\\&=
\int_{(\log l, \log u)}(K-e^z)\mathbb{E}\left[e^{-\int_0^{\sigma^{-}_{\log u}}\eta(X_w)dw}; X_{\sigma^{-}_{\log u}}\in d z\mid X_0=x\right] \\ &\quad+
(K-u) \mathbb{E}\left[e^{-\int_0^{\sigma^{-}_{\log u}}\eta(X_w)dw}; X_{\sigma^{-}_{\log u}}=\log u\mid X_0=x\right]
\\ &=
\int_{(0, \log u-\log l)}(K-e^{\log u-y})\mathbb{E}\left[e^{-\int_0^{\sigma^{-}_0}\eta_u(X_w)dw}; -X_{\sigma^{-}_0}\in dy\mid X_0=x-\log u\right] \\ &\quad+
(K-u) \mathbb{E}\left[e^{-\int_0^{\sigma^{-}_0}\eta_u(X_w)dw}; X_{\sigma^{-}_0}=0\mid X_0=x-\log u\right].
\end{aligned}
\end{equation}
From the compensation formula for L\'evy processes given in \cite[Thm. 4.4]{kyprianou}
we have
\begin{equation}\label{formulaWithResolvent}
\mathbb{E}\left[e^{-\int_0^{\sigma^{-}_0}\eta_u(X_w)dw}; -X_{\sigma^{-}_0}\in dy\mid X_0=x-\log u\right]
= \int_0^{\infty}r^{(\eta_u)}(x-\log u, z) \Pi(-z-dy)dz,
\end{equation}
where $r^{(\eta_u)}(x-\log u, z)$ is the resolvent density of $X_t$ killed by potential $\eta_u$ and on exiting from positive
half-line which is, by \cite[Thm. 2.2]{LiBo}, given by
\begin{equation*}\label{res0}
r^{(\eta_u)}(x-\log u, z)=\mathcal{W}^{(\eta_u)}(x-\log u)
\lim_{y\rightarrow\infty}\frac{\mathcal{W}^{(\eta_u)}(y, z)}{\mathcal{W}^{(\eta_u)}(y)}
- \mathcal{W}^{(\eta_u)}(x-\log u, z).\end{equation*}
Note that $r^{(\eta_u)}(\log s-\log u, z)=r(s,u,z)$ for $r(s,u,z)$
given in \eqref{resovent}.

To find $\mathbb{E}\left[e^{-\int_0^{\sigma^{-}_0}\eta_u(X_w)dw}; X_{\sigma^{-}_0}=0\mid X_0=x-\log u\right]$, we consider
\begin{equation*}\label{valueFunctionContinuousPath}
\begin{aligned}
\mathbb{E}\left[e^{-\int_0^{\sigma^{-}_0}\eta_u(X_w)dw + \alpha X_{\sigma^{-}_0}}; \sigma_0^{-}<\infty\mid X_0=x-\log u\right]
\end{aligned}
\end{equation*}
for some $\alpha>0$. Note that using the change of measure given in \eqref{changemeas} it is equal to
\begin{equation}\label{using}
e^{\alpha (x-\log u)}\mathbb{E}^{(\alpha)}\left[e^{-\int_0^{\sigma^{-}_0}\eta_u^\alpha(X_w)dw}; \sigma_0^{-}<\infty\mid X_0=x-\log u\right]
\end{equation}
where $\mathbb{E}^{(\alpha)}$ is the expectation with respect to $\mathbb{P}^{(\alpha)}$ and $\eta_u^\alpha(x) := \eta_u(x)-\psi(\alpha)$.
From \eqref{dwodexit}
we know that
\begin{equation*}
\mathbb{E}^{(\alpha)}\left[e^{-\int_0^{\sigma^{-}_0}\eta_u^\alpha(X_w)dw}; \sigma_0^{-}<\infty\mid X_0=x-\log u\right] = \mathcal{Z}^{(\eta_u^\alpha)}_{\alpha}(x-\log u) - c_{\mathcal{Z}^{(\eta_u^\alpha)}_{\alpha}/\mathcal{W}^{(\eta_u^\alpha)}_{\alpha}}\mathcal{W}^{(\eta_u^\alpha)}_{\alpha}(x-\log u).
\end{equation*}

Moreover, observe that
\begin{equation*}
\begin{aligned}
\mathbb{E}\left[e^{-\int_0^{\sigma^{-}_0}\eta_u(X_w)dw+ \alpha X_{\sigma^{-}_0}}; \sigma_0^{-}<\infty\mid X_0=x-\log u\right] &= \mathbb{E}\left[e^{-\int_0^{\sigma^{-}_0}\eta_u(X_w)dw};X_{\sigma^{-}_0} = 0\mid X_0=x-\log u\right] \\ &+ \mathbb{E}\left[e^{-\int_0^{\sigma^{-}_0}\eta_u(X_w)dw + \alpha X_{\sigma^{-}_0}};X_{\sigma^{-}_0} < 0\mid X_0=x-\log u\right].
\end{aligned}
\end{equation*}

Taking the limit $\alpha\rightarrow\infty$ and using \eqref{using} we derive
\begin{equation}\label{formulaWithLimit}
\begin{aligned}
\lim_{\alpha\rightarrow\infty}e^{\alpha (x-\log u)}
\mathbb{E}^{(\alpha)}\left[e^{-\int_0^{\sigma^{-}_0}\eta_u^\alpha(X_w)dw};\sigma_0^{-}<\infty\mid X_0=x-\log u\right] = \mathbb{E}\left[e^{-\int_0^{\sigma^{-}_0}\eta_u(X_w)dw};X_{\sigma^{-}_0} = 0\mid X_0=x-\log u\right]
\end{aligned}
\end{equation}
and therefore we have
\begin{equation}\label{formulaWithLimit2}
\begin{aligned}
\mathbb{E}\left[e^{-\int_0^{\sigma^{-}_0}\eta_u(X_w)dw};X_{\sigma^{-}_0} = 0\mid X_0=x-\log u\right] = \lim_{\alpha\rightarrow\infty}
e^{\alpha (x-\log u)}\left(\mathcal{Z}^{(\eta^\alpha_u)}_{\alpha}(x-\log u) - c_{\mathcal{Z}^{(\eta^\alpha_u)}_{\alpha}/\mathcal{W}^{(\eta^\alpha_u)}_{\alpha}}\mathcal{W}^{(\eta^\alpha_u)}_{\alpha}(x-\log u)\right).
\end{aligned}
\end{equation}

Furthermore, the second component of \eqref{valueFunctionTwoComponents} is equal to
\begin{equation}\label{veryveryLongFormula}
\begin{aligned}
&\mathbb{E}_{s}\left[e^{-\int_0^{\tau_{l, u}}\omega(S_w)dw} (K-S_{\tau_{l, u}}); \tau^{-}_{u} = \tau^{-}_{l}\right] = \mathbb{E}\left[e^{-\int_0^{\tau_{l, u}}\eta(X_w)dw} (K-e^{X_{\tau_{l, u}}}); \sigma^{-}_{\log u} = \sigma^{-}_{\log l}\mid X_0=x\right]\\ &=
\mathbb{E}\left[e^{-\int_0^{\tau_{l, u}}\eta(X_w)dw} (K-e^{X_{\tau_{l, u}}}); X_{\sigma^{-}_{\log u}}<\log l\mid X_0=x\right] \\ &=
\mathbb{E}\left[e^{-\int_0^{\sigma^{-}_{\log u}}\eta(X_w)dw} \mathbb{E}_{X_{\sigma^{-}_{\log u}}}\left[e^{-\int_0^{\tau_{l, u}}\eta(X_w)dw}(K-e^{X_{\tau_{l, u}}})\right]; X_{\sigma^{-}_{\log u}}<\log l\mid X_0=x\right] \\ &=
\int_{\log u-\log l}^\infty \mathbb{E}\left[e^{-\int_0^{\sigma^{-}_0}\eta_u(X_w)dw} \mathbb{E}\left[e^{-\int_0^{\tau_{l, u}}\eta_u(X_w)dw}(K-e^{X_{\tau_{l, u}}})\mid X_0=\log u-y\right]; -X_{\sigma^{-}_0}\in dy\right] \\ &=
\int_{\log u-\log l}^\infty \frac{\mathcal{H}^{(\eta_u)}(\log u-y)}{\mathcal{H}^{(\eta_u)}(\log l)}(K-l)\mathbb{E}\left[e^{-\int_0^{\sigma^{-}_0} \eta_u(X_w)dw}; -X_{\sigma^{-}_0}\in dy\mid X_0=x-\log u\right].
\end{aligned}
\end{equation}
Now we have to express all scale functions in terms of the $S_t$ scale functions defined in \eqref{scaleFunctionW2}--\eqref{scaleFunctionW2b}
with $x=\log s$ and using \eqref{eta2}.
Finally, using \eqref{valueFunctionTwoComponents} together with \eqref{veryLongFormula}, \eqref{formulaWithResolvent}, \eqref{formulaWithLimit} and \eqref{veryveryLongFormula} completes the proof of the first part of the theorem.
If $l^*=0$ then we can proceed as before except that we do not need identity \eqref{upwardexit2} and hence assumption
\eqref{etaassump} is indeed superfluous.
\end{proof}

\begin{proof}[\textbf{Proof of Theorem \ref{smoothTheorem}}]
From the facts that $V^{\omega}_{\text{\rm A}}(s)\in D(\mathcal{A})$ and that the L\'evy process $X_t$ is right-continuous and left-continuous over stopping times, using
classical arguments,
we can conclude that $V^{\omega}_{\text{\rm A}}(s)$ solves uniquely equation \eqref{CauchyProblem1}; see \cite[Thm. 2.4, p. 37]{peskir} and \cite{peskirTiziano} for details.
More formally, our function as a convex function is continuous (in whole domain).
Since our boundary is sufficiently regular we know that the
Dirichlet/Poisson problem can be solved uniquely
in $D(\mathcal{A})$. This solution can then be identified with the value function $V^{\omega}_{\text{\rm A}}(s)$
itself using the stochastic calculus or infinitesimal generator techniques in the continuation set; see
\cite[p. 131]{peskir} for further details. Similar considerations have been performed only for a local operator
$\mathcal{A}$ in
\cite[Thm. 1]{Strul}. Note that we can handle the non-local case of $\mathcal{A}$ only thanks to proving
the convexity of the value function first.
We are left with the proof of the smoothness at the boundary of stopping set. We prove it at $u^*$. The
proof at lower end $l^*$ follows exactly in the same way.
We choose to follow the idea given in \cite{Lamberton} although one can also apply \cite{peskirTiziano}
or similar arguments as the ones given in \cite{tumilewicz}.

Suppose then that $1$ is regular for $(0, 1)$. Since $V^{\omega}_{\text{\rm A}}(s)\geq g(s)$ and $V^{\omega}_{\text{\rm A}}(u^*)=g(u^*)$,
we have
\[\frac{V^{\omega}_{\text{\rm A}}(u^*+h)-V^{\omega}_{\text{\rm A}}(u^*)}{h}\geq \frac{g(u^*+h)-g(u^*)}{h}.\]
Hence
\[\liminf_{h\downarrow 0}\frac{V^{\omega}_{\text{\rm A}}(u^*+h)-V^{\omega}_{\text{\rm A}}(u^*)}{h}\geq g^\prime(u^*).\]
To get the opposite inequality we introduce
\[\tau_h=\inf\{t \geq 0: S_t\in [l^*,u^*]|S_{0}=u^*+h\}.\]
By assumed regularity, $\tau_h\rightarrow 0$ a.s. as $h\downarrow 0$.
Moreover, by Markov property
\[V^{\omega}_{\text{\rm A}}(u^*)\geq \mathbb{E}_{\log u^*}\left[e^{-\int_0^{\tau_h} \omega (S_w) dw} g\left(S_{\tau_h}
\right)\right]
.\]
Then by \ref{B6} and the space homogeneity of $\log S_t$,
\begin{align*}&\frac{V^{\omega}_{\text{\rm A}}(u^*+h)-V^{\omega}_{\text{\rm A}}(u^*)}{h}\\
&\leq \frac{\mathbb{E}_{u^*+h}\left[e^{-\int_0^{\tau_h} \omega (S_w) dw} g\left(S_{\tau_h}\right)\right]-\mathbb{E}_{u^*}\left[e^{-\int_0^{\tau} \omega (S_w) dw} g\left(S_{\tau}\right)\right]}{h}\\
&\leq \frac{\mathbb{E}_{u^*+h}\left[e^{-\int_0^{\tau_h} \omega (S_w) dw} g\left((u^*+h)S_{\tau_h}\right)\right]-\mathbb{E}_{u^*}\left[e^{-\int_0^{\tau} \omega (S_w) dw} g\left(u^*S_\tau\right)\right]}{h}\end{align*}
and
\begin{align*}
\limsup_{h\downarrow 0}\frac{V^{\omega}_{\text{\rm A}}(u^*+h)-V^{\omega}_{\text{\rm A}}(u^*)}{h}\leq g^\prime(u^*),
\end{align*}
where we use the fact that $g$ is continuously differentiable at $u^*$ in the last step. This completes the proof.
\end{proof}

\begin{proof}[\textbf{Proof of Theorem \ref{putCallSymmetry}}]
We recall that
\begin{align*}
V^{\omega}_{\text{\rm A}^{\text{\rm Call}}}(s, K, \zeta, \sigma, \Pi, l, u) &=
\mathbb{E}_{s}[e^{-\int_0^{\tau_{l, u}}\omega(S_w)dw} (S_{\tau_{l, u}} - K)^{+}] \\
&= \mathbb{E}[e^{-\int_0^{\tau_{l, u}}\eta(X_w)dw} (e^{X_{\tau_{l, u}}} - K)^{+}\mid X_0=x],
\end{align*}
where $x=\log S_0=\log s$.
By our assumption for general L\'evy process $X_t$ we can define a new measure
$\mathbb{P}^{(1)}_1$ via
\begin{equation*}
\left.\frac{d \mathbb{P}^{(1)}_1}{d\mathbb{P}_1}\right\vert_{\mathcal{F}_t} = e^{X_{t} - \psi(1)t};
\end{equation*}
see also \eqref{changemeas} (considered there only for spectrally negative L\'evy process).
Then
\begin{eqnarray}
\lefteqn{\mathbb{E}\left[e^{-\int_0^{\tau_{l, u}}\eta(X_w)dw} (e^{X_{\tau_{l, u}}} - K)^{+}\mid X_0=x\right]}\nonumber\\
&&= \mathbb{E}^{(1)}\left[e^{-\int_0^{\tau_{\frac{s}{u}K, \frac{s}{l}K}}(\omega(\frac{1}{\hat{S}_w} sK) -\psi(1)) dw}(s - e^{\hat{X}_{\tau_{\frac{s}{u}K, \frac{s}{l}K}}})^{+}\mid \hat{X}_0=\log K\right],\label{changecallput}
\end{eqnarray}
where $\hat{S}_t = e^{\hat{X}_t}$ and $\hat{X}_t = -X_t$ is the dual process to $X_t$ and from \cite{tumilewicz, mordecki, exponentialMartingale} it follows that under $\mathbb{P}^{(1)}$ it is again L\'evy process with the triple
$(-\zeta, \sigma, \hat{\Pi})$ for $\hat{\Pi}$ defined in \eqref{hatPi}.
This completes the proof of identity \eqref{pierwszatozsamosc}.
Now, we recall that from general stopping theory we know that optimal stopping region for the call option is
of the form $\tau_c=\inf\{t \geq 0: S_t \in D_c\}$ for some stopping set $D_c$; see e.g. \cite[Thm. 2.4, p. 37]{peskir}.
Performing the same transformation like it is done in \eqref{changecallput} with $\tau_c$ instead of $\tau_{l, u}$
we can conclude that the optimal stopping time for the call option is the same as
the stopping time $\tau^*=\inf\{t \geq 0: \hat{S}_t \in D\}$ for the put option (replacing
$\tau_{\frac{s}{u}K, \frac{s}{l}K}$ in this transformation on the right-hand side),
where $D:=\{K\frac{s}{x} \quad\mbox{and}\quad x\in D_c\}$.
But from Theorem \ref{lemmaAboutOptimalStoppingRule} it follows that the optimal stopping time for the put option is the first entrance time
to some optimal interval. Thus from above considerations it follows that the stopping region
for the call option is of the same type as for the put option
and therefore \eqref{symm} follows from \eqref{pierwszatozsamosc}.
\end{proof}

\begin{proof}[\textbf{Proof of Theorem \ref{mainTheoremBSModel}}]
We prove that for the function $h$ satisfying \eqref{ODEForh}
we have
\begin{equation}\label{tobeproved}
\mathbb{E}_s\left[\frac{h(S_{\tau_{l, u}})}{h(s)} e^{-\int_0^{\tau_{l, u}}\omega(S_w)dw}\right] = 1.
\end{equation}
Since process $S_t$ is continuous in Black-Scholes model, $S_{\tau_{l, u}}$ equals either to $l$ or $u$ depending on the initial state of $S_t$.
We can distinguish three possible scenarios
\begin{enumerate}
\item $s<l$:
As the process $S_t$ is a continuous process and starts below the interval
$[l, u]$, then $\tau_{l, u} = \tau^{+}_l$ and $S_{\tau_{l, u}} = l$. Thus, we get
\begin{equation}\label{taul+}
\begin{aligned}
v^{\omega}_{\text{\rm A}^{\text{\rm Put}}}(s, l, u) &= \mathbb{E}_{s}\left[e^{-\int_0^{\tau^{+}_l}\omega(S_w)dw}; S_{\tau^{+}_l} = l\right](K-l) \\ &= \frac{h(s)}{h(l)}(K-l).
\end{aligned}
\end{equation}
\item $s\in[l, u]$:
If the process $S_t$ starts inside the interval $[l, u]$ which is the optimal stopping region, we decide to exercise our option immediately, i.e.
$\tau_{l, u} = 0$. Therefore, we have
\begin{equation}\label{tauTo0}
v^{\omega}_{\text{\rm A}^{\text{\rm Put}}}(s, l, u) = K-s.
\end{equation}
\item $s>u$:
Similarly to the case when $s<l$, the process $S_t$ can enter $[l, u]$ only via $u$ and thus $\tau_{l, u} = \tau^{-}_u$ and $S_{\tau_{l, u}} = u$. Therefore,
\begin{equation}\label{tauu-}
\begin{aligned}
v^{\omega}_{\text{\rm A}^{\text{\rm Put}}}(s, l, u) &= \mathbb{E}_{s}\left[e^{-\int_0^{\tau^{-}_u}\omega(S_w)dw}; S_{\tau^{-}_u} = u\right](K-u) \\ &= \frac{h(s)}{h(u)}(K-u).
\end{aligned}
\end{equation}
\end{enumerate}
Identities \eqref{taul+}, \eqref{tauTo0} and \eqref{tauu-} give the first part of the assertion of the theorem.
Note that boundary condition \eqref{ConditionsOnh(x)2} follows straightforward from the definition of
the value function of the American put option.

We are left with the proof of \eqref{tobeproved}.
Consider strictly positive and bounded by some $C$ function $h \in C^2(\mathbb{R}^+)\subset D(\mathcal{A})$.
Then by \cite[Prop. 3.2]{exponentialMartingale} the process
\begin{equation*}\label{Eh(t)}
E^h(t) := \frac{h(S_t)}{h(S_0)}e^{-\int_0^t \frac{(\mathcal{A} h)(S_w)}{h(S_w)} dw},
\end{equation*}
is a mean-one local martingale, where in the case of Black-Scholes model
\begin{equation*}\label{infinitesimalGenOfS}
\mathcal{A} h(s) = \mu s h'(s) + \frac{\sigma^2 s^2}{2} h''(s).
\end{equation*}
Observe that equation \eqref{ODEForh} is equivalent to
\begin{equation*}\label{omegaFunction}
\omega(s) = \frac{\mathcal{A} h(s)}{h(s)}.
\end{equation*}
Let
\begin{equation*}
\tau^M_{l, u} := \tau_{l,u}\wedge M
\end{equation*}
for some fixed $M>0$.
Applying the optional stopping theorem for bounded stopping time, we derive
\begin{equation}\label{expectationForItoDiffusion}
\mathbb{E}_s\left[\frac{h(S_{\tau^M_{l, u}})}{h(s)} e^{-\int_0^{\tau^M_{l, u}}  \omega(S_w)dw}\right] = 1.
\end{equation}
We rewrite the left side of \eqref{expectationForItoDiffusion} as a sum of the following two components
\begin{align*}
I_1 &:= \mathbb{E}_s\left[\frac{h(S_{\tau^M_{l, u}})}{h(s)} e^{-\int_0^{\tau^M_{l, u}}  \omega(S_w)dw}; \tau_{l,u}>M\right], \\
I_2 &:= \mathbb{E}_s\left[\frac{h(S_{\tau^M_{l, u}})}{h(s)} e^{-\int_0^{\tau^M_{l, u}} \omega(S_w)dw}; \tau_{l,u}\leq M\right].
\end{align*}
We prove now that $\lim_{M\rightarrow\infty} I_1 = 0$ and $\lim_{M\rightarrow\infty} I_2 \in(0, +\infty)$.
Let us define the last time when value function \eqref{valueFunctionOmegaAmericanPutOption2} is positive by
\begin{equation*}
\tau_{\text{last}}(K):=\sup\{t\geq 0: S_t\leq K\}.
\end{equation*}
It easy to notice that $\mathbb{P}(\tau^M_{l,u}\leq \tau_{\text{last}}(K)) = 1$.
Then, from the boundedness of $h$, lower boundedness of $\omega$ and Cauchy–Schwarz inequality we obtain
\begin{align*}
I_1 &\leq \frac{C}{h(s)}\mathbb{E}\left[e^{-\ubar{\omega}\tau_{\text{last}}(K)}; \tau_{l,u}>M\right] \\
&= \frac{C}{h(s)}\mathbb{E}\left[e^{-\ubar{\omega}\tau_{\text{last}}(K)} \mathds{1}_{\tau_{l,u}>M}\right] \\
&\leq \frac{C}{h(s)} \sqrt{\mathbb{E}\left[e^{-2\ubar{\omega}\tau_{\text{last}}(K)}\right] \mathbb{P}(\tau_{l,u}>M)},
\end{align*}
where $\ubar{\omega} := \min_{s\in\mathbb{R}^{+}}\omega(s)$.
Note that $\sqrt{\mathbb{E}\left[e^{-2\ubar{\omega}\tau_{\text{last}}(K)}\right]}<\infty$ by \cite[Thm. 2]{eriklast}
because $\mathbb{E}e^{-2\ubar{\omega} B_t}<\infty$ for any $t\geq 0$.
Thus $\lim_{M\rightarrow\infty} I_1 = 0$.
Moreover,
\begin{align*}
0 < I_2 \leq \frac{C}{h(s)}\mathbb{E}\left[e^{-\ubar{\omega}\tau_{\text{last}}(K)}; \tau_{l,u}<M\right].
\end{align*}
Hence by \eqref{expectationForItoDiffusion} and the dominated convergence we get \eqref{tobeproved} as long as $h$ is positive and bounded.
Finally, since $S_{\tau_{l, u}}$ equals either to $l$ or $u$, the boundedness assumption could be skipped.
This completes the proof.
\end{proof}

\begin{proof}[\textbf{Proof of Theorem \ref{mainTheoremBSModelExpojump0}}]
From Theorem \ref{mainTheoremAmericanOptionConvexity} and Remark \ref{pierwszauwaga}
it follows that the optimal exercise time is the first entrance to the
interval $[l^*, u^*]$ and by Theorem \ref{lemmaAboutOptimalStoppingRule} the value function $V^{\omega}_{\text{\rm A}^{\text{\rm Put}}}(s)$ is equal to the maximum over $l$ and $u$ of
$v^{\omega}_{\text{\rm A}^{\text{\rm Put}}}(s, l, u)$ defined in \eqref{valueFunctionOmegaAmericanPutOption2}.
We recall the observation that if the discount function
$\omega$ is nonnegative, then it is never
optimal to wait to exercise option for small asset prices, that is, always $l^*=0$ in this case
and the stopping region is one-sided.
We find now function $v^{\omega}_{\text{\rm A}^{\text{\rm Put}}}(s, l, u)$ in the case of (i) and (ii).

If $\sigma=0$, by the lack of memory of exponential random variable, using similar analysis like in the proof of Theorem \ref{mainTheoremGeometricLevy}, we have
\begin{align*}v^{\omega}_{\text{\rm A}^{\text{\rm Put}}}(s, 0, u)&=\mathbb{E}(K-e^{\log u- Y})^+ \mathbb{E}_{s}\left[e^{-\int_0^{\tau_{u}^-}\omega(S_w)dw}; \tau^{-}_{u}<\infty\right]\\
&=
\mathbb{E}(K-e^{\log u-Y})^+\mathbb{E}\left[e^{-\int_0^{\sigma_{0}^-}\eta_u(X_w)dw}; \sigma^{-}_{0}<\infty\mid X_0=x-\log u\right]\\
&=
\mathbb{E}(K-e^{\log u-Y})^+\left(\mathcal{Z}^{(\eta_u)}(x-\log u)- c_{\mathcal{Z}^{(\eta_u)}/\mathcal{W}^{(\eta_u)}} \mathcal{W}^{(\eta_u)}(x-\log u)\right)\\
&=
\mathbb{E}(K-e^{\log u-Y})^+\left(\mathscr{Z}^{(\omega_u)}\left(\frac{s}{u}\right)- c_{\mathscr{Z}^{(\omega_u)}/\mathscr{W}^{(\omega_u)}} \mathscr{W}^{(\omega_u)}\left(\frac{s}{u}\right)\right).
\end{align*}
Observing that
\[\mathbb{E}(K-e^{\log u-Y})^+= K-\frac{u \varphi}{\varphi+1}\]
completes the proof of part (i).

If $\sigma>0$ then
\begin{align*}
v^{\omega}_{\text{\rm A}^{\text{\rm Put}}}(s, 0, u)
&=
\mathbb{E}(K-e^{\log u-Y})^+\mathbb{E}\left[e^{-\int_0^{\sigma_{0}^-}\eta_u(X_w)dw}; \sigma^{-}_{0}<\infty, X_{\sigma^{-}_{0}}<0\mid X_0=x-\log u\right]\\
&\qquad+(K-u)\mathbb{E}\left[e^{-\int_0^{\sigma_{0}^-}\eta_u(X_w)dw}; \sigma^{-}_{0}<\infty, X_{\sigma^{-}_{0}}=0\mid X_0=x-\log u\right].
\end{align*}
The first increment can be analysed like in the case of $\sigma=0$. The expression for the second component
follows from \eqref{formulaWithLimit2}.

Finally, the smooth fit condition follows straightforward from Theorem \ref{smoothTheorem}.
\end{proof}

\begin{proof}[\textbf{Proof of Theorem \ref{mainTheoremBSModelExpojump}}]
Assume first that $\sigma=0$. Then
\begin{equation}\label{W(x)}
W(x)= \Upsilon_1 e^{\gamma_1 x} + \Upsilon_2 e^{\gamma_2 x}
\end{equation}
with $\gamma_1=0$.
To produce ordinary differential equation for $\mathcal{W}^{(\xi)}(x)$ we start from equation
\eqref{scaleFunctionW}. Putting \eqref{W(x)} there gives
\begin{equation}\label{W}
\mathcal{W}^{(\xi)}(x) = \Upsilon_1 + \Upsilon_2 e^{\gamma_2 x} + \Upsilon_1\int_0^{x}\xi(y)\mathcal{W}^{(\xi)}(y)dy + \Upsilon_2 \int_0^{x}e^{\gamma_2 (x-y)}\xi(y)\mathcal{W}^{(\xi)}(y)dy.
\end{equation}
Taking the derivative of both sides gives
\begin{equation}\label{deriv_W}
{\mathcal{W}^{(\xi)}}'(x) = \Upsilon_2\gamma_2 e^{\gamma_2 x} + \Upsilon_1 \xi(x) \mathcal{W}^{(\xi)}(x) + \Upsilon_2\left(\xi(x)\mathcal{W}^{(\xi)}(x) + \gamma_2\int_0^{x} e^{\gamma_2 (x-y)} \xi(y)\mathcal{W}^{(\xi)}(y)dy\right).
\end{equation}
From \eqref{W} we have
\begin{equation*}
\int_0^{x} e^{\gamma_2 (x-y)}\xi(y)\mathcal{W}^{(\xi)}(y)dy = \frac{1}{\Upsilon_2} \left(\mathcal{W}^{(\xi)}(x) - \Upsilon_1 - \Upsilon_2 e^{\gamma_2 x} - \Upsilon_1\int_0^{x}\xi(y)\mathcal{W}^{(\xi)}(y)dy \right).
\end{equation*}
We put it into \eqref{deriv_W} and derive
%
\begin{equation*}
{\mathcal{W}^{(\xi)}}'(x) = ((\Upsilon_1+\Upsilon_2)\xi(x)+\gamma_2)\mathcal{W}^{(\xi)}(x) - \gamma_2 \Upsilon_1 - \gamma_2 \Upsilon_1\int_0^{x}\xi(y)\mathcal{W}^{(\xi)}(y)dy.
\end{equation*}
We take the derivative of both sides again to get equation \eqref{scaleequation}.

From \eqref{scaleFunctionW}, \eqref{W(x)} and \eqref{deriv_W} we derive both initial conditions \eqref{fbW}.

Similar analysis can be done for the $\mathcal{Z}^{(\xi)}(x)$ scale function producing equation \eqref{scaleequation} and its initial conditions.
This completes the proof of the case (i).

In the case when $\sigma>0$ observe that
\begin{equation}\label{W(x)2}
W(x)= \Upsilon_1 e^{\gamma_1 x} + \Upsilon_2 e^{\gamma_2 x} + \Upsilon_3 e^{\gamma_3 x}
\end{equation}
with $\gamma_1 = 0$. Thus, from \eqref{scaleFunctionW} $\mathcal{W}^{(\xi)}(x)$ satisfies the following equation
\begin{equation*}
\mathcal{W}^{(\xi)}(x) = \Upsilon_1 + \Upsilon_2 e^{\gamma_2 x} + \Upsilon_3 e^{\gamma_3 x} + \int_0^{x}(\Upsilon_1 + \Upsilon_2 e^{\gamma_2 (x-y)} + \Upsilon_3 e^{\gamma_3 (x-y)})\xi(y)\mathcal{W}^{(\xi)}(y)dy.
\end{equation*}
We simplify it deriving
\begin{equation}\label{W2}
\begin{split}
\mathcal{W}^{(\xi)}(x) &= \Upsilon_1 + \Upsilon_2 e^{\gamma_2 x} + \Upsilon_3 e^{\gamma_3 x} + \Upsilon_1\int_0^{x}\xi(y)\mathcal{W}^{(\xi)}(y)dy \\ &+ \Upsilon_2\int_0^{x} e^{\gamma_2 (x-y)}\xi(y)\mathcal{W}^{(\xi)}(y)dy  + \Upsilon_3\int_0^{x} e^{\gamma_3 (x-y)}\xi(y)\mathcal{W}^{(\xi)}(y)dy.
\end{split}
\end{equation}
In the next step we take derivative of both sides to get
\begin{equation}\label{deriv_W2}
\begin{split}
{\mathcal{W}^{(\xi)}}'(x) &= \Upsilon_2\gamma_2 e^{\gamma_2 x} + \Upsilon_3\gamma_3 e^{\gamma_3 x}+ \Upsilon_1 \xi(x) \mathcal{W}^{(\xi)}(x) + \Upsilon_2\left(\xi(x)\mathcal{W}^{(\xi)}(x) + \gamma_2\int_0^{x} e^{\gamma_2 (x-y)} \xi(y)\mathcal{W}^{(\xi)}(y)dy\right) \\ +
&\Upsilon_3\left(\xi(x)\mathcal{W}^{(\xi)}(x) + \gamma_3\int_0^{x} e^{\gamma_3 (x-y)} \xi(y)\mathcal{W}^{(\xi)}(y)dy\right).
\end{split}
\end{equation}
From \eqref{W2} we have
\begin{equation*}
\begin{split}
\int_0^{x} e^{\gamma_3 (x-y)}\xi(y)\mathcal{W}^{(\xi)}(y)dy = \frac{1}{\Upsilon_3} \left(\mathcal{W}^{(\xi)}(x) - \Upsilon_1 - \Upsilon_2 e^{\gamma_2 x} - \Upsilon_3 e^{\gamma_3 x} \right. \\ \left. - \Upsilon_1\int_0^{x}\xi(y)\mathcal{W}^{(\xi)}(y)dy - \Upsilon_2 \int_0^{x} e^{\gamma_2 (x-y)}\xi(y)\mathcal{W}^{(\xi)}(y)dy \right).
\end{split}
\end{equation*}
We put it into \eqref{deriv_W2} deriving
\begin{equation}\label{deriv_simpl_W2}
\begin{split}
{\mathcal{W}^{(\xi)}}'(x) &= \Upsilon_2(\gamma_2-\gamma_3) e^{\gamma_2 x} + (\Upsilon_1 + \Upsilon_2 + \Upsilon_3)\xi(x)\mathcal{W}^{(\xi)}(x) \\ &+ \Upsilon_2 (\gamma_2-\gamma_3) \int_0^{x} e^{\gamma_2 (x-y)} \xi(y)\mathcal{W}^{(\xi)}(y)dy + \gamma_3\mathcal{W}^{(\xi)}(x) - \gamma_3 \Upsilon_1 \\&- \gamma_3 \Upsilon_1 \int_0^{x}\xi(y)\mathcal{W}^{(\xi)}(y)dy.
\end{split}
\end{equation}

Taking again derivative of both sides produces
\begin{equation}\label{deriv2_simpl_W2}
\begin{split}
{\mathcal{W}^{(\xi)}}''(x) &= \Upsilon_2(\gamma_2-\gamma_3)\gamma_2 e^{\gamma_2 x} + (\Upsilon_1 + \Upsilon_2 + \Upsilon_3) (\xi'(x)\mathcal{W}^{(\xi)}(x) + \xi(x){\mathcal{W}^{(\xi)}}'(x)) \\ &+ \Upsilon_2(\gamma_2-\gamma_3) \left(\xi(x)\mathcal{W}^{(\xi)}(x) + \gamma_2\int_0^{x} e^{\gamma_2 (x-y)} \xi(y)\mathcal{W}^{(\xi)}(y)dy\right) + \gamma_3{\mathcal{W}^{(\xi)}}'(x) \\ &- \gamma_3 \Upsilon_1 \xi(x)\mathcal{W}^{(\xi)}(x).
\end{split}
\end{equation}
From \eqref{deriv_simpl_W2} we have
\begin{equation*}
\begin{split}
\int_0^{x} e^{\gamma_2 (x-y)} \xi(y)\mathcal{W}^{(\xi)}(y)dy &= \frac{1}{\Upsilon_2 (\gamma_2-\gamma_3)} \left({\mathcal{W}^{(\xi)}}'(x) - \Upsilon_2(\gamma_2-\gamma_3) e^{\gamma_2 x} \right. \\&- \left. (\Upsilon_1 + \Upsilon_2 + \Upsilon_3)\xi(x)\mathcal{W}^{(\xi)}(x)- \gamma_3\mathcal{W}^{(\xi)}(x) + \gamma_3 \Upsilon_1 \right. \\& \left. + \gamma_3 \Upsilon_1 \int_0^{x}\xi(y)\mathcal{W}^{(\xi)}(y)dy\right).
\end{split}
\end{equation*}
We put it into \eqref{deriv2_simpl_W2} to get
\begin{equation*}
\begin{split}
{\mathcal{W}^{(\xi)}}''(x) &= (\Upsilon_1 + \Upsilon_2 + \Upsilon_3) (\xi'(x)\mathcal{W}^{(\xi)}(x) + \xi(x){\mathcal{W}^{(\xi)}}'(x)) \\ &+ \Upsilon_2(\gamma_2-\gamma_3) \xi(x)\mathcal{W}^{(\xi)}(x) \\ &+ \gamma_2\left({\mathcal{W}^{(\xi)}}'(x) - (\Upsilon_1+\Upsilon_2+\Upsilon_3) \xi(x) \mathcal{W}^{(\xi)}(x) - \gamma_3 \mathcal{W}^{(\xi)}(x) \right. \\ &\left. + \gamma_3 \Upsilon_1 + \gamma_3 \Upsilon_1 \int_0^{x}\xi(y)\mathcal{W}^{(\xi)}(y)dy \right) \\ &+ \gamma_3{\mathcal{W}^{(\xi)}}'(x) - \gamma_3 \Upsilon_1 \xi(x)\mathcal{W}^{(\xi)}(x).
\end{split}
\end{equation*}
Taking again derivative and simplifying gives
%
\begin{equation*}
\begin{split}
{\mathcal{W}^{(\xi)}}'''(x) &= \left((\Upsilon_1+\Upsilon_2+\Upsilon_3)\xi(x)+\gamma_2+\gamma_3\right){\mathcal{W}^{(\xi)}}''(x) \\&+ \left(2(\Upsilon_1+\Upsilon_2+\Upsilon_3)\xi'(x)+\Upsilon_2(\gamma_2-\gamma_3)\xi(x)-(\Upsilon_1+\Upsilon_2+\Upsilon_3)\gamma_2\xi(x)-\gamma_2\gamma_3-\gamma_3\Upsilon_1\xi(x)\right){\mathcal{W}^{(\xi)}}'(x) \\&+ \left((\Upsilon_1+\Upsilon_2+\Upsilon_3)\xi''(x)+\Upsilon_2(\gamma_2-\gamma_3)\xi'(x)-\gamma_2(\Upsilon_1+\Upsilon_2+\Upsilon_3)\xi'(x)+\gamma_2\gamma_3 \Upsilon_1 \xi(x)-\gamma_3 \Upsilon_1\xi'(x)\right){\mathcal{W}^{(\xi)}}(x).
\end{split}
\end{equation*}
Ultimately, taking into account \eqref{scaleFunctionW} and knowing that $W(x)=0$ we conclude that $\Upsilon_1+\Upsilon_2+\Upsilon_3 = 0$. Hence, we obtain the equation that we wanted to prove.

From \eqref{W(x)2} and \eqref{scaleFunctionW} we have
\begin{equation*}
{\mathcal{W}^{(\xi)}}(0) = \Upsilon_1 + \Upsilon_2 + \Upsilon_3 = 0.
\end{equation*}
From \eqref{deriv_simpl_W2} it follows that
\begin{equation*}
{\mathcal{W}^{(\xi)}}'(0) = \Upsilon_2\gamma_2 + \Upsilon_3\gamma_3 + (\Upsilon_1 + \Upsilon_2 + \Upsilon_3)^2 \xi(0) = \Upsilon_2\gamma_2 + \Upsilon_3\gamma_3.
\end{equation*}
Finally, from \eqref{deriv2_simpl_W2} we have
\begin{equation*}
\begin{split}
{\mathcal{W}^{(\xi)}}''(0) &= \Upsilon_2 \gamma_2 (\gamma_2-\gamma_3) + (\Upsilon_1 + \Upsilon_2 + \Upsilon_3) (\xi'(0) {\mathcal{W}^{(\xi)}}(0) + \xi(0) {\mathcal{W}^{(\xi)}}'(0)) \\ &+ \Upsilon_2(\gamma_2-\gamma_3) \xi(0) {\mathcal{W}^{(\xi)}}(0) + \gamma_3 {\mathcal{W}^{(\xi)}}'(0) - \gamma_3 \Upsilon_1 \xi(0) {\mathcal{W}^{(\xi)}}(0) \\
&= \Upsilon_2{\gamma_2}^2 + \Upsilon_3{\gamma_3}^2.
\end{split}
\end{equation*}
The analysis for $\mathcal{Z}^{(\xi)}(x)$ can be done in the same way.
This completes the proof.
\end{proof}

\section{Appendix}

\begin{proof}[\textbf{Proof of Lemma \ref{TheoremAboutExistenceOfValueFunctionDiffusionWithJumps}}]

Firstly, we define the function $f:\mathbb{R}^{+}\rightarrow\mathbb{R}$ of the form
\begin{equation*}
f(s) =
\begin{cases}
-\frac{1}{s}, \quad &s\in(0, 1],\\
s, \quad &s\in[2, +\infty)
\end{cases}
\end{equation*}
such that $f(s)\in C^{2}(\mathbb{R}^{+})$ and $f'(s)>0$ for all $s\in\mathbb{R}^{+}$.

Taking $Y_t = f(S_t)$ and applying It\^{o}'s lemma on \eqref{diffusionWithJumps}, we obtain
\begin{equation*}
dY_t = \tilde{\mu}(Y_{t-}, t) dt + \tilde{\sigma}(Y_{t-}, t)dB_t + \int_{\mathbb{R}} \tilde{\gamma}(Y_{t-}, t, z)\tilde{v}(dt, dz),
\end{equation*}

where
\begin{align*}
  \phantom{i + j + k}
  &\begin{aligned}
    &\mathllap{\tilde{\mu}(y, t)} = \mu(f^{-1}(y), t) f'(f^{-1}(y)) + \frac{\sigma^2(f^{-1}(y), t)}{2} f''(f^{-1}(y))\\
      &\qquad + \int_{\mathbb{R}} \left(\tilde{\gamma}(y, t, z) - f'(f^{-1}(y))\gamma(f^{-1}(y), t, z)\right) m(dz),
  \end{aligned}\\
  &\begin{aligned}
    &\mathllap{\tilde{\sigma}(y, t)} = f'(f^{-1}(y))\sigma(f^{-1}(y), t),
  \end{aligned}\\
  &\begin{aligned}
    &\mathllap{\tilde{\gamma}(y, t, z)} = f(f^{-1}(y) + \gamma(f^{-1}(y), t, z)) - y.
  \end{aligned}
\end{align*}
We define also the function
\begin{equation*}
\tilde{\omega}(y) := \omega(f^{-1}(y))
\end{equation*}
and
\begin{equation*}
\tilde{g}(y) := g(f^{-1}(y)).
\end{equation*}
We can now verify that the functions $\tilde{\mu}(y, t)$, $\tilde{\sigma}(y, t)$, $\tilde{\gamma}(y, t, z)$ and $\tilde{g}(y)$ satisfy conditions $(2.2)-(2.5)$ from \cite[Section 2]{pham98}.
Let
\begin{equation*}
v(y, t): = V^{\omega}_{\text{\rm E}}(f^{-1}(y), t).
\end{equation*}
From \cite[Theorem 3.1]{pham98} it follows that $v(y,t)$ is a viscosity solution to
\begin{equation}\label{CauchyProblem2}
  \begin{cases}
    \tilde{\mathcal{L}} v(y, t) = \tilde{f}(y,t), \quad & (y, t)\in\mathbb{R}\times[0,T), \\
	v(y, T) = \tilde{g}(y), \quad & y\in\mathbb{R},
  \end{cases}
\end{equation}
where
\begin{equation*}
\tilde{\mathcal{L}} v(y, t) = - \frac{\partial v(y, t)}{\partial t} - \frac{\tilde{\sigma}^2(y, t)}{2} \frac{\partial^2 v(y,t)}{\partial y^2} - \hat{\mu}(y,t) \frac{\partial v(y, t)}{\partial y} + \tilde{\omega}(y)v(y,t)
\end{equation*}
with \begin{equation*}
\hat{\mu}(y,t) = \tilde{\mu}(y,t) - \int_{\mathbb{R}}\tilde{\gamma}(y,t,z)m(dz)
\end{equation*}
and
\begin{equation*}
\tilde{f}(y,t) = - \int_{\mathbb{R}}\left(v(y+\tilde{\gamma}(y,t,z), t) - v(y,t)\right)m(dz).
\end{equation*}

In addition, using \cite[Prop. 3.3]{pham98} yields that $v(y,t)\in C(\mathbb{R}\times[0,T])$ and it satisfies
\begin{equation}\label{inequalityOnV}
|v(y_2,t_2) - v(y_1,t_1)|\leq C((1 + |y_2|)|t_2 - t_1|^{\frac{1}{2}} + |y_2 - y_1|)
\end{equation}
for some $C>0$ and for all $t_1, t_2 \in[0,T]$ and $y_1, y_2\in \mathbb{R}$.
Based on \eqref{inequalityOnV} and assumptions put on $\gamma$ we can conclude that $\tilde{f}(y,t)\in C_{\alpha}(\mathbb{R}\times[0,T])\cap C_{\text{pol}}(\mathbb{R}\times[0,T])$.
Then applying \cite[Thm. A.14]{tysk} give us the existence of a unique classical solution $w(y,t)$ to \eqref{CauchyProblem2} such that $w(y,t)\in C^{2,1}(\mathbb{R}\times [0, T)) \cap C_{\text{pol}}(\mathbb{R}\times[0,T])$.
In view of the fact that $w(y,t)$ is continuous, we can observe that $\tilde{f}(y,t)$ is Lipschitz continuous in $y$, uniformly in $t$.
Hence from \cite[Lem. 3.1]{pham98} we know that that $w(y,t)$ is also Lipschitz continuous in $y$, uniformly in $t$.
From the uniqueness result given in \cite[Thm. 4.1]{pham98} we can deduce that $v(y,t) = w(y,t)$.
Applying \cite[Thm. A.18.]{tysk} we find that $v(y,t)\in C^{4,1}_{\alpha}(\mathbb{R}\times[0,T])$.
Changing back to the original coordinates, it follows that $V^{\omega}_{\text{\rm E}}(s,t)\in C^{4,1}_{\alpha}(\mathbb{R}^{+}\times[0,T])\cap C_{\text{pol}}(\mathbb{R}^{+}\times[0,T])$ and it satisfies \eqref{CauchyProblem}.
\end{proof}

\begin{proof}[\textbf{Proof of Lemma \ref{LemmaAboutBoundednessOfSecondDerivativeDiffusionWithJumps}}]

The proof follows in the same way as the proof of Lemma \ref{TheoremAboutExistenceOfValueFunctionDiffusionWithJumps}. However, this time we apply \cite[Thm. A.20]{tysk} which guarantees the existence of a unique classical solution $w(y,t)$ to \eqref{CauchyProblem2} satisfying $w(y, t) \in C^{2,1}_{\text{pol}}(\mathbb{R}\times[0,T])$. Hence, coming back to the original coordinates, we have that $V^{\omega}_{\text{\rm E}}(s, t) \in C^{2,1}_{\text{pol}}(\mathbb{R}^{+}\times[0,T])$. Therefore, there exist constants $n>0$ and $K>0$ such that
\begin{equation*}
\left|\frac{\partial^2 V^{\omega}_{\text{\rm E}}(s, t)}{\partial s^2}\right| \leq K(s^{-n}+s^{n})
\end{equation*}
for all $(s, t)\in\mathbb{R}^{+}\times[0,T]$. This completes the proof.
\end{proof}

\begin{proof}[\textbf{Proof of Lemma \ref{TheoremAboutConvexityOfBermudan}}]
By the dynamic programming principle formulated e.g. in \cite{propertiesOfAmericanOption}, the value function $V^{\omega}_{\text{B}_\Xi}(s, t)$ satsifies
\begin{enumerate}
\item At time $t=T$, the value function $V^{\omega}_{\text{B}_\Xi}(s, t)$ is equal to $g(s)$.
\item Given the price $V^{\omega}_{\text{B}_\Xi}(s, t_n)$ at the time
$t_n = \frac{n}{2^\Xi}T$, the price at time $t_{n-1} = \frac{n-1}{2^\Xi}T $ is $V^{\omega}_{\text{B}_\Xi}(s, t_{n-1}) = \max\{\mathbb{E}_{s, t_{n-1}}
[e^{-\int_{t_{n-1}}^{t_n} \omega(S_w)dw} V^{\omega}_{\text{B}_\Xi}(S_{t_n}, t_n)], g(s)\}$.
\end{enumerate}

Thus, the price $V^{\omega}_{\text{B}_\Xi}(s, t_{n-1})$ of a Bermudan option at $t=t_{n-1}$ can be calculated inductively as the maximum of the payoff function $g$ and the price of a European option with expiry $t_n$ and payoff function $V^{\omega}_{\text{B}_\Xi}(S_{t_n}, t_n)$. From Theorem \ref{TheoremAboutConvexityOfValueFunctionDiffusionWithJumps} we know that the value function of European option is convex in $s$ provided the payoff function is convex, and since the maximum of two convex functions is again a convex function, we conclude that the Bermudan option price $V^{\omega}_{\text{B}_\Xi}(s, t)$
is convex in $s$ for all $t\in[0,T]$.

\end{proof}

\section{Concluding remarks}

In this paper, we have identified the value function in the optimal stopping problem with asset-dependent
discounting. We have also shown that under certain conditions we are able to obtain a closed form of the value function.
We have provided some specific examples as well.

It is tempting to analyse other discount functions. For example $\omega$ might ba a random function
or just simply a random variable dependent on the asset process $S_t$. One can take other processes as a discount rate
where the dependence is introduced not only via correlation between gaussian components but via common jump structure.
This jump-dependence is crucial since crashes in the market affect large portion of business at the same time; see e.g. \cite{reno}.

One can take a Poisson version of American options where exercise is possible only at independent Poisson epochs as well.
First attempt for classical perpetual American options has been already made in \cite{meandothers}.
We believe that present analysis can be generalised to this set-up.

Obviously, it would be good to work out details for different payoff functions, hence for various options.
One could think of barrier options, Russian, Israeli or Swing options.
What is maybe even more interesting for the future analysis is taking into account Markov switching markets
and using omega scale matrices introduced in \cite{meandothers2}. We expect that in this setting the optimal exercise time is also the first entrance time to the interval which ends depend on the governing Markov chain.


\end{document}